\newcounter{lemma}
\newcounter{theorem}
\patchcmd{\@maketitle}{\LARGE \@title}{\fontsize{16}{19.2}\selectfont\@title}{}{}
\newcommand{\keywords}[1]{\textbf{Keywords}:#1}
\newsavebox\affbox
\author[1]{\textbf{Mingzhe Yang}}
\author[1]{\textbf{Zhipeng Wang}}
\author[1]{\textbf{Kaiwei Liu}}
\author[3]{\textbf{Yingqi Rong}}
\author[2]{\textbf{Bing Yuan}}
\author[1,2*]{\textbf{Jiang Zhang}}
\affil[1]{ School of Systems Science, Beijing Normal University, 100875, Beijing, China}
\affil[2]{Swarma Research, 102300, Beijing, China
}
\affil[3]{ Johns Hopkins University, 21218, Baltimore, MD, USA
}
\titlespacing\section{0pt}{12pt plus 4pt minus 2pt}{0pt plus 2pt minus 2pt}
\titlespacing\subsection{12pt}{12pt plus 4pt minus 2pt}{0pt plus 2pt minus 2pt}
\titlespacing\subsubsection{12pt}{12pt plus 4pt minus 2pt}{0pt plus 2pt minus 2pt}
\titleformat{\section}{\normalfont\fontsize{10}{15}\bfseries}{\thesection.}{1em}{}
\titleformat{\subsection}{\normalfont\fontsize{10}{15}\bfseries}{\thesubsection.}{1em}{}
\titleformat{\subsubsection}{\normalfont\fontsize{10}{15}\bfseries}{\thesubsubsection.}{1em}{}
\titleformat{\author}{\normalfont\fontsize{10}{15}\bfseries}{\thesection}{1em}{}
\title{\textbf{\huge Finding emergence in data by maximizing effective information}\\
 }
\date{}    
\begin{document}

\pagestyle{headings}	
\newpage
\setcounter{page}{1}
\renewcommand{\thepage}{\arabic{page}}

\captionsetup[figure]{labelfont={sf},labelformat={default},labelsep=period,name={Figure }}	\captionsetup[table]{labelfont={bf},labelformat={default},labelsep=period,name={Table }}
\setlength{\parskip}{0.5em}
	
\maketitle

\abstract{Quantifying emergence and modeling emergent dynamics in a data-driven manner for complex dynamical systems is challenging due to the lack of direct observations at the micro-level. Thus, it's crucial to develop a framework to identify emergent phenomena and capture emergent dynamics at the macro-level using available data. Inspired by the theory of causal emergence (CE), this paper introduces a machine learning framework to learn macro-dynamics in an emergent latent space and quantify the degree of CE. The framework maximizes effective information, resulting in a macro-dynamics model with enhanced causal effects. Experimental results on simulated and real data demonstrate the effectiveness of the proposed framework. It quantifies degrees of CE effectively under various conditions and reveals distinct influences of different noise types. It can learn a one-dimensional coarse-grained macro-state from fMRI data, to represent complex neural activities during movie clip viewing. Furthermore, improved generalization to different test environments is observed across all simulation data.}


\keywords{causal emergence; dynamics learning; effective information; coarse-graining; invertible neural network}

\section{Introduction}
The climate system, ecosystems, bird flocks, ant colonies, cells, brains, and many other complex systems are composed of numerous interacting elements and exhibit a wide range of nonlinear dynamical behaviors~\cite{sayama2015introduction,odell2002agents}. In the past few decades, the research topic of data-driven modeling complex systems has gained significant attention, driven by the increasing availability and accumulation of data from real dynamical systems~\cite{Wang_Lai_Grebogi_2016,kipf2018neural,Casadiego_Nitzan_Hallerberg_Timme_2017,zhang2022universal}. However, complex systems always exhibit emergent behaviors~\cite{holland2000emergence,sayama2015introduction}. That means some interesting emergent patterns or dynamical behaviors such as waves~\cite{ben2023turing}, periodic oscillations~\cite{Matthews_Strogatz_2002} and solitons~\cite{du2023emergent} can hardly be directly observed and identified from the micro-level behavioral data. Therefore, the identification and measure of emergence and the capture of emergent dynamical patterns solely from observational raw data have become crucial challenges in complex systems research~\cite{Rosas_Mediano_Jensen_Seth_Barrett_Carhart-Harris_Bor_2020,zhang2022neural,O’toole_Nallur_Clarke_2017,kemeth2022learning}. But in order to address these problems, it is necessary to first develop a quantitative understanding of emergence.

Emergence, as a distinctive feature of complex systems~\cite{holland2000emergence}, has historically been challenging to quantify and describe in quantitative terms~\cite{Keijzer_2008,Shalizi_Moore_2003,seth2008measuring,haugen2023detecting}. Most conventional measures or methods either rely on predefined macro-variables (e.g., \cite{fisch2010quantitative,mnif2011quantitative,fisch2011divergence}) or are tailored to specific scenarios in engineered systems (e.g. \cite{seth2008measuring,
osmundson2008kr14,raman2022framework,teo2013formalization}). However, there is a need for a unified method to quantify emergence across different contexts. The theory of causal emergence (CE), introduced by Erik Hoel in 2013, offers a framework to tackle this challenge~\cite{hoel2013quantifying,hoel2017map}. Hoel's theory aims to understand emergence through the lens of causality. The connection between emergence and causality is implied in the descriptive definition of emergence, as stated in the work by J. Fromm\cite{Fromm_2005}. According to this definition, a macro-level property, such as patterns or dynamical behaviors, is considered emergent if it cannot be explained or directly attributed to the individuals in the system. The theory of causal emergence formalizes this concept within the framework of discrete Markov dynamical systems. As shown in Figure \ref{fig:causalemergence}(a), Hoel et al's theory states that if a system exhibits stronger causal effects after a specific coarse-graining transformation compared to the original system, then CE has occurred. In Hoel's framework, the degree of causal effect in a dynamical system is quantified using effective information (EI) \cite{tononi2003measuring}. EI can be understood as an intervention-based version of mutual information between two successive states in a dynamical system over time. It is a measure that solely depends on the system's dynamics. If a dynamical system is more deterministic and non-degenerate, meaning that the temporal adjacent states can be inferred from each other in both directions of the time arrow, then it will have a larger EI \cite{hoel2013quantifying}. This measure has been shown to be compatible with other well-known measures of causal effect~\cite{comolatti2022causal}.  Figure \ref{fig:causalemergence}(b) gives an example of CE for simple Markov chain.

\begin{figure}
    \centering
    \includegraphics[width=\textwidth]{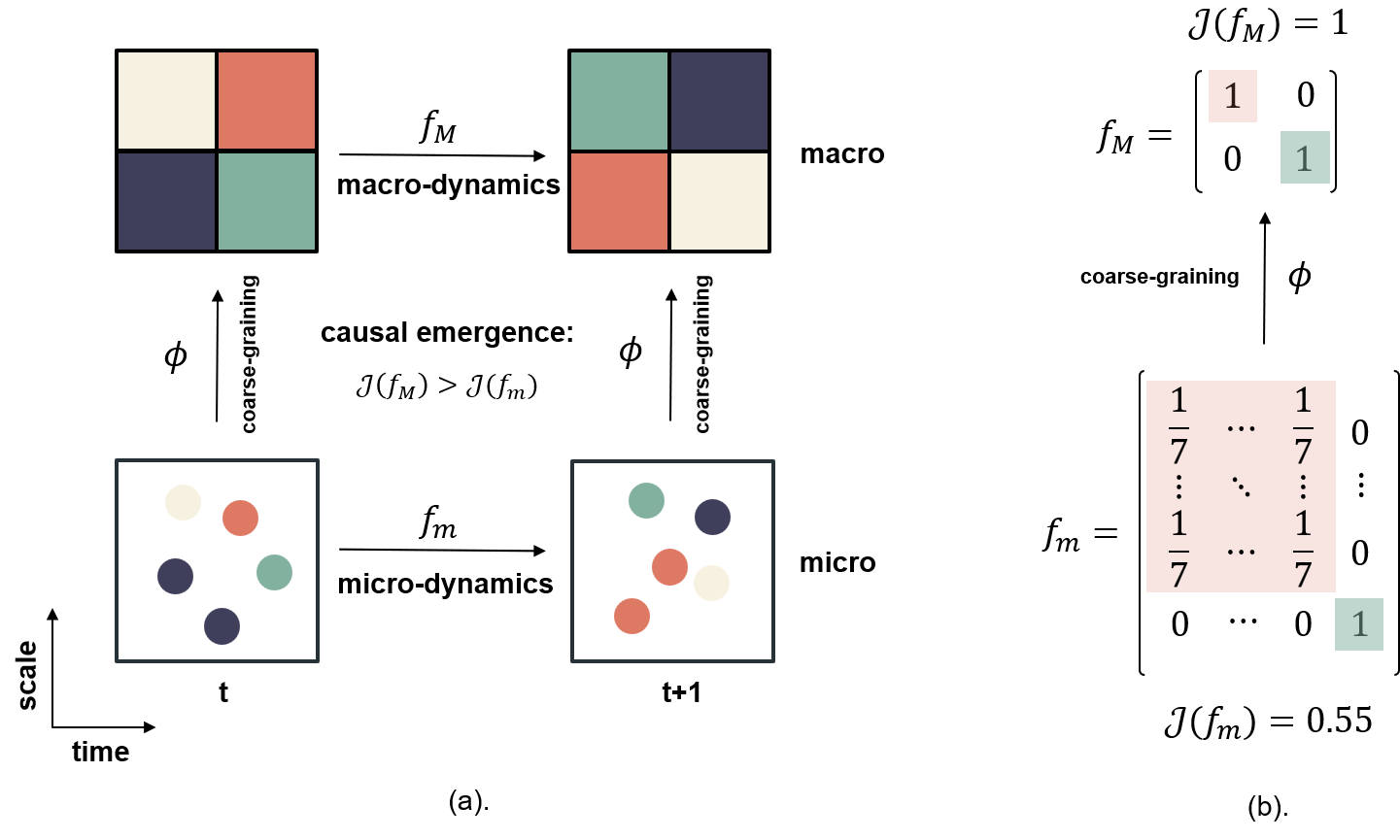}
    \caption{(a) An illustration of the fundamental concept of Erik Hoel's theory of causal emergence(CE). The effective information (EI) is denoted as $\mathcal{J}$ in this paper. (b) A case demonstrating CE in a discrete Markov chain. The micro-dynamics consist of eight micro-states. During the coarse-graining process, the first seven states are grouped together as one macro-state, while the eighth micro-state corresponds to the second macro-state. As a result, a transition probability matrix is formed at the macro-scale, where the effective information $\mathcal{J}(f_M)=1$ (calculated using Equation \ref{eq:definition_EI}), which is greater than $\mathcal{J}(f_m)=0.55$. This difference, $\Delta\mathcal{J}=0.45$, indicates the occurrence of CE, as $\Delta\mathcal{J}>0$. }
    \label{fig:causalemergence}
\end{figure}

While the causal emergence theory has successfully quantified emergence using EI and has found applications in various fields~\cite{Griebenow_Klein_Hoel_2019,Hoel_Levin_2020,Marrow_Michaud_Hoel_2020}, there are some drawbacks. Firstly, the markov transition matrix of the micro-dynamics should be given rather than constructed from data. Secondly, a predefined coarse-graining strategy must be provided or optimized through maximizing EI, but this optimization process is computationally complex\cite{klein2020emergence, Griebenow_Klein_Hoel_2019}. Although Rosas et al.\cite{Rosas_Mediano_Jensen_Seth_Barrett_Carhart-Harris_Bor_2020} proposed a new framework for CE based on partial information decomposition theory~\cite{williams2010nonnegative}, which does not require a predefined coarse-graining strategy, it still involves iterating through all variable combinations on the information lattice to compare synergistic information, resulting in significant computational complexity. Rosas et al. also proposed an approximate method to mitigate the complexity~\cite{Rosas_Mediano_Jensen_Seth_Barrett_Carhart-Harris_Bor_2020}, but it requires a pre-defined macro-variable.

Therefore, the challenge of finding emergence in data, which involves determining whether CE has occurred within a system and to what extent based solely on observational data of its behavior, remains unresolved. The most daunting task is that all the elements, including the markov dynamics at both the micro- and macro-levels, as well as the coarse-graining strategy to obtain macro-variables, need to be learned from raw data and cannot be predefined in advance~\cite{zhang2022neural}. Once the learning process is completed, we can compare the strength of causal effects (measured by EI) in dynamics at different scales to identify CE from the data. Therefore, the problem of finding CE within Hoel's theoretical framework is essentially equivalent to the challenge of data-driven modeling within a coarse-grained space for complex systems~\cite{zhang2022neural}. Building models for complex systems at multiple coarse-grained levels within learned emergent spaces is of utmost importance for both identifying CE and conducting data-driven modeling in complex systems.


Recently, several machine learning frameworks have emerged for learning and simulating the dynamics of complex systems within coarse-grained latent or hidden spaces~\cite{vlachas2022multiscale,kemeth2022learning,floryan2022data,Cai_Ji_2020,Chen_Li_Yang_Li_Liu_2022}. By learning the dynamics in a hidden space, these frameworks enable the removal of noise and irrelevant factors from raw data, enhancing prediction capabilities and adaptability to diverse environments. While these learning systems can capture emergent dynamics, they may not directly address the fundamental nature of CE, which entails stronger causality. According to Judea Pearl's hierarchy of causality, prediction-based learning is situated at the level of association and cannot address the challenges related to intervention and counterfactuals~\cite{Goldberg_2019}. Empirically, dynamics learned solely based on predictions may be influenced by the distributions of the input data, which can be limited by data diversity and the problem of over fitting models~\cite{Shen_Liu_He_Zhang_Xu_Yu_Cui_2021}. However, what we truly desire is an invariant causal mechanism or dynamics that are independent of the input data. This allows the learned mechanism or dynamics to be adaptable to broader domains, generalizable to areas beyond the distribution of training data, and capable of accommodating diverse interventions~\cite{scholkopf2022statistical,Janzing_Peters_Sgouritsa_Zhang_Mooij_lkopf_2012,peters2016causal}. Unfortunately, few studies have explored the integration of causality and latent space dynamics to address the challenges of data-driven modeling in complex systems~\cite{zhang2020invariant}.

Inspired by the theory of causal emergence, this paper aims to address the challenge of learning causal mechanisms within a learned coarse-grained macro-level(latent) space. The approach involves maximizing the EI of the emergent macro-level dynamics, which is equivalent to maximizing the degree of causal effect in the learned coarse-grained dynamics~\cite{comolatti2022causal}. By artificial intervening the distributions of input data as uniform distribution, this maximization process helps separating the invariant causal mechanisms from the variant data distribution as much as possible. To achieve this, a novel machine learning framework called Neural Information Squeezer Plus (NIS+) is proposed. NIS+ extends the previous framework (NIS) to solve the problem of maximizing EI under coarse-grained representations. As shown in Figure \ref{fig:nis+}, NIS+ not only learns emergent macro-dynamics and coarse-grained strategies but also quantifies the degree of CE from time series data and captures emergent patterns. Mathematical theorems ensure the flexibility of our framework in different application scenarios, such as cellular automata and multi-agent systems. Numerical experiments demonstrate the effectiveness of NIS+ in capturing emergent behaviors and finding CE in various simulations, including SIR dynamics\cite{kermack1927contribution}, bird flocking (Boids model)\cite{attanasi2014information}, and the Game of Life\cite{gardner1970fantastic}. Additionally, NIS+ is applied to find emergent features in real brain data from 830 subjects while watching the same movie. Experiments also demonstrate that the learned dynamical model exhibits improved performance in generalization compared to other models.

\begin{figure}[!ht]
	\centering
	\includegraphics[width=\textwidth]{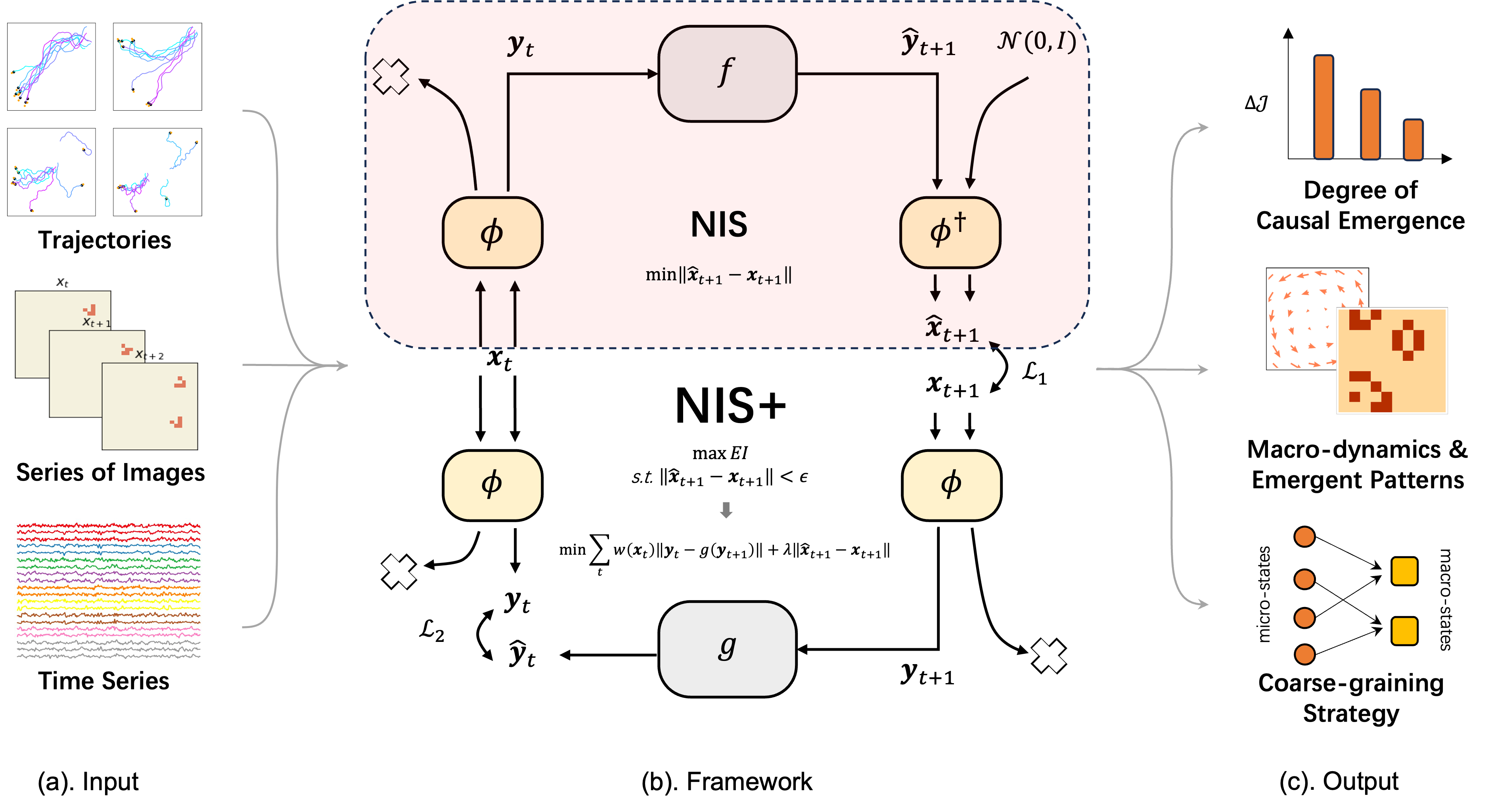}
	\caption{The workflow and architecture of our proposed framework, Neural Information Squeezer Plus (NIS+), for discovering causal emergence within data. (a) Various forms of input data from our studied simulation systems such as the Boid flocking model (multi-agent system), Conway's Game of Life (two-dimensional cellular automata), and real brain fMRI time series data. (b) The framework of our proposed model, NIS+, which incorporates our previous model, NIS. The boxes represent functions or neural networks, and the arrow pointing to a cross represents the operation of information discarding. $\boldsymbol{x}_t$ and $\boldsymbol{x}_{t+1}$ represent the observational data of micro-states, while $\hat{\boldsymbol{x}}_{t+1}$ represents the predicted micro-state. $\boldsymbol{y}_t=\phi(\boldsymbol{x}_{t})$ and $\boldsymbol{y}_{t+1}=\phi(\boldsymbol{x}_{t+1})$ represent the macro-states obtained by encoding the micro-states using the encoder. $\hat{\boldsymbol{y}}_t=\phi(\hat{\boldsymbol{x}}_{t})$ and $\hat{\boldsymbol{y}}_{t+1}=\phi(\hat{\boldsymbol{x}}_{t+1})$ represent the predicted macro-states obtained by encoding the predictions of micro-states. The mathematical problems that each framework aims to solve are also illustrated in the figure. (c) The various output forms of NIS+, which include the degree of CE, the learned macro-dynamics, captured emergent patterns, and the strategy of coarse-graining. }
	\label{fig:nis+}
\end{figure}

\section{Finding causal emergence in data}

Finding CE in time series data involves two sub-problems: \textbf{emergent dynamics learning} and \textbf{causal emergence quantification}. In the first problem, both the dynamics in micro- and macro-level and the coarse-graining strategy should be learned from data. After that, we can measure the degree of CE and judge if emergence occurs in data according to the learned micro- and macro-dynamics.

\subsection{Problem definition}

Suppose the behavioral data of a complex dynamical system is a time series $\{\boldsymbol{x}_t\}$ with time steps $t=1,2,...,T$ and dimension $p$, they form observable micro-states. And we assume that there are no unobserved variables. The problem of \textbf{emergent dynamics learning} is to find three functions according to the data: a coarse-graining strategy $\phi: \mathcal{R}^p\rightarrow \mathcal{R}^q$, where $q\leq p$ is the dimension of macro-states which is given as a hyper-parameter, a corresponding anti-coarsening strategy $\phi^{\dag}:\mathcal{R}^p\rightarrow \mathcal{R}^q$, and a macro-level markov dynamics $f_{q}$, such that the EI of macro-dynamics $f_q$ is maximized under the constraint that the predicted $\hat{\boldsymbol{x}}_{t+1}$ by $\phi$, $f_q$, and $\phi^{\dag}$ is closed to the real data of $\boldsymbol{x}_{t+1}$:

\begin{equation}
\begin{aligned}
\label{old optimization}
&\max_{\phi,f_q,\phi^+} \mathcal{J}(f_q),\\
&s.t. \begin{cases}
|| \hat{\boldsymbol{x}}_{t+1}-\boldsymbol{x}_{t+1} || < \epsilon,\\
\hat{\boldsymbol{x}}_{t+1}=\phi^{\dag}(f_q(\phi(\boldsymbol{x}_t))).
\end{cases}
\end{aligned}
\end{equation}
Where, $\mathcal{J}$ is the appropriate version of EI to be maximized. For continuous dynamical systems that mainly considered in this paper, $\mathcal{J}$ is the dimension averaged effective information (dEI)\cite{zhang2022neural}(see method section \ref{sec:ei for nn}) , $\epsilon$ is a given small constant. 

The objective of maximizing EI in Equation \ref{old optimization} is to enhance the causal effect of the macro-dynamics $f_q$. However, the direct maximization of EI, as depicted in Figure \ref{fig:causalemergence}, can result in trivial solutions, as highlighted by Zhang et al. \cite{zhang2022neural}. Additionally, it can lead to issues of ambiguity and violations of commutativity, as pointed out by Eberhardt, F and Lee, L.L. \cite{eberhardt2022causal}. To address these issues, constraints are introduced to prevent such problems. The constraints in Equation \ref{old optimization} ensure that the macro-dynamics $f_q$ can simulate the micro-dynamics in the data as accurately as possible, with the prediction error being less than a given threshold $\epsilon$. 

This is achieved by mapping the micro-state $x_t$ to the macro-state $\boldsymbol{y}_t=\phi(\boldsymbol{x_t})$ at each time step. The macro-dynamics $f_q$ then makes a one-step prediction to obtain $\hat{\boldsymbol{y}}_{t+1}=f_q(\phi(\boldsymbol{x}_t))$. Subsequently, an anti-coarsening strategy $\phi^{\dag}$ is employed to decode $\hat{\boldsymbol{y}}_{t+1}$ back into the micro-state, resulting in the reconstructed micro-state $\hat{\boldsymbol{x}}_{t+1}=\phi^{\dag}(\hat{\boldsymbol{y}}_{t+1})$. The entire computational process can be visualized by referring to the upper part(NIS) in Figure \ref{fig:nis+}b.

In practice, we obtain the value of $\epsilon$ by setting the normalized MAE (mean absolute error divided by the standard deviation of $\boldsymbol{x}$) on the training data. The choice of normalized MAE ensures a consistent standard across different experiments, accounting for the varying numerical ranges. 


By changing $q$ we can obtain macro-dynamics in various dimensions. If $q=p$, then $f_p$ becomes the learnt micro-dynamics. Then we can compare $\mathcal{J}_q$ and $\mathcal{J}_p$ for any $q$. The problem of \textbf{causal emergence quantification} can be defined as the calculation of the following difference,

\begin{equation}
\label{eq:causal_emergence_identification}
\Delta\mathcal{J}\equiv \mathcal{J}(f_q)-\mathcal{J}(f_p),
\end{equation}
where $\Delta\mathcal{J}$ is defined as the degree of causal emergence. The micro-dynamics $f_p$ is also obtained by solving Equation \ref{old optimization} with dimension $q=p$. If $\Delta\mathcal{J}>0$, then we say CE occurs within the data. 

\subsection{Solution}
While, solving the optimization problem defined in Equation \ref{old optimization} directly is difficult because the objective function $\mathcal{J}$ is the mutual information after intervention which deserved special process.

To tackle this problem, we convert the problem defined in Equation \ref{old optimization} into a new optimization problem without constraints, that is:
\begin{gather}
\label{new_optimization}
\min_{f,g,\phi,\phi^{\dag}} \sum_{t=1}^{T-1} w(\boldsymbol{x}_t)||\boldsymbol{y}_t-g(\boldsymbol{y}_{t+1})||+\lambda||\hat{\boldsymbol{x}}_{t+1}-\boldsymbol{x}_{t+1} ||,
\end{gather}
where $\hat{\boldsymbol{x}}_{t+1}=\phi^{\dag}(f(\phi(\boldsymbol{x}_t)))$. $\boldsymbol{y}_t=\phi(\boldsymbol{x}_t)$ and $\boldsymbol{y}_{t+1}=\phi(\boldsymbol{x}_{t+1})$ are the macro-states. $g: \mathcal{R}^q\rightarrow\mathcal{R}^q$ is a new function that we introduce to simulate the inverse macro-dynamics on the macro-state space, that is to map each macro-state at $t+1$ time step back to the macro-state at $t$ time step. $\lambda$ is a Lagrangian multiplier which will be taken as a hyper-parameter in experiments. $w(\boldsymbol{x}_t)$ is the inverse probability weights which is defined as:

\begin{equation}
    \label{eq:probreweight}
    w(\boldsymbol{x}_t)=\frac{\Tilde{p}(\boldsymbol{y}_t)}{p(\boldsymbol{y}_t)}=\frac{\Tilde{p}(\phi(\boldsymbol{x}_t))}{p(\phi(\boldsymbol{x}_t))},
\end{equation}
where $\Tilde{p}$ is the new distribution of macro-states $\boldsymbol{y}_t$ after intervention for $do(\boldsymbol{y}_t\sim U_q)$, and $p$ is the natural distribution of the data. In practice, $p(\boldsymbol{y}_t)$ is estimated through kernel density estimation (KDE)\cite{rosenblatt1956remarks}(see method section \ref{sec:KDE}). The approximated distribution, $\Tilde{p}(\boldsymbol{y}_t)$, is assumed to be a uniform distribution, which is characterized by a constant value. Consequently, the weight $w$ is computed as the ratio of these two distributions. Mathematical theorems mentioned in method section~\ref{sec:nis+} and proved in support information section~\ref{sec:optimize} guarantee this new optimization problem (Equation \ref{new_optimization}) is equivalent to the original one (Equation \ref{old optimization}).

\section{Results}
We will validate the effectiveness of the NIS+ framework through numerical experiments where data is generated by different artificial models (dynamical systems, multi-agent systems, and cellular automata). Additionally, we will apply NIS+ to real fMRI data from human subjects to uncover interesting macro-level variables and dynamics. In these experiments, we will evaluate the models' prediction and generalization abilities. We will also assess their capability to identify CE and compare it with Rosas' $\Psi$ indicator proposed in \cite{Rosas_Mediano_Jensen_Seth_Barrett_Carhart-Harris_Bor_2020}, an alternative measure for quantifying CE approximately. For the experiments, a threshold of normalized MAE of 0.3 was selected based on the observation that exceeding this value leads to a significant change in the relationship between normalized MAE and noise (see below).

\subsection{SIR}
The first experiment revolves around the SIR (Susceptible, Infected, and Recovered or Died) model, a simple dynamical system. In this experiment, the SIR dynamics serve as the ground truth for the macro-level dynamics, while the micro-level variables are generated by introducing noise to the macro-variables. The primary objective is to evaluate our model's ability to effectively remove noise, uncover meaningful macroscopic dynamics, identify CE, and demonstrate generalization beyond the distribution of the training dataset.

Formally, the macro-dynamics can be described as:
\begin{eqnarray}
\begin{aligned}
\begin{cases}  
\frac{\mathrm{d}S}{\mathrm{d}t}=-\beta SI,  \\
\frac{\mathrm{d}I}{\mathrm{d}t}=\beta SI - \gamma I, \\
\frac{\mathrm{d}R}{\mathrm{d}t}= \gamma I,
\end{cases}
\end{aligned}
\end{eqnarray}
where $S,I,R\in[0,1]$ represents the proportions of healthy, infected and recovered or died individuals in a population, $\beta=1$ and $\gamma=0.5$ are parameters for infection and recovery rates, respectively. Figure \ref{fig:sir}(a) shows the phase space $(S,I,R)$ of the SIR dynamics. Because the model has only two degrees of freedom, as $S$, $I$, and $R$ satisfy $S+I+R=1$, all macro-states are distributed on a triangular plane in three dimensions, and only $S$ and $I$ are used to form the macro-state variable $\boldsymbol{y}=(S,I)$, and $R$ is removed.  

We then expand $\boldsymbol{y}$ into a four-dimensional vector and introduce Gaussian noises to form a microscopic state:
\begin{eqnarray}
\begin{aligned}
\label{eq:sir noise}
\begin{cases}
    \boldsymbol{S}'=(S,S)+\boldsymbol{\xi}_1,  \\
    \boldsymbol{I}'=(I,I)+\boldsymbol{\xi}_2.
\end{cases}
\end{aligned}
\end{eqnarray}
Among which, $\boldsymbol{\xi}_1,\boldsymbol{\xi}_2 \sim \scriptsize{N}(0,\Sigma)$ are two-dimensional Gaussian noises and independent each other, and $\Sigma$ is the correlation matrix. In this way, we obtain a micro-states sequence $\boldsymbol{x}_t = (\boldsymbol{S}'_t,\boldsymbol{I}'_t)$ as the training samples in the experiment. We randomly select initial conditions by sampling points within the triangular region depicted in Figure \ref{fig:sir}(a) and generate time series data using the aforementioned process. These generated data are then utilized to train the models. 

\begin{figure}[htbp]    
  \centering
    \includegraphics[width=1\textwidth]{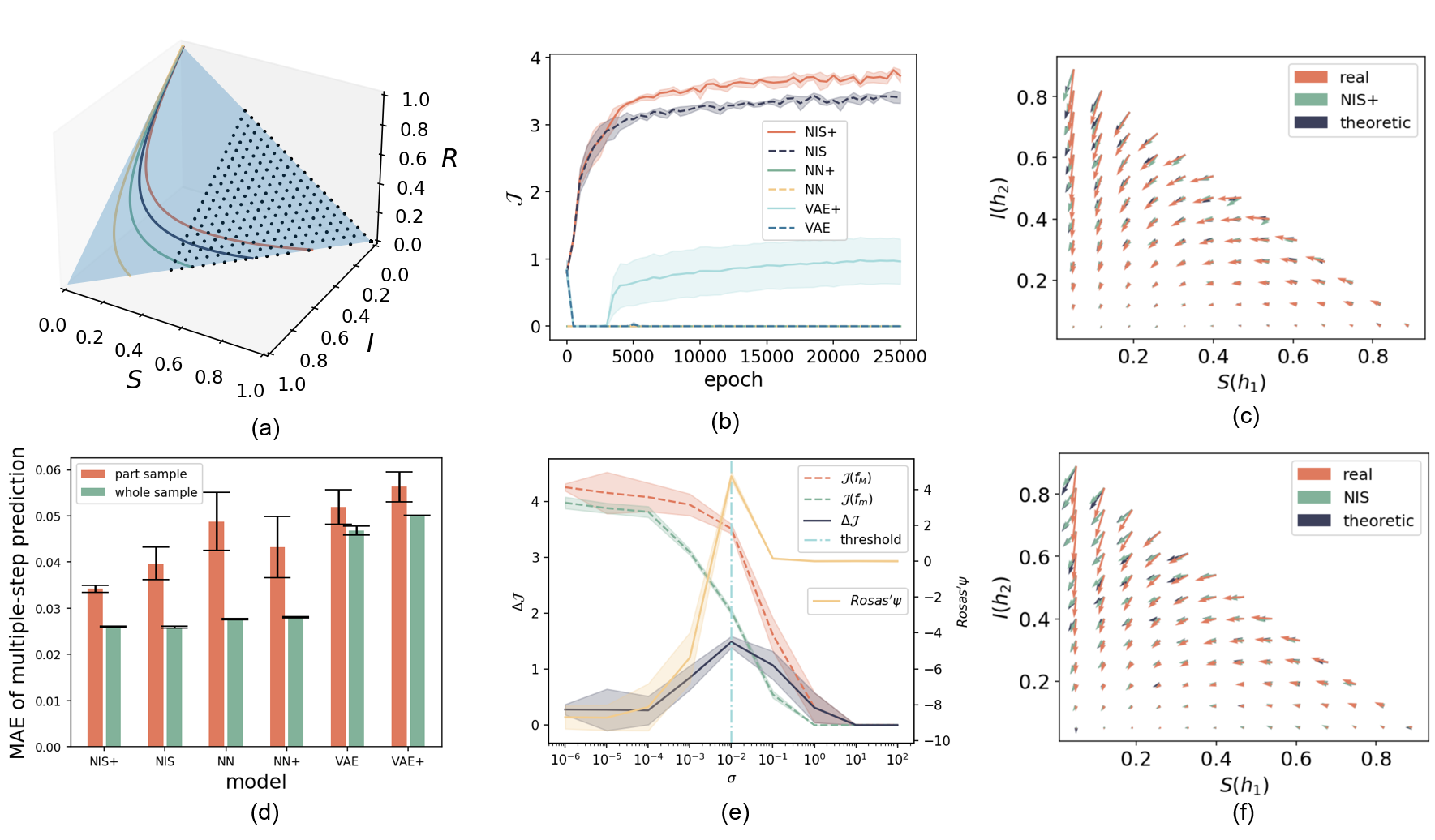}

      
  \caption{The experimental results of NIS+ and compared models on the SIR model with observational noise. (a) The phase space of the SIR model, along with four example trajectories with the same infection and recovery or death rates. The full dataset (blue area) and the partial dataset (dotted area) used for training are also displayed, consisting of 63,000 and 42,000 uniformly distributed data points, respectively. (b) The curves depict the change in dimension-averaged effective information ($\mathcal{J}$) with training epochs for different models.  The lines represent the means, while the band widths represent the standard deviations of five repeated experiments. (c) A comparison is made among the vector fields of the SIR dynamics, the learned macro-dynamics of NIS+, and the macro-dynamics transformed by the Jacobian of the learned encoder. Each arrow represents a direction, and the magnitude of the derivative of the dynamics at that coordinate point. 
  For detailed procedures, please refer to the support information section~\ref{sec:sir vector field}. (d) A comparison is conducted to evaluate the errors in multi-step predictions for different models trained on either partial datasets (with 42,000 missing data points) or complete datasets(see details in the support information section~\ref{sec:sir data detail}). These models include NIS+, NIS, a feed-forward neural network (NN), a feed-forward neural network with inverse probability weighting and inverse dynamics learning techniques (NN+), a Variational Autoencoder(VAE), and its reweighted and inverse dynamics version (VAE+). Please refer to the details of the parameters in method section~\ref{sec:nn vae}. (e). The variations in the measure of CE ($\Delta\mathcal{J}$) and EIs for micro-dynamics ($\mathcal{J}(f_m)$) and macro-dynamics  ($\mathcal{J}(f_M)$) are plotted as the standard deviation $\sigma$ of observation noise changes. All these indicators are averaged across dimensions. 
  Following Rosas' definition and calculation method for CE(see method section \ref{sec:Rosas psi}), the yellow line demonstrates the changes in Rosas' $\Psi$ \cite{Rosas_Mediano_Jensen_Seth_Barrett_Carhart-Harris_Bor_2020}. The vertical line represents the threshold for the normalized MAE equaling 0.3. When $\sigma$ is larger than the threshold, the constraint of error in Equation \ref{old optimization} is violated, and the results are not reliable. (f) A comparison is made among the vector fields of the SIR dynamics, the learned macro-dynamics of NIS, and the macro-dynamics transformed by the encoder Jacobian matrix of NIS, in comparison with (c).}
  \label{fig:sir}            
\end{figure}
%

We conduct comparative analysis between NIS+ and other alternative models, including NIS (without EI maximization compared to NIS+), feed-forward neural network (NN), and variational autoencoder (VAE). To make a fair comparison, we ensure that all benchmark models have a roughly equal number of parameters. Moreover, we employ the same techniques of probability reweighting and inverse dynamics learning on the feed-forward neural network (NN+) and variational autoencoder (VAE+) as utilized in NIS+. We evaluate the performances of all candidate models by requiring them to predict future states for multiple time steps (10 steps) on a separate test dataset. The results show that NIS+(green bars) outperforms other competitors on multi-step prediction as shown in Figure \ref{fig:sir}(d) no matter if they use the techniques like probability reweighting to maximize EI, this manifests that an invertible neural network as an encoder and decoder is necessary (the details can be referred to method section~\ref{sec:nn vae}). 


Further, to assess the model's generalization ability beyond the region of the training dataset, in addition to the regular training and testing, we also conduct experiments where the model was trained on a subset of the data and tested on the complete dataset. The training samples in this experiment are shown in the dotted area in Figure \ref{fig:sir}(a)(the area with $S \le \frac{1}{3}$ is missing), and the test samples are also in the blue triangle. As shown by the red bars in Figure \ref{fig:sir}(d), the performances of out-of-distribution generalization of NIS+ are better than other benchmarks although the test region is beyond the trained region. And the differences among different models are larger on partial dataset. 

To further test whether the models successfully learn the ground-truth macro-dynamics, we conduct a comparison between the vector fields of the real SIR dynamics, represented by $d\boldsymbol{y}/dt$, and the learned emergent dynamics $d(h_1,h_2)/dt$. This comparison is illustrated in Figure \ref{fig:sir}(c) for NIS+ and (f) for NIS. In both sub-figures, the learned vectors align with the ground-truth (real) dynamics and match the theoretical predictions based on the Jacobian of the encoder (for more details, please refer to support information section~\ref{sec:sir vector field}). However, it is evident that NIS+ outperforms NIS in accurately capturing the underlying dynamics, especially in peripheral areas with limited training samples.

Next, we test NIS+ and other comparison models on EI maximization and CE quantification, the results are shown in Figure \ref{fig:sir}(b) and (e). First, to ensure that EI is maximized by NIS+, Figure \ref{fig:sir}(b) illustrates the evolution of EI (dimension averaged) $\mathcal{J}$ over training epochs. It is evident that the curves of NIS+ (red solid), NIS (black dashed), and VAE+(green solid) exhibit upward trends, but NIS+ demonstrates a faster rate of increase. This indicates that NIS+ can efficiently maximize $\mathcal{J}$ to a greater extent than other models. Notably, NIS also exhibits a natural increase in EI as it strives to minimize prediction errors.

Second, to examine NIS+'s ability to detect and quantify CE, we compute $\Delta\mathcal{J}$s and compare them with Rosas' $\Psi$ indicators as the noise level $\sigma$ in micro-states increased (see support information section \ref{sec:sir train} for details). We utilize the learned macro-states from NIS+ as the prerequisite variable $V$ to implement Rosas' method \cite{Rosas_Mediano_Jensen_Seth_Barrett_Carhart-Harris_Bor_2020}. The results are depicted by the black and yellow solid lines in Figure \ref{fig:sir}(e).

Both indicators exhibit a slight increase with $\sigma$ and $\Delta\mathcal{J}>0$ always holds when it is less than 0.01, but Rosas' $\Psi>0$ after $\sigma=10^{-3}$. Therefore, NIS+ indicates that CE consistently occurs at low noise levels, whereas Rosas' method does not. NIS+'s result is more reasonable since it can extract macro-dynamics similar to the ground-truth from noisy data, and this deterministic dynamics should have a larger EI than the noisy micro-dynamics. We also plot the curves $J(f_M)$ (red dashed line) and $J(f_m)$ (green dashed line) for macro- and micro-dynamics, respectively. These curves decrease as $\sigma$ increase, but $J(f_m)$ decreases at a faster rate, leading to the observed occurrence of CE. However, when Rosas' $\Psi<0$, we cannot make a definitive judgment as $\Psi$ can only provide a sufficient condition for CE. Both indicators reach their peak at $\sigma=10^{-2}$, which corresponds to the magnitude of the time step ($dt=0.01$) used in our simulations and reflects the level of change in micro-states.

On the other hand, if the noise becomes too large, the limited observational data makes it challenging for NIS+ to accurately identify the correct macro-dynamics from the data. Consequently, the degree of CE $\Delta\mathcal{J}$ decreases to zero. Although NIS+ determines that there is no CE when $\sigma>10$, this result is not reliable since the normalized prediction errors have exceeded the selected threshold 0.3 after $\sigma=10^{-2}$ (the vertical dashed and dotted line).

Therefore, these experiments manifest that by maximizing EI and learning an independent causal mechanism, NIS+ can effectively disregard noise within the data and accurately learn the ground truth macro-dynamics, and generalization to unseen data. Additionally, NIS+ demonstrates superior performance in quantifying CE. More details about experimental settings are shown in support information section \ref{sec:sir}. 

\subsection{Boids}
\label{sce:Boids}

The second experiment is on Boids model which is a famous multi-agent model to simulate the collective behaviors of birds \cite{Reynolds1987,Reynolds1999}. In this experiment, we test the ability of NIS+ on capturing emergent collective behaviors and CE quantification on different environments with intrinsic and extrinsic noises. To increase the explainability of the trained coarse-graining strategy, we also try to give an explicit correspondence between the learned macro-states and the micro-states.



We conducted simulations following the methodology of \cite{Reynolds1987} with $N=16$ boids on a $300\times300$ canvas to generate training data. The detailed dynamical rules of the Boids model can be found in support information section \ref{sec:Boidsmodel}. In order to assess the ability of NIS+ to discover meaningful macro-states, we separate all boids into two groups, and artificially modified the Boids model by introducing distinct constant turning forces for each group. This modification ensured that the two groups exhibited separate trajectories with different turning angles, as depicted in Figure \ref{fig:boids1}. The micro-states of each boid at each time step consisted of their horizontal and vertical positions, as well as their two-dimensional velocities. The micro-states of all boids formed a $4N$ dimensional vector of real numbers, which was used as input for training NIS+.

As depicted by the triangles in Figure \ref{fig:boids1}, the predicted emergent collective flying behaviors for 50 steps closely follow the ground-truth trajectories of the two groups, particularly at the initial stages. These predicted trajectories are generated by decoding the predicted macro-states into the corresponding micro-states, and the two solid lines represent their averages. The hyperparameter $q=8$ is chosen for this experiment based on the observation that the CE consistently reaches its highest value when $q=8$, as indicated in Figure \ref{fig:boids3}.

To enhance the interpretability of the learned macro-states and coarse-graining function in NIS+, we utilize the Integrated Gradient (IG) method\cite{Sundararajan2017}(see method section \ref{sec:IG}) to identify the most significant micro-states for each learned emergent macro-state dimension. We normalized the calculated IG and enhanced the maximum gradient of the micro-state in each macro-state and disregard the velocity dimensions of each boid due to their lower correlations with macro-states. The normalized IG is drawn into a matrix diagram(Figure \ref{fig:boids7}). As depicted by Figure \ref{fig:boids7}, the 1st, 2nd, 5th, and 6th dimensions in macro-states correspond to the boids in the first group (with ID<8), while the 3rd, 4th, 7th, and 8th dimensions correspond to the second group (with ID>=8). Thus, the learned coarse-graining strategy uses two positional coordinates to represent all other information to form one dimension of macro-state. 


To compare the learning and prediction effects of NIS+ and NIS, we assess their generalization abilities by testing their performances on initial conditions that differed from the training dataset. During the simulation for generating training data, the positions of all boids are constrained within a circle with a radius of $r$, as depicted in Figure \ref{fig:boids1}. However, we assess the prediction abilities of both models when the initial positions are located on the larger circles. Figure \ref{fig:boids2} shows the MAEs of NIS+ and NIS, which increase with the radius $r$, where smaller prediction errors indicate better generalization. The results clearly demonstrate NIS+'s superior generalization across all tested radii $r$ compared to NIS.

Furthermore, to examine the impact of intrinsic and observational perturbations on CE, two types of noises are introduced. Intrinsic noise is incorporated into the rule by adding random turning angles to each boid at each time step. These angles are uniformly distributed within the interval $\alpha\cdot [-\pi,\pi]$, where $\alpha\in[0,1]$ is a parameter controlling the magnitude of the intrinsic noise. On the other hand, extrinsic noise is assumed to affect the observational micro-states. In this case, we assume that the micro-states of each boid cannot be directly observed, but instead, noisy data is obtained. The extrinsic or observational noise $\delta\sim \mathcal{N}(0,\delta_{max})$ is added to the micro-states, and $\delta_{max}$ is the parameter determining the level of this noise.  

The results are shown in Figure \ref{fig:boids5} and \ref{fig:boids6}, where the normalized MAE increases in both cases, indicating more challenging prediction tasks with increasing intrinsic and extrinsic noises. However, the differences between these two types of noises can be observed by examining the degrees of CE ($\Delta\mathcal{J}$). Figure \ref{fig:boids5} demonstrates that $\Delta\mathcal{J}$ increases with the level of extrinsic noise ($\delta_{max}$), suggesting that coarse-graining can mitigate noise within a certain range and enhance causal effects. When $\delta_{max}<0.1$, the normalized MAE is smaller than $0.3$ (black dashed horizontal line), satisfying the constraint in Equation \ref{old optimization}. In this case, the degree of CE increases with $\delta_{max}$. However, when the threshold of 0.3 is exceeded, and even though $\Delta\mathcal{J}$ decreases, we cannot draw any meaningful conclusion because the violation of the constraint in Equation \ref{old optimization} undermines the reliability of the results.

On the other hand, Figure \ref{fig:boids6} demonstrates that $\Delta\mathcal{J}$ decreases as the level of intrinsic noise ($\alpha$) increases. This can be attributed to the fact that the macro-level dynamics learner attempts to capture the flocking behaviors of each group during this stage. However, as the intrinsic noise increases, the flocking behaviors gradually diminish, leading to a decrease in CE. We have not included cases where $\alpha>0.6$ because the normalized MAE exceeds the threshold of 0.3, the constraints in Equation \ref{old optimization} is violated. Figure \ref{fig:boids4} illustrates real trajectories and predictions for random deflection angle noise with $\alpha=0.4$. It can be observed that in the early stage, the straight-line trend can be predicted, but as the noise-induced deviation gradually increases, the error also grows, which intuitively reflects the reduction in CE. To compare, we also test the same curves for Rosas' $\Psi$, the results are shown in support information section \ref{sec:Boidsmodel} because all the values are negative with large magnitudes.

Therefore, we conclude that NIS+ learns the optimal macro-dynamics and coarse-graining strategy by maximizing EI. This maximization enhances its generalization ability for cases beyond the training data range. The learned macro-states effectively capture average group behavior and can be attributed to individual positions using the IG method. Additionally, the degree of CE increases with extrinsic noise but decreases with intrinsic noise. This observation suggests that extrinsic noise can be eliminated through coarse-graining, while intrinsic noise cannot. More details about the experiments of Boids can be referred to support information section \ref{sec:Boidsmodel}.

\begin{figure}[htbp]    
  \centering            
 \subfloat[]  
  {    
      \label{fig:boids1}\includegraphics[trim={10 0 2 0},clip,width=0.310\textwidth]{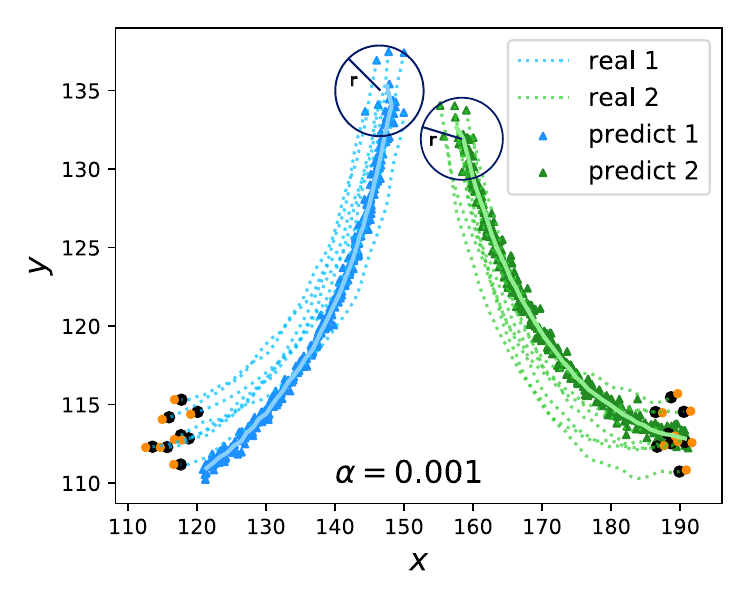}
      
  }
   \subfloat[]  
  {
      \label{fig:boids2}\includegraphics[trim={10 0 2 0},clip,width=0.310\textwidth]{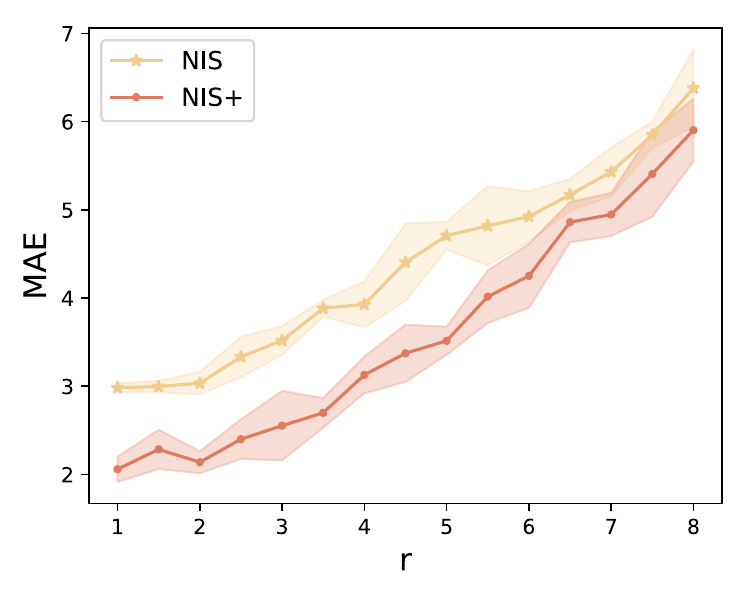}
}
  \subfloat[]
  {
      \label{fig:boids3}\includegraphics[trim={10 0 2 0},clip,width=0.310\textwidth]{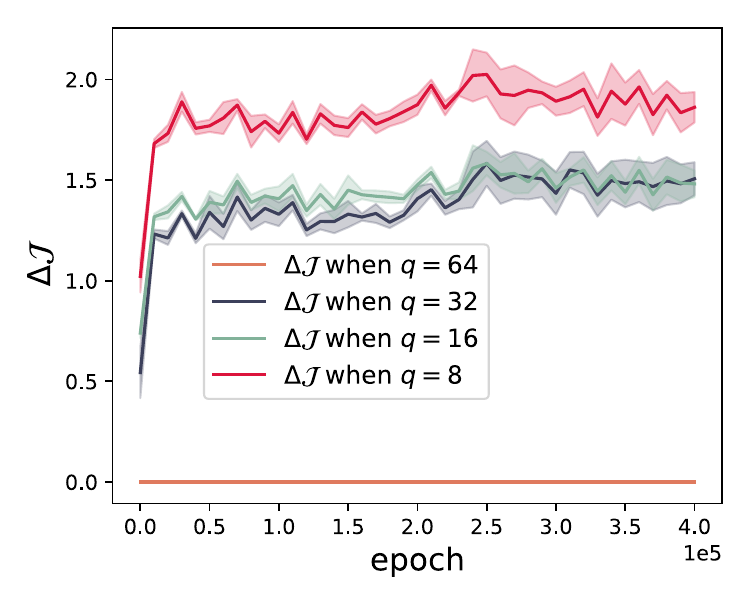}
  }\\
    \subfloat[]
  {
      \label{fig:boids7}
      \includegraphics[trim={0 90 0 110},clip,width=0.6\textwidth]{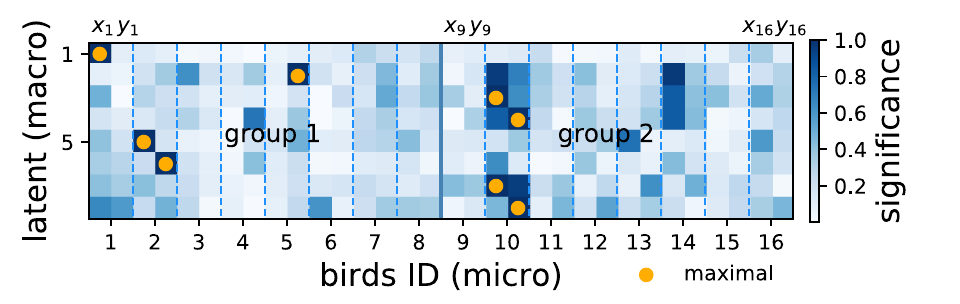}
  }\\
  \subfloat[]
  {
      \label{fig:boids4}\includegraphics[trim={12 0 0 0},clip,width=0.310\textwidth]{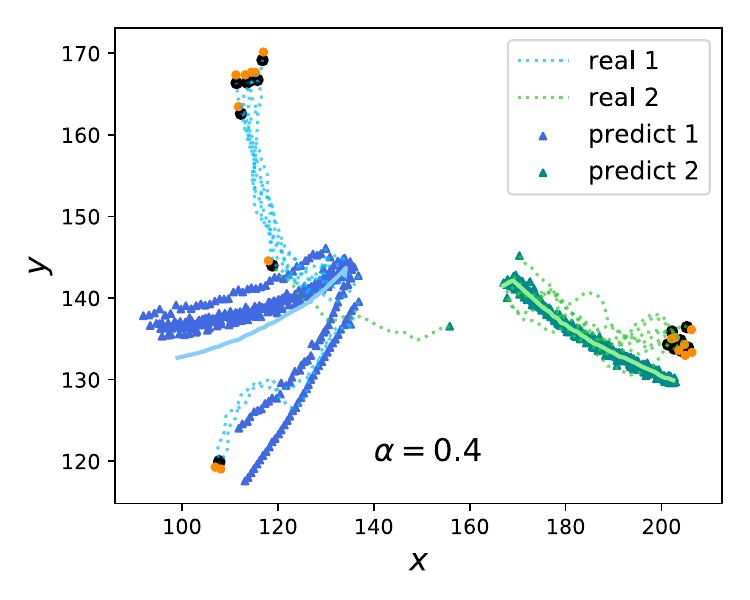}
  }
  \subfloat[]
  {
      \label{fig:boids5}\includegraphics[trim={13 -8.1 13 0},clip,width=0.312\textwidth]{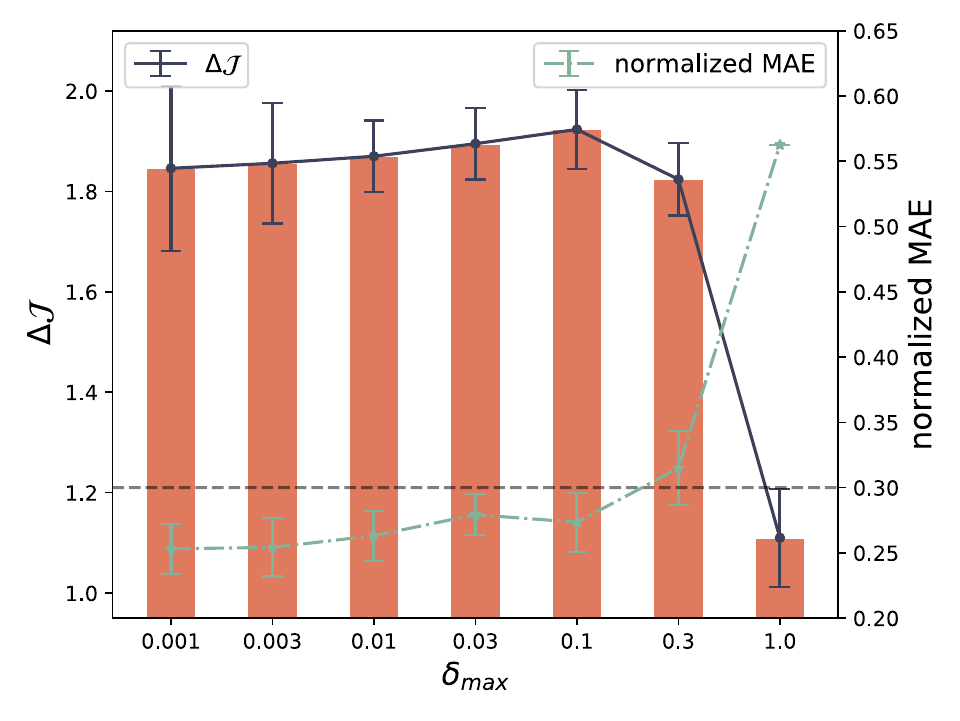}
  }
\subfloat[]
  {
      \label{fig:boids6}\includegraphics[trim={13 -8.1 13 0},clip,width=0.312\textwidth]{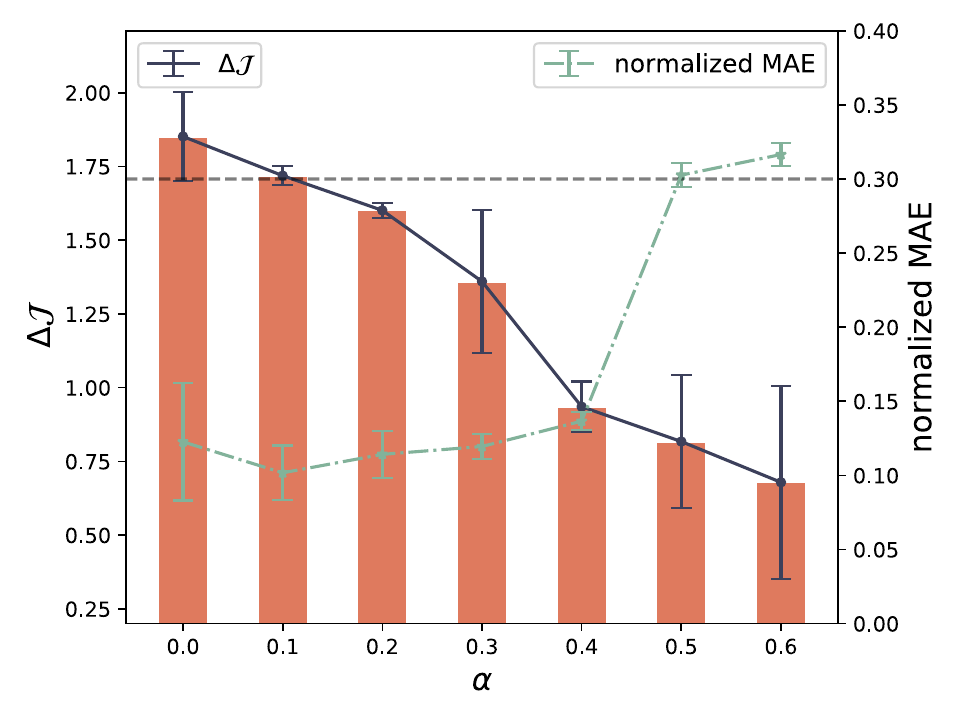}
  }

  \caption{The experimental results of NIS+ on learning the collective flocking behaviors of the Boids model.  (a) and (e) present real and predicted data on boid trajectories under various conditions. Concretely, they present the comparison results for multi-step (50 steps) predictions under the condition of two separating groups, and random deflection angles. Their intrinsic noise levels of $\alpha$ are 0.001 and 0.4, respectively. (b) showcases the escalation of mean absolute error (MAE) for multi-step predictions as the radius $r$, which represents the range of initial positions of boids in (a), extends beyond the limits of the training data. (c) depicts the trend of dimension-averaged causal emergence ($\Delta\mathcal{J}$) changes with training epochs of NIS+ using different hyperparameters of $q$, which represents the scales of different macro-states. (d) presents the saliency map, which visually depicts the association between each macroscopic dimension and the spatial coordinates of each boid. We highlight the most significant corresponding micro-states for each macro-state dimension with orange dots, determined using the integrated gradients (IG) method applied to our model. The horizontal axis represents the $\boldsymbol{x}$ and $\boldsymbol{y}$ coordinates of 16 boids in the microscopic state, while the vertical axis represents the 8 macroscopic dimensions. The dodger-blue dashed line distinguishes the coordinates of different individual boids, while the steel-blue solid line separates the boid groups. (f) and (g) shows the changes in $\Delta\mathcal{J}$ and normalized MAE under different noise levels, (f) for the extrinsic (observational, added on micro-states) noise ($\delta_{max}$) and (g) for intrinsic noise ($\alpha$, added by modifying the dynamical rule of Boids model). In both (f) and (g), the horizontal lines represent the threshold for the violation of the constraint of error in Equation \ref{old optimization}. When the normalized MAE is larger than the threshold 0.3, the constraint is violated, and the results are not reliable.}   
  \label{fig:boids}            
\end{figure}

\subsection{Real fMRI time series data for brains}\label{Schaefer Results}

We test our models on real fMRI time series data of brains for 830 subjects, called AOMIC ID1000\cite{Snoek2021}. The fMRI scanning data is collected when the subjects watch the same clip of movie. Thus, similar experiences of subjects under similar natural stimulus is expected, which corresponds to time series of the same dynamics with different initial conditions. We pre-process the raw data through Schaefer atlas method~\cite{Schaefer2017} to reduce the dimensionality of time series for each subject from roughly 140,000 (it varies among subjects.) to 100 such that NIS+ can operate and obtain more clear results. Then, the first 800 time series data are selected for training and the remaining 30 time series are for test. We also compare our results with another fMRI dataset AOMIC PIOP2~\cite{Snoek2021} for 50 subjects in resting state. Further description about the dataset can be referred to the method section.

To demonstrate the predictive capability of NIS+ for micro-states, Figure \ref{fig:brain}(a) illustrates the changes in normlized MAE with the prediction steps of the micro-dynamics on test data for different hyperparameters $q$. It is evident that NIS+ performs better in predictions when $q=27$ and $q=1$. Specifically, the curve for $q=27$ exhibits a slower rate of increase compared to the curve for $q=1$ as the prediction steps increase. This suggests that selecting the hyperparameter $q$ as 27 may be more suitable than 1.

However, Figure \ref{fig:brain}(b) suggests a different outcome. When comparing the degree of CE ($\Delta\mathcal{J}$) for different hyperparameters $q$ (the green bars), the highest $\Delta\mathcal{J}$ is observed when $q=1$. Conversely, a negative $\Delta\mathcal{J}$ value is obtained when $q=27$. This indicates that the improved prediction results may be attributed to overfitting when $q=27$. Thus, $q=1$ outperforms other values of $q$ in terms of $\Delta\mathcal{J}$. This finding is supported by the NIS framework (the red bars), despite observing a larger standard deviation in $\Delta\mathcal{J}$ when $q=1$. Furthermore, we also compare the results of CE with resting data and observe that peaks are reached at $q=7$, which is just the number of subsystems in Schaefer atalas, for both NIS (deep blue bars) and NIS+ (yellow bars). Therefore, we can conclude that when subjects watch movies, the activities in different brain areas can be represented by a single real number at each time step. More analysis for the resting data can be referred to the support information section~\ref{sec:resting state}.



To investigate how NIS+ coarse-grains the input data into a single-dimensional macro-state, we also utilize the IG method to identify the most significant dimensions of the micro-state \cite{Sundararajan2017}. The results are depicted in Figure \ref{fig:brain}(c) and (d). We observe that the visual (VIS) sub-networks exhibit the highest attribution (Figure \ref{fig:brain}(c)). These visual sub-networks represent the functional system that subjects utilize while watching movie clips. Furthermore, we can visualize the active areas in finer detail on the brain map (Figure \ref{fig:brain}(d)), where darker colors indicate greater attribution to the single macro-state. Therefore, the regions exhibiting similar darkest colors identified by NIS+, which correspond to the deep visual processing brain region, could potentially represent the ``synergistic core''~\cite{synergestic_core} when the brain is actively engaged in watching movies. The numeric neurons in these areas may collaborate and function collectively. However, this conclusion should be further confirmed and quantified by decomposing the mutual information between micro-states and macro-states into synergistic, redundant, and unique information~\cite{williams2010nonnegative,Rosas_Mediano_Jensen_Seth_Barrett_Carhart-Harris_Bor_2020}.

In conclusion, NIS+ demonstrates its capability to learn and coarse-grain the intricate fMRI signals from the brain, allowing for the simulation of complex dynamics using a single macro-state. This discovery reveals that the complex neural activities during movie watching can be encoded by a one-dimensional macro-state, with a primary focus on the regions within the visual (VIS) sub-network. The robustness of our findings is further supported by comparable results obtained through alternative methods for data pre-processing, as demonstrated in the support information section \ref{sec:brain}. 

\begin{figure}[!ht]
	\centering
	\includegraphics[width=\textwidth]{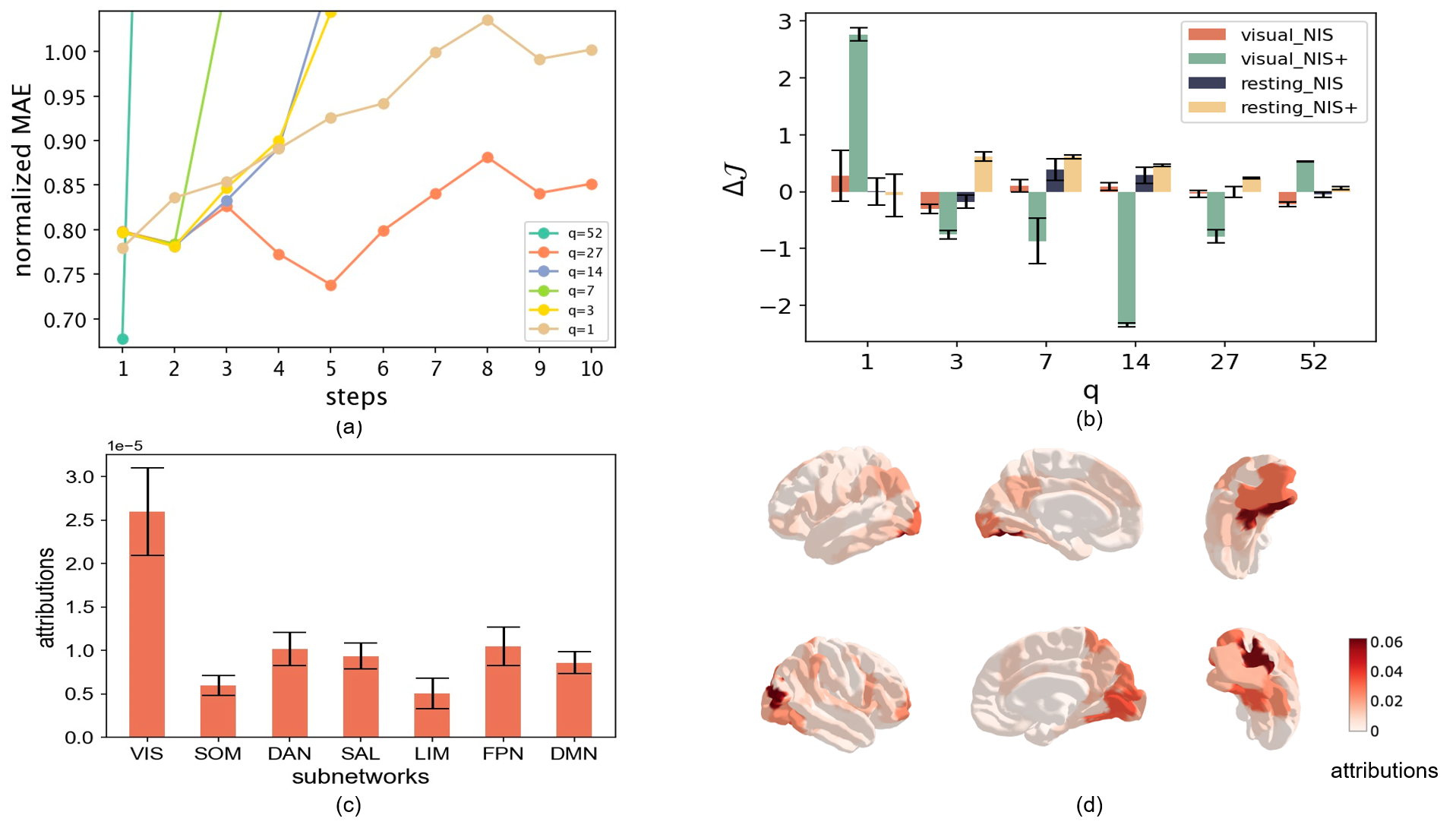}
	\caption{The learning results, the degree of causal emergence, and the attribution analysis of NIS+ and NIS on the fMRI data for the brains. (a) The mean errors of the multi-step predictions increase with the prediction steps under different scales ($q$) on the test dataset. (b) Measures of CE (dimension averaged, $\Delta\mathcal{J}$) are compared among different models and different datasets including movie-watching fMRI (visual fMRI) and resting fMRI. The bars show the averaged results for 10 repeating experiments, and the error bar represents the standard deviation. (c) The average attributions of the sub-networks under the Schaefer Atlas, calculated using the integrated gradient (IG) analysis method on the encoder with a scale of $q=1$, are presented. The error bars represent the standard errors. (d) Attribution maps for movie-watching(visual) fMRI data are displayed. The maps show the left hemisphere from the view of the left, left hemisphere from the view of the right, right hemisphere from the view of the right, and right hemisphere from the view of the left. Also, the right column reflects a detailed map for visual areas. The upper is left visual areas and the bottom is right visual areas. The colors represent the normalized absolute values of the integrated gradient.  }
	\label{fig:brain}
\end{figure}

\section{Concluding Remarks}
Inspired by the theory of causal emergence, this paper introduces a novel machine learning framework called Neural Information Squeezer Plus (NIS+) to learn emergent macro-dynamics, and suitable coarse-graining methods directly from data. Additionally, it aims to quantify the degree of CE under various conditions. 

The distinguishing feature of our framework, compared to other machine learning frameworks, is its focus on maximizing the effective information (EI) of the learned macro-dynamics while maintaining effectiveness constraints. This enables the learned emergent macro-dynamics to capture the invariant causal mechanism that is as independent as possible from the distribution of input data. This feature not only enables NIS+ to identify CE in data across different environments but also enhances its ability for generalization on the environments which are distinct from training data. By introducing the error constraint in Equation \ref{old optimization}, we can address the issues raised in \cite{eberhardt2022causal} and enhance the robustness of the maximization of EI framework. As a result, NIS+ extends Hoel's theory of CE to be applicable to both discrete and continuous dynamical systems, as well as real data.


Three experiments were conducted to evaluate the capabilities of NIS+ in learning and generalization, and quantifying CE directly from data. Furthermore, we applied this framework to the domain of the Game of Life (refer to the support information section \ref{sec:gamelife}). These experiments encompassed three simulation scenarios and one real fMRI dataset for 830 human subjects while watching same movie clips.

The experiments indicate that by maximizing EI, NIS+ outperforms other machine learning models in tasks such as multi-step predictions and pattern capturing, even in environments that were not encountered during the training process. Consequently, NIS+ enables the acquisition of a more robust macro-dynamics in the latent space.

Furthermore, the experiments show that NIS+ can quantify CE in a more reasonable way than Rosas' $\Psi$ indicator. With this framework, we can distinguish different scenarios in the data and identify which settings contain more regular patterns, as demonstrated in the experiment conducted on the Game of Life. The experiment on the Boid model also provides insights into how two types of noise can impact the degrees of CE. The conclusion is that extrinsic noise may increase CE, while intrinsic noise may decrease it. This indicates that extrinsic noise, arising from observational uncertainty, can be mitigated by the learned coarse-graining strategy. On the other hand, intrinsic noise, stemming from inherent uncertainty in the dynamical rules, cannot be eliminated.


The process of coarse-graining a complex system can be effectively learned by NIS+ and illustrated using the Integrated Gradients (IG) method, as demonstrated in the experiments involving the Boid model and the brain dataset. Through this method, the relationship between macro-states and micro-states can be visualized, allowing for the identification of the most significant variables within the micro-states. In the brain experiment, it was observed that the most crucial information stored in the unique macro-state exhibited a strong correlation with the micro-variables of the visual subnetwork. This finding highlights the ability of NIS+ to identify an appropriate coarse-graining strategy that aligns with meaningful biological interpretations.


NIS+ holds potential for various applications in data-driven modeling of real complex systems, such as climate systems, collective behaviors, fluid dynamics, brain activities, and traffic flows. By learning more robust macro-dynamics, the predictive capabilities of these systems can be enhanced. For instance, El Niño, which arises from the intricate interplay of oceanic and atmospheric conditions, exemplifies the emergence of a major climatic pattern from underlying factors. Understanding these emergent macro-dynamics can be instrumental in modeling and predicting El Niño events. By leveraging NIS+ to capture and quantify the CE in such complex systems, we can gain valuable insights and improve our ability to forecast their behavior.

Further extensions of the EI maximization method to other problem domains, such as image classification and natural language understanding, in addition to dynamics learning, warrant attention. Specifically, applying this principle to reinforcement learning agents equipped with world models has the potential to enhance performance across various environments. By incorporating the EI maximization approach into these domains, we can potentially improve the ability of models to understand and interpret complex visual and textual information. This can lead to advancements in tasks such as image recognition, object detection, language understanding, and machine translation.

The relationship between causal emergence and causal representation learning (CRL)\cite{scholkopf2022statistical} deserves further exploration. The NIS+ framework can be seen as a form of CRL, where the macro-dynamics serve as the causal mechanism and the coarse-graining strategy acts as the representation. As such, techniques employed in CRL can be applied to find CE in data. Conversely, the concept of CE and coarse-graining can be leveraged in CRL to enhance the interpretability of the model. By integrating the principles and methodologies from both CE and CRL, we can potentially develop more powerful and interpretable models. This synergy can lead to a deeper understanding of the causal relationships within complex systems and enable the extraction of meaningful representations that capture the emergent patterns and causal mechanisms present in the data. Further research in this direction can contribute to advancements in both CE and causal representation learning.

Another interesting merit of NIS+ is its potential contribution to emergence theory by reconciling the debate on whether emergence is an objective concept or an epistemic notion dependent on the observer. By designing a machine to maximize EI, we can extract objective emergent features and dynamics. The machine serves as an observer, but an objective one. Therefore, if the machine observer detects interesting patterns in the data, emergence occurs.

However, there are several limitations in this paper that should be addressed in future studies. Firstly, the requirement of a large amount of training data for NIS+ to learn the macro-dynamics and coarse-graining strategy may not be feasible in many real-world cases. If the training is insufficient, it may lead to incorrect identification of CE. Therefore, it is necessary to incorporate other numeric method, such as Rosas' $\Phi$ID~\cite{Rosas_Mediano_Jensen_Seth_Barrett_Carhart-Harris_Bor_2020}, to make accurate judgments. One advantage of NIS+ is its ability to identify coarse-grained macro-states, which can then be used as input for Rosas' method. Secondly, the interpretability of neural networks, particularly for the macro-dynamics learner, remains a challenge. Enhancing the interpretability of the learned models can provide valuable insights into the underlying mechanisms and improve the trustworthiness of the results. Additionally, the learned macro-dynamics in this work are homogeneous, resulting in the same EI value across different locations in the macro-state space. However, in real-world applications, this may not always hold true. Therefore, it is necessary to develop local measures for CE, especially for systems with heterogeneous dynamics. This would facilitate the identification of local emergent patterns, making the analysis more nuanced and insightful. Lastly, the current framework is primarily designed for Markov dynamics, while many real complex systems exhibit long-term memory or involve unobservable variables. Extending the framework to accommodate non-Markov dynamics is an important area for future research. Addressing these limitations and exploring these avenues for improvement will contribute to the advancement of the field and enable the application of NIS+ to a wider range of complex systems.

\section{Methods and Data}

In order to provide a comprehensive understanding of the article, we will first dedicate a certain amount of space to introduce Erik Hoel's theory of causal emergence, and we introduce the details of the NIS and NIS+ frameworks and other used techniques. After that, the details of the fMRI time series data is introduced.

\subsection{Machine Learning Frameworks}
We will introduce the details of the two frameworks NIS and NIS+ in this section.
\subsubsection{Neural Information Squeezer (NIS)}
\label{sec:nis}
NIS use neural networks to parameterize all the functions to be optimized in Equation \ref{old optimization}, in which the coarse-graining and anti-coarsening function $\phi$ and $\phi^{\dag}$ are called encoder and decoder, respectively, and the macro-dynamics $f_q$ is called dynamics-learner. Second, considering the symmetric position between $\phi$ and $\phi^{\dag}$, invertible neural network(with RealNVP framework\cite{Dinh_Sohl-Dickstein_Bengio_2016}, details can be referred to the following sub-section) is used to reduce the model complexity and to make mathematical analysis being possible\cite{zhang2022neural}. 

Concretely, 
\begin{equation}
    \label{eq:phi}
    \phi\equiv Proj_q(\psi_{\omega}),
\end{equation}
where $\psi_{\omega}:\mathcal{R}^p\rightarrow\mathcal{R}^p$ is an invertible neural network with parameters $\omega$, and $Proj_q$ represents the projection operation with the first $q$ dimensions reserved to form the macro-state $\boldsymbol{y}$, and the last $p-q$ dimensional variable $\boldsymbol{y}'$ are dropped. Empirically, $\boldsymbol{y}'$ can be approximately treated as a Gaussian noise and independent with $\boldsymbol{y}$ or we can force $\boldsymbol{y}'$ to be independent Gaussian by training the neural network. Similarly, in a symmetric way, $\phi^{\dag}$ can be approximated in the following way: for any input $\boldsymbol{y}\in \mathcal{R}^q$,
\begin{equation}
\label{eq:phidag}
    \phi^{\dag}(\boldsymbol{y})= \psi_{\omega}^{-1}(\boldsymbol{y}\bigoplus \xi),
\end{equation} 
where $\xi$ is a standard Gaussian random vector with $p-q$ dimensions, and $\bigoplus$ represents the vector concatenation operation. 

Last, the macro-dynamics $f_q$ can be parameterized by a common feed-forward neural network $f_{\theta}$ with weight parameters $\theta$. Its number of input and output layer neurons equal to the dimensionality of the macro state $q$. It has two hidden layers, each with 64 neurons, and the output is transformed using LeakyReLU. The detailed architectures and parameter settings of $\psi_{\omega}$ and $f_{\theta}$ can be referred to the supporting information (support information section \ref{sec:parameter table}). To compute EI, this feed-forward neural network is regarded as a Gaussian distribution, which models the conditional probability $p(\boldsymbol{\hat{y}}_{t+1}|\boldsymbol{y}_t)$.

Previous work~\cite{zhang2022neural} has been proved that this framework merits many mathematical properties such as the macro-dynamics $f_{\theta}$ forms an information bottleneck, and the mutual information between $\boldsymbol{y}_t$ and $\hat{\boldsymbol{y}}_{t+1}$ approximates the mutual information between $\boldsymbol{x}_t$ and $\boldsymbol{x}_{t+1}$ when the neural networks are convergent such that the constraint in Equation \ref{old optimization} is required.

\subsubsection{Neural Information Squeezer Plus (NIS+)}
\label{sec:nis+}
To maximize EI defined in Equation \ref{old optimization}, we extend the framework of NIS to form NIS+. In NIS+, we first use the formula of mutual information and variational inequality to convert the maximization problem of mutual information into a machine learning problem, and secondly, we introduce a neural network $g_{\theta'}$ to learn the inverse macro-dynamics that is to use $\boldsymbol{y}_{t+1}=\phi(\boldsymbol{x}_{t+1})$ to predict $\boldsymbol{y}_t$ such that mutual information maximization can be guaranteed. Third, the probability re-weighting technique is employed to address the challenge of computing intervention for a uniform distribution, thereby optimizing the EI. All of these techniques compose neural information squeezer plus (NIS+).

Formally, the maximization problem under the constraint of inequality defined by Equation \ref{old optimization} can be converted as a minimization of loss function problem without constraints, that is:
\begin{gather}
\label{new optimization}
\min_{\omega,\theta,\theta'} \sum_{t=1}^{T-1} w(\boldsymbol{x}_t)||\boldsymbol{y}_t-g_{\theta'}(\boldsymbol{y}_{t+1})||+\lambda||\hat{\boldsymbol{x}}_{t+1}-\boldsymbol{x}_{t+1} ||,
\end{gather}
where $\omega,\theta,\theta'$ are the parameters of neural networks of $\psi_{\omega}$, $f_{\theta}$, and $g_{\theta'}$, respectively. $\boldsymbol{y}_t=\phi(\boldsymbol{x}_t)=Proj_q(\psi_{\omega}(\boldsymbol{x}_t))$ and $\boldsymbol{y}_{t+1}=\phi(\boldsymbol{x}_{t+1})=Proj_q(\psi_{\omega}(\boldsymbol{x}_{t+1}))$ are the macro-states.  $\lambda$ is a Lagrangian multiplier which will be taken as a hyper-parameter in experiments. $w(\boldsymbol{x}_t)$ is the inverse probability weights which is defined as:

\begin{equation}
    w(\boldsymbol{x}_t)=\frac{\Tilde{p}(\boldsymbol{y}_t)}{p(\boldsymbol{y}_t)}=\frac{\Tilde{p}(\phi(\boldsymbol{x}_t))}{p(\phi(\boldsymbol{x}_t))},
\end{equation}
where $\Tilde{p}$ is the new distribution of macro-states $\boldsymbol{y}_t$ after intervention for $do(\boldsymbol{y}_t\sim U_q)$, and $p$ is the natural distribution of the data. In practice, $p(\boldsymbol{y}_t)$ is estimated through kernel density estimation (KDE)\cite{rosenblatt1956remarks}(Please refer method section \ref{sec:KDE}). The approximated distribution, $\Tilde{p}(\boldsymbol{y}_t)$, is assumed to be a uniform distribution, which is characterized by a constant value. Consequently, the weight $w$ is computed as the ratio of these two distributions.

We can prove a mathematical theorem to guarantee such problem transformation as mentioned below:

\newtheorem{mythm}{Theorem}[section]
\begin{mythm}[Problem Transformation Theorem]
\label{thm:optimize}
For a given value of $q$, assuming that $\omega^\ast, \theta^\ast$, and $\theta'^\ast$ are the optimal solutions to the unconstrained objective optimization problem defined by equation (\ref{new optimization}). Then $\phi^\ast\equiv Proj_q(\psi_{\omega^\ast}),f^\ast_q\equiv f_{\theta^\ast}$, and $\phi^{\dag,\ast}(\cdot)\equiv \psi^{-1}_{\omega^\ast}(\cdot\bigoplus \xi)$, where $\xi\sim \mathcal{N}(0,I_{p-q})$ are the optimal solution to the constrained objective optimization problem Eq.(\ref{old optimization}).

\end{mythm}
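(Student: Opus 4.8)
The plan is to verify that the two summands of the unconstrained objective in Eq.~(\ref{new optimization}) implement, respectively, the effectiveness constraint and the EI-maximization goal of Eq.~(\ref{old optimization}), so that a minimizer of the former yields a feasible maximizer of the latter. The second term $\lambda\|\hat{\boldsymbol{x}}_{t+1}-\boldsymbol{x}_{t+1}\|$ is the Lagrangian relaxation of the inequality constraint $\|\hat{\boldsymbol{x}}_{t+1}-\boldsymbol{x}_{t+1}\|<\epsilon$, so I would invoke Lagrangian duality (KKT) to argue that for a suitable multiplier $\lambda$ the unconstrained minimizer meets the constraint. The genuine content lies in showing that the reweighted first term is (the negative of) a tight variational surrogate for the effective information $\mathcal{J}(f_q)$, so that driving it down drives $\mathcal{J}$ up.

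For that core step I would first rewrite $\mathcal{J}(f_q)$, the EI under the intervention $do(\boldsymbol{y}_t\sim U_q)$, as the mutual information $I(\boldsymbol{y}_t;\hat{\boldsymbol{y}}_{t+1})$ and expand it as $H(\boldsymbol{y}_t)-H(\boldsymbol{y}_t\mid \boldsymbol{y}_{t+1})$. Since the intervention fixes $\boldsymbol{y}_t$ to a uniform law, the marginal entropy $H(\boldsymbol{y}_t)$ is a constant independent of the parameters, so maximizing $\mathcal{J}$ is equivalent to minimizing the reverse conditional entropy $H(\boldsymbol{y}_t\mid \boldsymbol{y}_{t+1})$; this is precisely why an auxiliary \emph{inverse} predictor $g_{\theta'}$, mapping $\boldsymbol{y}_{t+1}$ back to $\boldsymbol{y}_t$ rather than the forward dynamics $f_\theta$, appears in the first term. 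Next I would apply the standard variational inequality $H(\boldsymbol{y}_t\mid \boldsymbol{y}_{t+1})\le E[-\log q(\boldsymbol{y}_t\mid \boldsymbol{y}_{t+1})]$, valid for any conditional density $q$ with equality when $q$ equals the true reverse kernel; modelling $g_{\theta'}$ as a Gaussian predictor turns the right-hand side into the displacement penalty $\|\boldsymbol{y}_t-g_{\theta'}(\boldsymbol{y}_{t+1})\|$ up to additive constants. Because $\theta'$ is optimized freely, the bound is attained at the optimum, so minimizing over $\theta'$ recovers $H(\boldsymbol{y}_t\mid \boldsymbol{y}_{t+1})$ exactly.

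The remaining ingredient is the interventional expectation: the variational bound must be evaluated under $\tilde{p}=U_q$, not under the empirical law $p$ of the data. Here I would use importance reweighting, writing $E_{\tilde p}[\,\cdot\,]=E_p[w(\boldsymbol{x}_t)\,\cdot\,]$ with $w=\tilde p/p$ as in Eq.~(\ref{eq:probreweight}), which is exactly the weight multiplying the first summand. Combining the three facts — EI equals a constant minus the reverse conditional entropy under intervention, the Gaussian variational bound is tight in $\theta'$, and the weights convert the empirical sum into the interventional expectation — shows that minimizing the first term maximizes $\mathcal{J}(f_q)$. Finally I would reinstate the encoder/decoder structure: since $\phi^\ast=Proj_q(\psi_{\omega^\ast})$ and $\phi^{\dag,\ast}=\psi_{\omega^\ast}^{-1}(\cdot\bigoplus\xi)$ share the single invertible network $\psi_{\omega^\ast}$, the information-bottleneck and mutual-information-preservation results from the NIS analysis guarantee that the macro-level mutual information tracks its micro-level counterpart, closing the correspondence between the two optimization problems.

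The main obstacle I expect is making the tightness and exactness claims rigorous in the continuous setting: the variational bound is tight only if $g_{\theta'}$ can realize the true, generally non-Gaussian, reverse kernel, so I would need either a universal-approximation hypothesis on $g_{\theta'}$ or an explicit argument that the Gaussian family suffices under the invertible-network parameterization. A secondary difficulty is controlling the discrepancy introduced by estimating $p(\boldsymbol{y}_t)$ via KDE and by approximating $\tilde p$ with a genuine uniform law on a bounded region, which renders the weighting — and hence the equivalence — exact only in the infinite-data, exact-density limit; I would state the theorem at that idealized level and defer the finite-sample error to the empirical sections.
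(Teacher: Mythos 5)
Your proposal takes essentially the same route as the paper's proof: rewrite EI under the uniform intervention as $H(U_q)$ minus a reverse conditional entropy, bound that entropy with a variational inequality realized by a Gaussian inverse-dynamics predictor $g_{\theta'}$, convert the interventional expectation into a reweighted empirical sum via $w=\tilde p/p$, treat the prediction constraint with a Lagrange multiplier, and use the NIS bijection/concatenation lemmas to identify macro- and micro-level mutual information. The only differences are cosmetic: the paper applies the invertibility lemmas at the outset (replacing $\hat{Y}_{t+1}$ by $\hat{X}_{t+1}$ before bounding $H(\tilde{Y}_t|\tilde{X}_{t+1})$), states explicitly the invariance assumption $\tilde p(\boldsymbol{x}_{t+1}|\boldsymbol{y}_t)\approx p(\boldsymbol{x}_{t+1}|\boldsymbol{y}_t)$ that your reweighting step needs implicitly, and, instead of claiming tightness of the Gaussian bound, works with bounded variances $\sigma_{min},\sigma_{max}$ --- precisely the caveat you flag yourself.
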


The details of proof can be seen in support information section \ref{sec:optimize}.

\begin{figure}[htbp]
	\centering
    \includegraphics[width=0.9\textwidth]{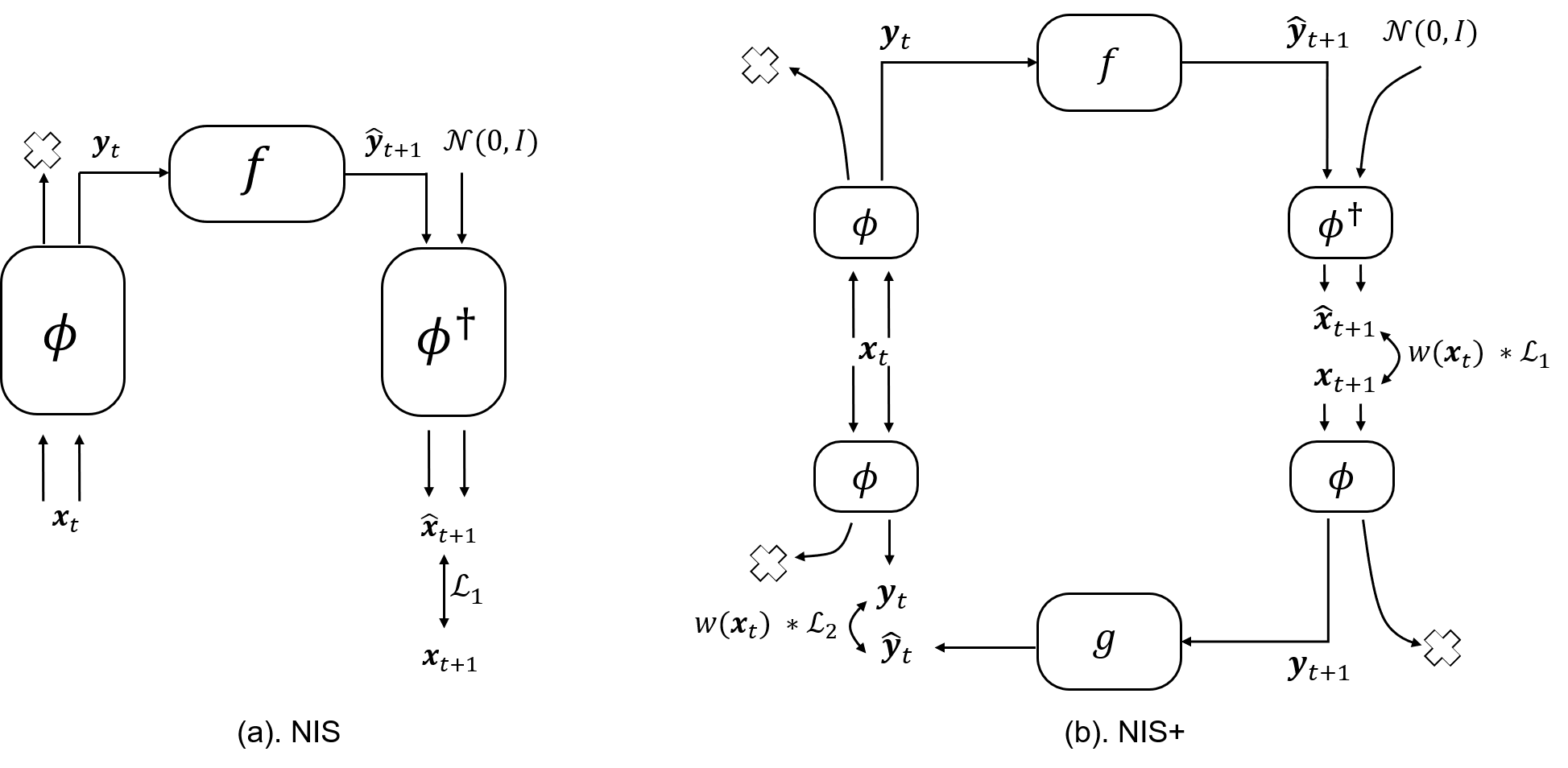}

\caption{The frameworks of NIS (a) and NIS+ (b). The boxes in the diagram represent functions or neural networks, while the arrow pointing to a cross signifies the operation of information dropping. $\boldsymbol{x}_t$ and $\boldsymbol{x}_{t+1}$ represent the observational data of micro-states, and $\hat{\boldsymbol{x}}_{t+1}$ represents the predicted micro-state. The macro-states, denoted as $\boldsymbol{y}_t=\phi(\boldsymbol{x}_{t})$ and $\boldsymbol{y}_{t+1}=\phi(\boldsymbol{x}_{t+1})$, are obtained by encoding the micro-states using the encoder. Similarly, the predicted macro-states, $\hat{\boldsymbol{y}}_t=\phi(\hat{\boldsymbol{x}}_{t})$ and $\hat{\boldsymbol{y}}_{t+1}=\phi(\hat{\boldsymbol{x}}_{t+1})$, are obtained by encoding the predictions of micro-states.}
\label{fig:frame}
\end{figure}

Discriminate from Figure \ref{fig:frame}(a), a new computational graph for NIS+ which is depicted in Figure \ref{fig:frame}(b) is implied by the optimization problem defined in Equation \ref{new optimization}. For given pair of data: $\boldsymbol{x}_t, \boldsymbol{x}_{t+1}$, two dynamics, namely the forward dynamics $f_{\theta}$ and the reverse dynamics $g_{\theta'}$ are trained by optimizing two objective functions $\mathcal{L}_1=\sum_{t=1}^{T-1} w(\boldsymbol{x}_t)||\boldsymbol{y}_t-g_{\theta'}(\boldsymbol{y}_{t+1})||$ and $\mathcal{L}_2=\sum_{t=1}^{T-1}||\hat{\boldsymbol{x}}_{t+1}-\boldsymbol{x}_{t+1} ||$, simultaneously. In the training process, the encoder, $\phi$ and the decoder, $\phi^{\dag}$ are shared.

This novel computational framework of NIS+ can realize the maximization of EI under the coarse-grained emergent space. Therefore, it can optimize an independent causal mechanism($f$) represented by $f_{\theta}$ on the emergent space. It can also be used to quantify CE in raw data once the macro-dynamics $f_{\theta}$ is obtained for different $q$.

\subsubsection{Extensions for Practical Computations}
\label{sec:extensions}
When we apply NIS+ to the data generated by various complex systems such as cellular automata and multi-agent systems, the architecture of encoder(decoder) should be extended for stacked and parallel structures. Fortunately, NIS+ is very flexible to different real scenarios which means the important properties such as Theorem \ref{thm:optimize} can be retained.

\begin{figure}[htbp]    
  \centering  
  \includegraphics[width=0.9\textwidth]{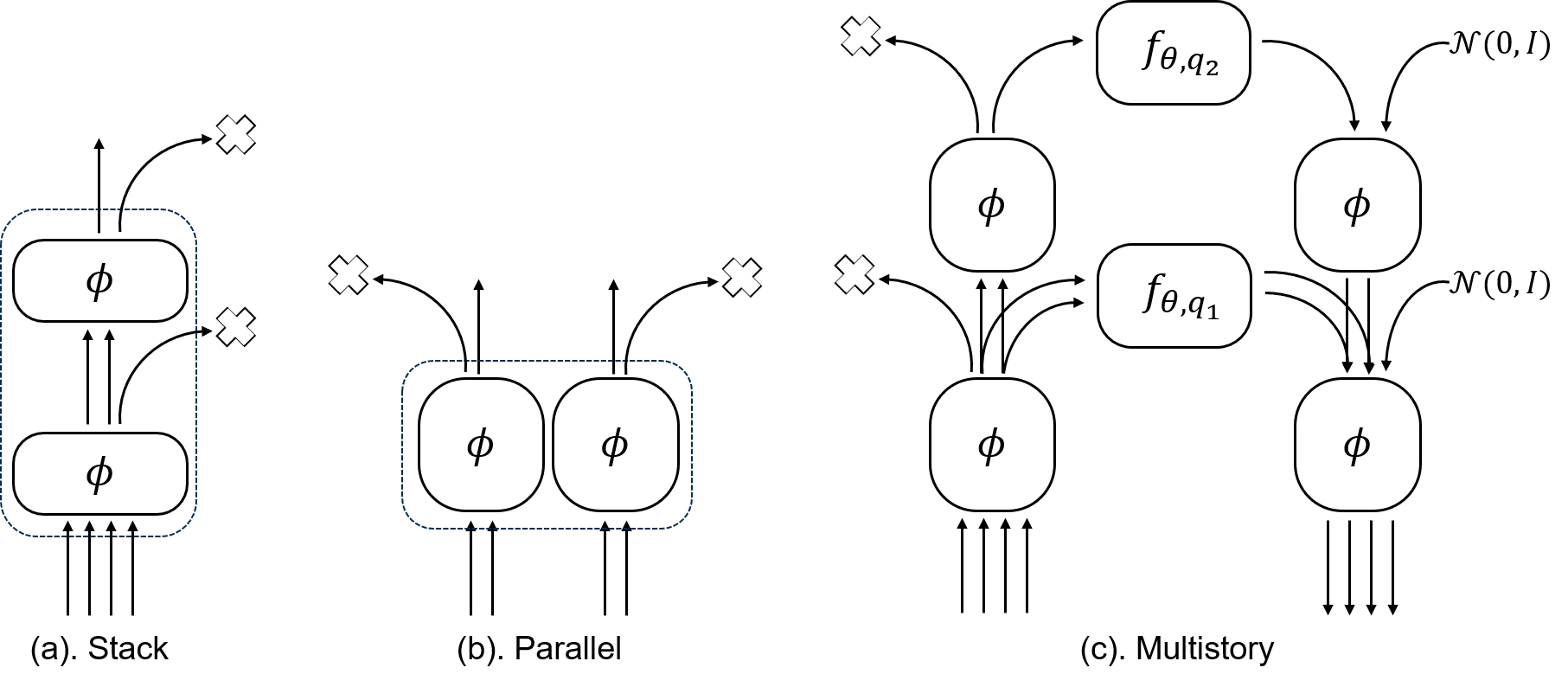}

  \caption{Three extended structures of the NIS+ encoder, denoted as (a) and (b), as well as the complete structure (c), are designed to facilitate the implementation of complex tasks. (a) represents a stacked encoder, where basic encoders are layered one on top of another. This arrangement allows for the filtering of input information in a hierarchical manner. (b) showcases a parallel encoder, where multiple basic encoders with shared parameters are combined to form a larger encoder. This parallel configuration enables the processing of input information in parallel, enhancing efficiency. (c) illustrates a multistory structure of the complete NIS+ framework. In this structure, the encoders and decoders are stacked, and the dynamics learners in different layers operate in parallel. This design facilitates simultaneous training for different dynamics learners with distinct dimensions and enables parallel searching for the optimal dimension $q$.}   
  \label{fig:structures}            
\end{figure}

Firstly, it is important to consider that when dealing with high-dimensional complex systems, discarding multiple dimensions at once can pose challenges for training neural networks. Therefore, we replace the encoder in NIS+ with a \textbf{Stacked Encoder}. As shown in Figure \ref{fig:structures}(a), this improvement involves stacking a series of basic encoders together and gradually discard dimensions, thereby reducing the training difficulty. 

In addition, complex systems like cellular automota are always comprise with multiple components with similar dynamical rules. Therefore, we introduce the \textbf{Parallel Encoder} to represent each component or groups of these components and combine the results together. Concretely, as shown in Figure \ref{fig:structures}(b), inputs may be grouped based on their physical relationships(e.g. adjacent neighborhood), and each group is encoded using a basic encoder. By sharing parameters among the encoders, neural networks can capture homogeneous coarse-graining rules efficiently and accurately. Finally, the macro variables obtained from all the encoders are concatenated into a vector to derive the overall macro variables. Furthermore, we can combine this parallel structures with other well-known architectures like convolutional neural networks.

To enhance the efficiency of searching for the optimal scale, we leverage the multiple scales of hidden space obtained through the stacked encoder by training multiple dynamics learners in different scales simultaneously. This forms the framework of \textbf{Multistory NIS+} as shown in Figure \ref{fig:structures}(c). This approach is equivalent to searching macro-dynamics for different $q$ and thereby avoiding to retrain the encoders.

Theorem \ref{thm:optimize_extension} guarantees that the important properties such as Theorem \ref{thm:optimize} can be retained for the extensions of stacked and parallel encoders. And a new universal approximation theorem (Theorem \ref{thm.universal}) can be proved such that multiple stacked encoders can approximate any coarse-graining function (any map defined on $\mathcal{R}^p\times\mathcal{R}^q$).

\begin{mythm}[Problem Transformation in Extensions of NIS+ Theorem]
\label{thm:optimize_extension}
When the encoder of NIS+ is replaced with an arbitrary  combination of stacked encoders and parallel encoders, the conclusion of Theorem 2.1 still holds true.
\end{mythm}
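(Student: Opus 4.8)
The plan is to reduce any extended encoder/decoder pair back to the \emph{canonical form} on which the proof of Theorem \ref{thm:optimize} rests. Inspecting that proof, the only structural facts it uses are that the encoder can be written as $\phi = Proj_q(\Psi)$ for some globally invertible $\Psi:\mathcal{R}^p\rightarrow\mathcal{R}^p$, and that the decoder has the matching form $\phi^{\dag}(\cdot)=\Psi^{-1}(\cdot\bigoplus\xi)$ with $\xi\sim\mathcal{N}(0,I_{p-q})$; given these, the equivalence between the unconstrained problem (\ref{new optimization}) and the constrained problem (\ref{old optimization}) follows. Hence it suffices to prove the following lemma: \emph{every encoder built from an arbitrary combination of stacked and parallel basic encoders, together with its associated decoder, admits the canonical representation above for a suitable invertible $\Psi$.} Once this is established, the conclusion of Theorem \ref{thm:optimize} transfers verbatim, since the encoder and decoder maps themselves (and therefore the objective values $\mathcal{L}_1,\mathcal{L}_2$) are literally unchanged by the re-expression.

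For the \textbf{stacked} case I would argue by induction on the number of layers, so it is enough to treat two blocks $\phi=\phi_2\circ\phi_1$ with $\phi_i=Proj_{q_i}(\psi_{i})$ and $q_1\geq q_2=q$. The key device is to lift the intermediate map to the full space: define $\tilde\psi_2\equiv \psi_2\oplus\mathrm{Id}_{p-q_1}$, i.e.\ $\psi_2$ acting on the first $q_1$ coordinates and the identity on the remaining $p-q_1$. Then $\Psi\equiv\tilde\psi_2\circ\psi_1$ is invertible, and because $q_2\leq q_1$ the leading $q_2$ coordinates of $\Psi(\boldsymbol{x})$ coincide with those produced by the genuine intermediate projection, giving $\phi=Proj_{q}(\Psi)$. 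For the decoder one checks that $\Psi^{-1}(\boldsymbol{y}\bigoplus\xi)=\psi_1^{-1}\circ\tilde\psi_2^{-1}(\boldsymbol{y}\bigoplus\xi)$ reproduces $\phi_1^{\dag}\circ\phi_2^{\dag}$ once the padding $\xi\sim\mathcal{N}(0,I_{p-q})$ is split as $\xi=\xi_2\bigoplus\xi_1$ into independent blocks $\mathcal{N}(0,I_{q_1-q})$ and $\mathcal{N}(0,I_{p-q_1})$, which play exactly the roles of the two decoders' Gaussian paddings. Independence and the dimension count follow from the product structure of the standard Gaussian.

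For the \textbf{parallel} case, suppose the input is partitioned into groups of sizes $p_i$ mapped by shared blocks $\psi_0$ to macro-parts of size $q_i$ (with $\sum_i p_i=p$, $\sum_i q_i=q$). The block-diagonal map $\Psi_0\equiv\psi_{1}\oplus\cdots\oplus\psi_{m}$ is invertible regardless of parameter sharing, since each invertible block contributes an invertible diagonal block; sharing merely restricts the attainable $\Psi_0$ but never breaks invertibility. I would then insert a fixed permutation $P$ that gathers the $q_i$ retained coordinates of each block into the leading $q$ positions, so that $\Psi\equiv P\circ\Psi_0$ is invertible and $\phi=Proj_q(\Psi)$; the matching decoder $\Psi^{-1}(\cdot\bigoplus\xi)=\Psi_0^{-1}(P^{-1}(\cdot\bigoplus\xi))$ scatters $\boldsymbol{y}$ and the concatenated Gaussian back into per-block macro/noise pairs and applies $\psi_0^{-1}$ block-wise, recovering the parallel decoder. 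Since both constructions output pairs of the canonical form, and the class of such pairs is closed under both operations, a structural induction over an arbitrary composition of stacked and parallel blocks yields a single global invertible $\Psi$, after which Theorem \ref{thm:optimize} applies. The main obstacle I anticipate is purely the bookkeeping that makes the canonical form exact rather than approximate: the identity-extension of intermediate maps in the stacked case and, especially, the explicit permutation aligning the parallel projection with $Proj_q$, together with verifying that the single padding $\mathcal{N}(0,I_{p-q})$ decomposes into independent standard Gaussians of precisely the dimensions demanded by each constituent decoder.
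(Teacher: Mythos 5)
Your proposal is correct, but it takes a genuinely different route from the paper's own proof. The paper never constructs a single global invertible map. Instead, it observes that the only step in the proof of Theorem \ref{thm:optimize} where the encoder/decoder architecture actually enters is the identity $I(Y_t,\hat{Y}_{t+1})=I(Y_t,\hat{X}_{t+1})$ (Equation \ref{eq:mi tran}), and it re-establishes that identity directly for the extended architectures via two mutual-information lemmas (Lemmas \ref{lemma.concatenation_stacked} and \ref{lemma.concatenation_parallel}): for stacked decoders by peeling off one layer at a time using Lemmas \ref{lem.invertible} and \ref{lemma.concatenation}, and for parallel decoders by noting that the block-wise map $\Psi_T(Y\bigoplus\xi)$ is invertible and the concatenated noise is independent; arbitrary combinations are then dismissed as alternating nestings of the two cases. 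You instead prove a canonical-form (representation) lemma: every combination of stacked and parallel encoders, together with its mirrored decoder, literally \emph{is} a basic NIS+ pair $Proj_q(\Psi)$, $\Psi^{-1}(\cdot\bigoplus\xi)$ for one invertible $\Psi$ assembled from identity-extensions ($\tilde\psi_2=\psi_2\oplus\mathrm{Id}_{p-q_1}$), block-diagonal sums, and a gathering permutation $P$, with the standard Gaussian padding factorizing exactly into the per-block paddings. Your premise is accurate --- the rest of the paper's proof of Theorem \ref{thm:optimize} uses $\phi$ only as a black-box function --- so the theorem transfers verbatim, and your bookkeeping (the identity lift, the permutation, the noise splitting into independent standard Gaussian blocks) is sound. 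As for what each approach buys: the paper's argument is shorter given its lemmas and stays purely at the information-theoretic level, while yours yields a strictly stronger structural conclusion --- the extended architectures add no expressive power beyond basic encoders --- so any present or future property proved for the basic NIS+ encoder--decoder pair is inherited automatically, and closure under arbitrary combination becomes a clean structural induction rather than an argument that must be repeated for each new architectural composition.
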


\begin{mythm}[Universal Approximating Theorem of Stacked Encoder]
\label{thm.universal}
For any continuous function $f$ which is defined on $K\times \mathcal{R}^p$, where $K\in \mathcal{R}^p$ is a compact set, and $p>q\in \mathcal{Z^+}$, there exists an integer $s$ and an extended stacked encoder $\phi_{p,s,q}: \mathcal{R}^p\rightarrow \mathcal{R}^q$ with $s$ hidden size and an expansion operator $\eta_{p,s}$ such that:

\begin{equation}
    \phi_{p,s,q}\simeq f,
\end{equation}
Thereafter, an extended stacked encoder with expansion operators is of universal approximation property which means that it can approximate and simulate any coarse-graining function defined on $\mathcal{R}^p\times \mathcal{R}^q$.
\end{mythm}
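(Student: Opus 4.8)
The plan is to prove universality in three stages: (i) reduce the approximation of the (generally non-injective) coarse-graining target $f$ to the approximation of an auxiliary \emph{injective} map living in the expanded hidden space; (ii) realize that injective map as the first $q$ coordinates of an ambient diffeomorphism of $\mathcal{R}^s$; and (iii) invoke the universal approximation property of the RealNVP-type invertible network to build that diffeomorphism from stacked coupling layers, then read off the macro-state with $Proj_q$. Throughout I would work uniformly on the compact set $K$, so all errors can be controlled in the sup norm, and I would begin by mollifying $f$ to a smooth $\tilde f$ with $\|\tilde f - f\|_\infty < \varepsilon$ on $K$, which makes the subsequent embedding arguments available.

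First I would fix the expansion operator $\eta_{p,s}:\mathcal{R}^p\rightarrow\mathcal{R}^s$ to be the zero-padding embedding $\eta_{p,s}(\boldsymbol{x})=(\boldsymbol{x},0,\dots,0)$, with $s$ chosen large enough that it suffices to take $s\ge p+q$. The obstruction to approximating $f$ directly with an invertible map is structural: $f$ must discard information and hence cannot be injective, whereas $\psi_{\omega}$ is a bijection. The expansion is exactly the device that circumvents this. I would define the lifted map $G:K\rightarrow\mathcal{R}^s$ by $G(\boldsymbol{x})=(\tilde f(\boldsymbol{x}),\boldsymbol{x},0,\dots,0)$. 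Because the retained copy of $\boldsymbol{x}$ recovers the input, $G$ is injective and smooth on $K$; this is the key step that trades the non-invertibility of $f$ for surplus ambient dimensions, while keeping $Proj_q\circ G=\tilde f$.

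Next I would extend $G$ from the $p$-dimensional slice $\eta_{p,s}(K)$ to a diffeomorphism $\Psi$ of $\mathcal{R}^s$ agreeing with $G$ on that slice. Since $\eta_{p,s}(K)$ is a compact embedded image sitting in codimension at least $q$, a smooth injective map into $\mathcal{R}^s$ can be promoted to a smooth embedding and then to an ambient diffeomorphism by a tubular-neighborhood together with an isotopy-extension argument; the extra coordinates furnish enough room for such an extension to exist. By construction $Proj_q\circ\Psi\circ\eta_{p,s}=\tilde f$ on $K$, hence approximates $f$ there.

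Finally I would approximate $\Psi$ by a composition of coupling layers, appealing to the known universality of coupling-based invertible networks as diffeomorphism approximators on compact sets, together with the universal approximation theorem for the feed-forward conditioners inside each coupling block, to obtain $\psi_{\omega}\approx\Psi$ uniformly on a compact region containing $\eta_{p,s}(K)$. Distributing the coordinate-dropping over several blocks yields precisely the stacked encoder $\phi_{p,s,q}=Proj_q\circ\psi_{\omega}\circ\eta_{p,s}$, and composing the three approximation errors (mollification, ambient extension, coupling-layer realization) gives $\phi_{p,s,q}\simeq f$. The main obstacle I anticipate is the middle step: extending the injective lift to a genuine diffeomorphism while keeping the first $q$ coordinates pinned to $\tilde f$, and then certifying that the restricted coupling-layer architecture, which freezes half of the coordinates at each block, can still approximate that particular diffeomorphism to the required accuracy. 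Careful bookkeeping of the dimension budget $s$ relative to $p$ and $q$ will be needed to make the geometric extension and the coupling-layer approximation simultaneously feasible.
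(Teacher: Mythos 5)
Your proposal is correct, but it takes a genuinely different route from the paper. The paper's proof is algebraic: it invokes the classical one-hidden-layer universal approximation theorem to write $f \approx W'\cdot \sigma(W\cdot+b)$, then proves a lemma (via the SVD $W=U\Lambda V$) showing that any linear map---the only non-invertible ingredient---can be simulated in the form (invertible net) $\circ$ (expansion) $\circ$ (projection) $\circ$ (invertible net), using the invertible-network results of Teshima et al.\ only to simulate the orthogonal and invertible matrix factors; the outcome is a stack of two extended encoders whose hidden size $s$ is the width from the classical theorem and therefore grows as the accuracy improves. You instead resolve the non-injectivity of $f$ geometrically: lift it to the graph-type embedding $G(\boldsymbol{x})=(\tilde f(\boldsymbol{x}),\boldsymbol{x},0,\dots,0)$, extend $G$ to an ambient diffeomorphism $\Psi$ of $\mathcal{R}^s$ by isotopy extension, and then invoke the sup-universality of coupling flows for diffeomorphisms; this buys a fixed dimension budget $s=p+q$ that is independent of the target accuracy, and it isolates exactly why the expansion operator removes the invertibility obstruction. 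Three remarks. First, the step you flag as the main obstacle---extending $G$ to a diffeomorphism while pinning the first $q$ coordinates---admits a one-line solution that bypasses tubular neighborhoods and isotopy extension entirely: the shear $(\boldsymbol{v},\boldsymbol{u})\mapsto(\boldsymbol{v},\boldsymbol{u}+\tilde f(\boldsymbol{v}))$ is already a diffeomorphism of $\mathcal{R}^{p+q}$ sending $(\boldsymbol{x},0)$ to $(\boldsymbol{x},\tilde f(\boldsymbol{x}))$, and the block swap that brings $\tilde f(\boldsymbol{x})$ into the first $q$ slots is a product of three further shears (the standard rotation-from-shears identity $(\boldsymbol{u},\boldsymbol{w})\mapsto(\boldsymbol{w},-\boldsymbol{u})$), each of which is literally an additive RealNVP coupling layer; with this observation your argument reduces to the classical universal approximation theorem applied to the conditioner network, and the appeal to full diffeomorphism universality becomes optional. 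Second, your isotopy-extension version remains sound and has the side benefit that $\Psi$ can be taken compactly supported and isotopic to the identity, hence inside the class that the coupling-flow universality theorem actually covers---a point worth making explicit, since affine coupling layers have positive Jacobian determinant and, by a degree argument, orientation-preserving diffeomorphisms cannot uniformly approximate orientation-reversing ones on sets with interior. Third, note that the paper's $\eta_{p,s}$ duplicates coordinates rather than zero-pads; the two conventions differ by a linear shear that can be absorbed into the subsequent invertible network, so your choice is harmless but deserves a sentence.
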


The extended stacked encoder $\phi_{p,s,q}$ refers to stacking two basic encoders together and gradually reducing dimensions, encoding the input from $p$ dimensions to $q$ dimensions, with an intermediate dimension of $s$. Besides basic operations such as invertible mapping and projection, a new operation, vector expansion($\eta_{p,s}$, e.g., $\eta_{2,5}(1,2,3)=(1,2,3,1,2)$), should be introduced, and locate in between the two encoders. All the proves of these theorems are referred to support information section \ref{sec:op extension} and \ref{sec:universal}.

\subsubsection{Training NIS+}
\label{sec:train nis+}
In practice, two stages are separated during training process for NIS+ and all extensions. The first stage only trains the forward neural networks (the upper information channel as shown in Figure \ref{fig:nis+}) such that the loss $\mathcal{L}_1$ is small enough. Then, the second stage which only trains the neural networks for the reverse dynamics (the lower information channel as shown in Figure \ref{fig:nis+}) is conducted. Because the trained inverse dynamics $g_{\theta'}$ is never used, the second stage is for training $\phi_{\omega}$ in essence.

\section{Acknowledgement}
The authors would like to acknowledge all the members who participated in the "Causal Emergence Reading Group" organized by the Swarma Research and the "Swarma - Kaifeng" Workshop. We are especially grateful to Professors Yizhuang You, Peng Cui, Jing He,  Dr.Chaochao Lu, and Dr. Yanbo Zhang for their invaluable contributions and insightful discussions throughout the research process. We also thank the large language model ChatGPT for its assistance in refining our paper writing.\\


\clearpage
\setcounter{section}{0}
\setcounter{figure}{0}
\setcounter{equation}{0}
\textbf{\huge Support Information}
\newtheorem{lem}{Lemma}[section]
\section{A Brief Introduction of Causal Emergence Theory}
The basic idea of Hoel's theory is that we have two different views (micro and macro) of a same Markov dynamical system, and CE occurs if the macro-dynamics has stronger causal effect than the micro-dynamics. As shown in Figure \ref{fig:causalemergence}, for example, the micro-scale system is composed by many colliding balls. The macro-scale system is a coarse-grained description of the colliding balls with colorful boxes where each box represents the number of balls within the box. In Figure \ref{fig:causalemergence}, the vertical axis represents the scale(micro- and macro-), and the horizontal axis represents time, which depicts the evolution of the system's dynamics in both scales. All the dynamical systems considered are Markovian, therefore two temporal successive snapshots are necessary. The dynamics in both micro- and macro-scales are represented by transitional probability matrix(TPM): $f_m$ for micro-dynamics and $f_M$ for macro-dynamics. 

We can measure the strength of its causal effects for each dynamics(TPM) by effective information (EI). EI can be understood as an intervention-based version of
mutual information between two successive states in a dynamical system over time. For any TPM $f$, EI can be calculated by:
\begin{equation}
    \label{eq:definition_EI}
    \mathcal{J}=I(S_{t};S_{t-1}|do(S_{t-1}\sim U(S)))=\frac{1}{N}\sum_{i,j=1}^{N}\left(f_{ij}\log f_{ij}-f_{ij}\log \left(\frac{\sum_{k=1}^{N}f_{kj}}{N}\right)\right),
\end{equation}
where, $S_t$ is the random variable to represent the state of the system at time $t$, $do(S_{t-1}\sim U(S))$ represents the do-operator~\cite{Goldberg_2019} that intervene the state of the system at time $t-1$ to force that $S_{t-1}$ follows a uniform distribution on the value space $S$ of $S_t$. $f_{ij}$ is the probability at the $i$th row and the $j$th column. $N=|S|$ is the total number of states and is also the number of rows or columns. $\log$ represents the logarithmic operation with a base of 2. Equation \ref{eq:definition_EI} can be further decomposed into two terms: determinism and non-degeneracy which characterize that how the future state can be determined by the past state and vice versa, respectively\cite{hoel2013quantifying}.

With EI, we can compare two TPMs. If $\mathcal{J}(f_M)$ is larger than $\mathcal{J}(f_m)$, we say CE occurs in this dynamical system. In practice, we calculate another value called causal emergence to judge if CE takes place, that is:

\begin{equation}
\label{eq:causal_emergence}
\Delta \mathcal{J}=\mathcal{J}(f_M)-\mathcal{J}(f_m).
\end{equation}

Therefore, if $\Delta\mathcal{J}>0$, then CE occurs. $\Delta\mathcal{J}$ can be treated as the quantitative measure of emergence for Markov dynamics. An example of CE on Markov chain is shown in Figure \ref{fig:causalemergence}(b).

\section{Measures and calculations of EI and causal emergence for neural networks}
\label{sec:ei for nn}
The measure of EI of discrete Markov chains can be calculated by Equation \ref{eq:definition_EI}, however, it is not suitable for neural networks because a neural network is a deterministic function. Therefore, we need to extend the definition of EI for general neural networks. We continue to use the method defined in \cite{zhang2022neural}. The basic idea is to understand a neural network as a Gaussian distribution conditional on the input vector with the mean as the deterministic function of the input values, and the standard deviations as the mean square errors of this neural network on the training or test datasets.

In general, if the input of a neural network is $X=(x_1,x_2,\cdot\cdot\cdot,x_n)\in [-L,L]^n$, which means $X$ is defined on a hyper-cube with size $L$, where $L$ is a very large integer. The output is $Y=(y_1,y_2,\cdot\cdot\cdot,y_m)$, and $Y=\mu(X)$. Here $\mu$ is the deterministic mapping implemented by the neural network: $\mu: \mathcal{R}^n\rightarrow \mathcal{R}^m$, and its Jacobian matrix at $X$ is $\partial_{X'} \mu(X)\equiv \left\{\frac{\partial \mu_i(X')}{\partial X'_j}\left|_{X'=X}\right.\right\}_{nm}$. If the neural network can be regarded as a Gaussian distribution conditional on given $X$, then the effective information (EI) of the neural network can be calculated in the following way:
\begin{equation}
\begin{aligned}
\label{eq:ei_L}
        EI_L(\mu)=I(do(X\sim U([-L,L]^{n};Y)\approx & -\frac{m+m \ln (2\pi)+ \sum_{i=1}^m\sigma_i^2}{2}\\
        & +n\ln (2L) + \mathbf{E}_{X\sim U([-L,L]^n} \left(\ln |\det(\partial_{X'} \mu(X))|\right).
\end{aligned}
\end{equation}
where, $\Sigma=diag(\sigma_1^2,\sigma_2^2,\cdot\cdot\cdot,\sigma_m^2)$ is the co-variance matrix, and $\sigma_i$ is the standard deviation of the output $y_i$ which can be estimated by the mean square error of $y_i$ under given $x_i$. $U([-L,L]^n)$ is the uniform distribution on $[-L,L]^n$, and $|\cdot|$ is absolute value, and $\det$ is determinant. If $\det(\partial_{X'} \mu(X))\equiv 0$ for all $X$, then we set $EI\approx 0$.

However, Equation \ref{eq:ei_L} can not be applied in real cases directly because it will increase as the dimension of input $n$ or output $m$ increases\cite{zhang2022neural}. Therefore, larger EI is always expected for the models with higher dimension. The way to solve this problem is by dividing the input dimension to define dimension averaged effective information(dEI) and is denoted as $\mathcal{J}$:
\begin{equation}
    \mathcal{J}_L = \frac{EI_L(\mu)}{n}
\end{equation}

When the numbers of input and output are identical ($m=n$, this condition is always hold for all the results reported in the main text), Equation \ref{eq:ei_L} becomes:
\begin{equation}
    \label{eq:dEI}
    \mathcal{J}_L(\mu)=-\frac{1+\ln(2\pi)+\sum_{i=1}^n\sigma_i^2/n}{2}+\ln(2L)+\frac{1}{n}\mathbf{E}_{X\sim U([-L,L]^n} \left(\ln |\det(\partial_{X'} f(X))|\right).
\end{equation}

We always use this measure to quantify EI for neural networks in the main text. However, this measure is also not perfect because it is dependent on a free parameter $L$, the range of the domain for the input data. Our solution is to calculate the dimension averaged CE to eliminate the influence of $L$. For macro-dynamics $f_M$ with dimension $q$ and micro-dynamics $f_m$ with dimension $p$, we define dimension averaged CE as:
\begin{equation}
    \label{eq.dCE}
    \Delta\mathcal{J}_L(f_M,f_m)=\mathcal{J}_L(f_M)-\mathcal{J}_L(f_m)=\frac{EI_L(f_M)}{q}-\frac{EI_L(f_m)}{p}.
\end{equation}

If $f_M$ and $f_m$ are parameterized by neural networks of $\mu_M$ with dimension $q$ and $\mu_m$ with dimension $p$, then

\begin{equation}
    \label{eq.dCE_gauss}
    \begin{aligned}
        \Delta\mathcal{J}=&\left(\frac{1}{q}\mathbf{E}_{X_M}\ln|\det \partial_{X_M}\mu_M|-\frac{1}{p}\mathbf{E}_{X_m}\ln|\det \partial_{X_m}\mu_m|\right)\\
        &-\left(\frac{1}{q}\sum_{i=1}^{q}\ln\sigma_{i,M}^2-\frac{1}{p}\sum_{i=1}^{p}\ln\sigma_{i,m}^2\right),
    \end{aligned}
\end{equation}
where $\sigma_{i,M}$ and $\sigma_{i,m}$ are the standard deviations for $\mu_M$ and $\mu_m$ on the $i$th dimension, respectively. At this point, the influences of the input or output dimensions and the parameter $L$ has been completely eliminated, making it a more reliable indicator. Therefore, the comparisons between the learned macro-dynamics reported in the main text are reliable because the measure of dimension averaged CE is used.

In practice, the first step is to select a sufficiently large value for $L$ to ensure that all macro-states are encompassed within the region $[-L, L]^q$. Subsequently, the determinant of Jacobian matrices can be estimated using Monte Carlo integration, which involves sampling data points within this region. The standard deviations can be estimated using a test dataset, and the probability reweighting technique is employed to account for interventions and ensure a uniform distribution.

\section{Invertible neural network in encoder(decoder)}
In NIS, both the encoder($\phi$) and the decoder($\phi^{\dag}$) use the invertible neural network module RealNVP\cite{Dinh_Sohl-Dickstein_Bengio_2016}. 

\begin{figure}
    \centering
    \includegraphics[width=0.8\textwidth]{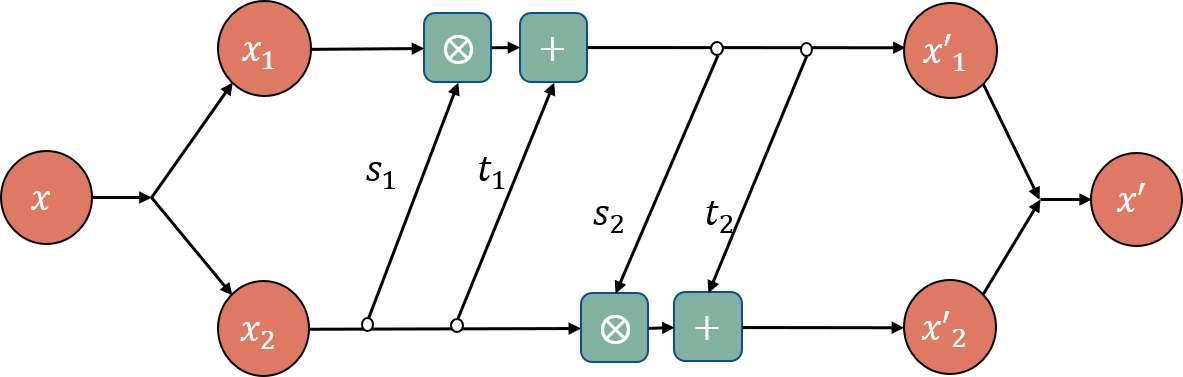}
    \caption{Structure diagram of RealNVP.}
    \label{fig:realnvp}
\end{figure}

Concretely, if the input of the module is $\mathbf{x}$ with dimension $p$ and the output is $\mathbf{x'}$ with the same dimension, then the RealNVP module can perform the following computation steps:

\begin{equation}
    \label{eq.x1x2}
    \left\{
    \begin{aligned}
        &\mathbf{x_1} = \mathbf{x}_{1:m}\\
        &\mathbf{x_2} = \mathbf{x}_{m:p}
    \end{aligned}
    \right.
\end{equation}
where, $m$ is an integer in between 1 and $p$.
\begin{equation}
    \label{eq:sandt}
    \left\{
    \begin{aligned}
        &\mathbf{x'_1} = \mathbf{x_1}\bigotimes s_1(\mathbf{x_2})+t_1(\mathbf{x_2})\\
        &\mathbf{x'_2} = \mathbf{x_2}\bigotimes s_2(\mathbf{x'_1})+t_2(\mathbf{x'_1})
    \end{aligned}
    \right.
\end{equation}
where, $\bigotimes$ is element-wised time, $s_1,s_2,t_1$ and $t_2$ are feed-forward neural networks with arbitrary architectures, while their input-output dimensions must match with the data. For the experiments in this article, they each have two intermediate hidden layers. The input and output layers have the same number of neurons as the dimensions of the micro-state samples. Each hidden layer has 64 neurons, and the output of each hidden layer is transformed by the non-linear activation function LeakyReLU. In practice, $s_1$ or $s_2$ always do an exponential operation on the output of the feed-forward neural network to facilitate the inverse computation. 

Finally,
\begin{equation}
    \mathbf{x'}=\mathbf{x'_1}\bigoplus \mathbf{x'_2}
\end{equation}
It is not difficult to verify that all three steps are invertible. Equation \ref{eq:sandt} is invertible because the same form but with negative signs can be obtained by solving the expressions of $\mathbf{x_1}$ and $\mathbf{x_2}$ with $\mathbf{x'_1}$ and $\mathbf{x'_2}$ from Equation \ref{eq:sandt}.

To simulate more complex invertible functions, we always duplex the basic RealNVP modules by stacking them together. In the main text, we use duplex the basic RealNVP module by three times.

Due to the reversibility of RealNVP, if its output is used as input in a subsequent iteration, the new output will be the same as the original input. Therefore, the neural networks used in the encoder-decoder can share parameters. The process of encoding to obtain the macro-state involves projecting the forward output and retaining only the first few dimensions. On the other hand, the process of decoding to obtain the micro-state involves concatenating the macro-state with a normal distribution noise, expanding it to match the micro-state dimensions, and then feeding it back into the entire invertible neural network.

\section{KDE for probability reweighting}
\label{sec:KDE}
In order to use inverse probability weighting technique, we need to estimate the probability distribution of the samples. KDE(Kernel Density Estimation) is a commonly used estimation method that can effectively eliminate the influence of outliers on the overall probability distribution estimation. In this experiment, we estimate the probability distribution of macro-states $\boldsymbol{y}_t$. Below is an introduction to the operation details of KDE. For a sample $(x_1,x_2,\cdot\cdot\cdot,x_n)$, we have the following kernel density estimation:
\begin{equation}
    \label{eq.kde}
    \begin{aligned}
        \hat{f}_h(x)=\frac{1}{nh}\sum_{i=1}^n K(\frac{x-x_i}{h}).
    \end{aligned}
\end{equation}
$n$ represents the number of samples, the hyperparameter $h$ denotes the bandwidth, which is determined based on the rough range of the data and typically set to 0.05 in this article. $K$ is the kernel, specifically the standard normal density function. After obtaining the estimation function, each sample point is evaluated individually, resulting in the corresponding probability value for each sample point. By dividing the probability of the target distribution (uniform distribution) by our estimated probability value, we obtained the inverse probability weights corresponding to each sample point. To cover all sample points, the range of the uniform distribution needs to be limited by a parameter $L$, which ensures that a square with side length $2L$ can encompass all the sample points in all dimensions.

Another thing to note is that the weights obtained at this point are related to the sample size, so they need to be normalized. However, this will result in very small weight values corresponding to the samples. Therefore, it is necessary to multiply them by the sample quantity to amplify the weight values back to their normal scale. Then, multiply this weight value by the training loss of each sample to enhance the training of sparse regions, achieving the purpose of inverse probability weighting. Since the encoder parameters change with each iteration, causing the distribution of $\boldsymbol{y}_t$ to change, we will re-estimate the probability distribution of the entire sample using KDE every 1000 epochs (a total of at least 30000 epochs of iteration, the number of iterations may vary in different experiments).

\section{Calculation method of Rosas' $\Psi$}
\label{sec:Rosas psi}
We present here the computing method for Rosas' $\Psi$, a measurable indicator of CE proposed by Rosas~\cite{Rosas_Mediano_Jensen_Seth_Barrett_Carhart-Harris_Bor_2020}. Its definition is as follows:
\begin{equation}
    \label{eq.psi}
    \begin{aligned}
        \Psi(Y)=I(Y_t,Y_{t'})-\sum_j I(X_t^j,Y_{t'}),
    \end{aligned}
\end{equation}
where $Y$ represents the given macrostate, $Y_t$ and $Y_{t'}$ represent the macrostates at two consecutive time points, and $X_t^j$ represents the $j$-th dimension of the microstate at time $t$. According to Rosas' theory, Rosas' $\Psi$ being greater than 0 is a sufficient but not necessary condition for emergence to occur. In this paper, we employ the method proposed by Lu et al.~\cite{lu2020enhancing} to compute the continuous variable mutual information in the experimental configuration.

\section{Comparison models in the SIR experiment}
\label{sec:nn vae}
To compare the advantages of the NIS+ framework itself, we constructed a regular fully connected neural network (NN) with 33,668 parameters and a Variational Autoencoder (VAE) with 34,956 parameters in the SIR experiment (the NIS+ model consists of a total of 37,404 parameters, which includes the inverse dynamics component). NN consists of an input layer, an output layer, and four hidden layers with 4, 64, 128, 128, 64, 4 neurons respectively. NN+ refers to the NN that combines inverse probability weighting technique, with the same parameter framework. For each data point $\boldsymbol{x}_t$, we estimate the KDE probability distribution and calculate the inverse probability weighting.  Finally, the weights are multiplied with the loss to give different training weights to samples with different densities. 

VAE is a generative model that maps micro-level data to a normal distribution in latent space. During each prediction, it samples from the normal distribution and then decodes it~\cite{kingma2013auto}. It consists of an encoder and a decoder, both of which have 4 hidden layers, each layer containing 64 neurons. The input and predicted output are both four-dimensional variables, and the latent space also has four dimensions, with two dimensions for mean and two dimensions for variance. The macro-dynamic learner fits the dynamical changes of the two-dimensional mean variables. During the training process, the loss function consists of two parts: reconstruction error, which is the MAE error between the predicted output and the true label, and the KL divergence between the normal distribution of the latent space samples and the standard normal distribution. These two parts are combined with a 2:1 ratio and used for gradient backpropagation. VAE+ combines inverse probability weighting and bidirectional dynamic learning in the training of VAE, and adopts the same training method as NIS+. We keep the variance variables obtained from encoding unchanged, and use inverse probability weighting to obtain weight factors for the mean variables. Additionally, a reverse dynamic learner is constructed to predict $\boldsymbol{y}_{t}$ given $\boldsymbol{y}_{t+1}$. 

Because NN+ does not have a multi-scale framework for encoding and decoding, it directly estimates and calculates inverse probability weights at the micro-scale data level, and cannot improve the forward dynamics EI through training reverse dynamics. In VAE+, we focus on the dynamics trained in the latent space, aiming to estimate and calculate inverse probability weights in the latent space, and can optimize the encoder by training the reverse dynamics in the latent space to improve EI.The comparison results are shown in Figure \ref{fig:sir}(b) and (d).

\section{Integrated Gradients for attribution}
\label{sec:IG}
Here we introduce a method called Integrated Gradients\cite{Sundararajan_Taly_Yan_2017} that is used to explain the connection between macro-states and micro-states. To formally describe this technique, let us consider a function $F: \mathcal{R}^p \to [0,1]$ that represents a deep network. Specifically, let $x \in \mathcal{R}^p$ denote the input under consideration, and $x' \in \mathcal{R}^p$ denote the baseline input. In both the fMRI and Boids experiments, the baseline is set to 0.

We compute the gradients at each point along a straight-line path (in $\mathcal{R}^p$) from the baseline $x'$ to the input $x$. By accumulating these gradients to obtain the final measure of integrated gradients(IG). More precisely, integrated gradients are defined as the path integral of the gradients along the straight-line path from the baseline $x'$ to the input $x$.

The integrated gradient along the i-th dimension for an input x and baseline $x'$ is defined as follows. Here, $\partial F(x)/\partial x_i$ represents the gradient of $F(x)$ along the $i$-th dimension. The Integrated Gradients(IG) that we have computed are shown in Equation \ref{eq.inte_gra}.

\begin{equation}
    \label{eq.inte_gra}
    \begin{aligned}
        IG_i(x)=(x_i-x_i')\times \int^1_{\alpha=0}\frac{\partial F(x'+\alpha\times(x-x'))}{\partial x_i} \mathrm{d}\alpha.
    \end{aligned}
\end{equation}

\section{fMRI time series data description in detail}
\label{fMRI_preprocessing}

AOMIC is an fMRI collection which composes AOMIC PIOP1, AOMIC PIOP2 and AOMIC ID1000\cite{Snoek2021}. 

The AOMIC PIOP2 collects subjects' data for mutiple tasks such as emotion-matching, working-memory and so on. Here, we just use 50 subjects' resting fMRI data since some time steps of other subject's have been thrown outby removing the effects of artificial motion, global signal, and white matter signal and cerebrospinal fluid signal using fMRIPrep results\cite{Esteban2019,Esteban2020}, which means a difficulty in time alignment.

The AOMIC ID 1000 collected when 881 subjects are watching movies. It contains both raw data and preprocsessed data(See more experimental and preprocsessing details in \cite{Snoek2021}). Here, we should notice that the movie is edited in a way for no clear semantic meanings but just some concatenated images from the movie. Therefore, it is expected that subjects' brain activation patterns shouldn't respond to some higher-order functions such as semantic understandings.

When working with fMRI data, the high dimensionality of the data (over 140,000 dimensions) poses computational challenges. To address this, dimension reduction techniques through brain atlas are employed to make the calculations feasible. We first preprocessed the data by removing the effects of artificial motion, global signal, and white matter signal and cerebrospinal fluid signal using fmriPrep results contained in AOMIC ID1000 \cite{Esteban2019,Esteban2020}. Simultaneously, we detrend and normalize the data during this process. By doing the whole preprocessing process, we also throw out 51 subjects' data since some time steps of these subject's have been removed. Furthermore, to facilitate the investigation of the correlation between the required brain function for watching movie clips and the emergent macro dynamics, we employed the Schaefer atlas to divide the brain into 100 functional areas. Also, we have committed the philosophy of anatomical Atlas, which seems to suggest the converging evidence(See support information section \ref{sec:AAL Results}). All of these preprocessing steps were performed using Nilearn\cite{scikit-learn,Abraham2014}. During the training process, the only difference is the use of PCA in dimension reduction to handle the KDE approximation required for reweighting if $q>10$.

\section{Preceding instructions}
Before we prove the theorem, let us first introduce the symbols that will be used in the following paragraphs. We will use capital letters to represent the corresponding random variables. For example, $X_t$ represents the random variable of the micro-state $\boldsymbol{x}_t$ at time $t$, and $Y_{t+1}$ represents the random variable corresponding to the macro-state $\boldsymbol{y}_{t+1}$. For any random variable $X$, $\Tilde{X}$ represents the same random variable $X$ after intervention. $\hat{X}$ means the prediction for $X$ given by neural networks.

Second, to understand how the intervention on $Y_t$ can affect other variables, a causal graph derived from the framework of NIS and NIS+ and depicts the causal relations among variables is very useful, this is shown in Figure \ref{fig:causal graph}. In which, $X_t$ and $\hat{X}_{t+1}$ denote the input random variable and predicted output random variable of the NIS+ on a micro scale, while $Y_t$ and $\hat{Y}_{t+1}$ represent the input and output of $f_{q}$ on a macro scale. After intervening on $Y_t$, $\Tilde{X}_t$, $\Tilde{X}_{t+1}, \Tilde{Y}_t, \Tilde{Y}_{t+1}$ represent different micro or macro variables under the intervention $do(Y_t\sim U(Y))$, corresponding to $X_t$,$\hat{X}_{t+1}, Y_t$, and $\hat{Y}_{t+1}$, respectively.

\begin{figure}
    \centering
    \includegraphics[width=0.8\textwidth]{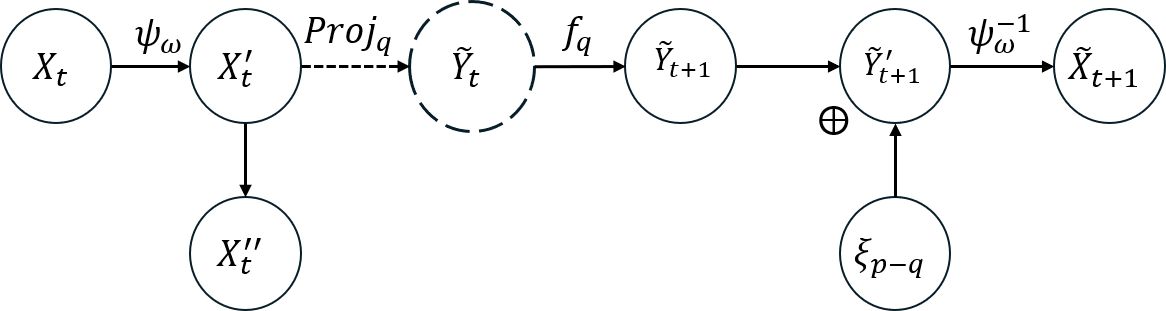}
    \caption{The causal graph among random variables after intervention on $Y_t$ according to the framework of NIS or NIS+. Because the intervention on $Y_t$ can only affect the variables in the upper part of NIS+ framework, we ignore the variables in the lower part. In the diagram, $X_t'$ represents the random variable obtained after reversible transformation of $X_t$, $X''_t$ represents the variable directly discarded during the projection process, $\Tilde{Y}_{t+1}'$ represents a new variable composed of $\Tilde{Y}_{t+1}$ concatenated with a standard normal distribution, and $\xi_{p-q}$ represents a $p-q$ dimensional standard normal distribution. The dashed circular shape in the diagram represents the variable that is directly intervened to a uniform distribution, and the dashed arrow represents the causal relationship that is severed due to the intervention.}
    \label{fig:causal graph}
\end{figure}

\section{Mathematical Proves}
In this sub-section, we will show the mathematical proves of the theorems mentioned in the main text. 

\subsection{Proof of Theorem \ref{thm:optimize}}
\label{sec:optimize}
To prove this theorem, we utilize a combination of the inverse macro-dynamic and probability re-weighting technique, along with information theory and properties of neural networks. In this proof, we need three lemmas, stated as follows:
\begin{lem}[Bijection mapping does not affect mutual information]
For any given continuous random variables $X$ and $Z$, if there is a bijection (one to one) mapping $f$ and another random variable $Y$ such that for any $x\in Dom (X)$ there is a $y=f (x)\in Dom (Y)$, and vice versa, where $Dom (X)$ denotes the domain of the variable $X$, then the mutual information between $X$ and $Z$ is equal to the information between $Y$ and $Z$, that is:
\begin{equation}
 I (X;Z)=I (Y;Z),
\end{equation}
\label{lem.invertible}
\end{lem}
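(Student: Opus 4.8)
The plan is to show invariance of mutual information under the invertible reparametrization $Y=f(X)$ by establishing the inequality in both directions via the data processing inequality (DPI). First I would observe that, because $Y=f(X)$ is a deterministic function of $X$, the triple $(Z,X,Y)$ forms a Markov chain $Z\to X\to Y$: conditioned on $X$, the value of $Y$ is fixed and hence independent of $Z$. Applying the DPI to this chain yields $I(X;Z)\ge I(Y;Z)$.

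The second step exploits the bijectivity of $f$. Since $f$ is one-to-one and onto, its inverse $f^{-1}$ exists and $X=f^{-1}(Y)$, so $X$ is itself a deterministic function of $Y$. This gives the reversed Markov chain $Z\to Y\to X$, and a second application of the DPI yields $I(Y;Z)\ge I(X;Z)$. Combining the two inequalities forces equality, $I(X;Z)=I(Y;Z)$, which is the claim.

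An alternative and more explicit route is a direct change of variables. Writing $I(X;Z)=h(X)-h(X\mid Z)$ in terms of differential entropy, the transformation rule $h(f(X))=h(X)+\mathbf{E}[\log|\det J_f|]$, together with the same identity conditioned on $Z$, shows that the Jacobian correction term $\mathbf{E}[\log|\det J_f|]$ enters $h(Y)$ and $h(Y\mid Z)$ identically, and therefore cancels in the difference, leaving $I(Y;Z)=I(X;Z)$.

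The main obstacle is measure-theoretic rather than algebraic: for continuous variables one must guarantee that the change of variables is legitimate, i.e.\ that $f$ and $f^{-1}$ are measurable so that $Y$ is a well-defined random variable and the Markov-chain/DPI machinery applies, and --- for the entropy route --- that $f$ is smooth enough for the Jacobian formula to hold with a finite correction term $\mathbf{E}[\log|\det J_f|]$ so the cancellation is valid. These regularity conditions hold automatically in our setting, where $f$ is realized by the invertible RealNVP network, a diffeomorphism with an everywhere-nonsingular, explicitly computable Jacobian. I would therefore state the lemma under the standing assumption that $f$ is a bijection with measurable inverse and invoke the two-sided DPI, relegating the Jacobian identity to a remark.
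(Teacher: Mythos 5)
Your proof is correct, but note that this paper never proves Lemma \ref{lem.invertible} itself: in the supporting information it states only that the proofs of this lemma and Lemma \ref{lemma.concatenation} can be found in the reference \cite{zhang2022neural}, so there is no in-paper argument to match against. Your primary route --- the two-sided data-processing inequality applied to the chains $Z\to X\to Y$ and $Z\to Y\to X$ --- is a genuinely different and arguably more robust argument than the one in the cited source and the standard literature, which proceed by exactly your ``alternative route'': a density change of variables in which the Jacobian correction $\mathbf{E}\left[\log|\det J_f|\right]$ enters $h(Y)$ and $h(Y\mid Z)$ identically and cancels. The DPI argument buys generality: it needs no densities, no smoothness, and no finiteness of the correction term, only that $f$ be a measurable bijection with measurable inverse (automatic for Borel bijections between standard Borel spaces, so your standing assumption is harmless), and it works with the general measure-theoretic definition of mutual information --- which matters here because the lemma as stated allows arbitrary bijections, not just diffeomorphisms. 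The Jacobian route, on the other hand, connects more directly to the surrounding machinery of the paper, since the EI computation for neural networks (Equation \ref{eq:ei_L}) manipulates precisely the term $\ln|\det(\partial_{X'}\mu(X))|$, and in the intended application $f$ is a RealNVP network, a diffeomorphism for which both arguments apply. Your instinct to lead with the DPI proof and relegate the Jacobian identity to a remark is the right structure.
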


\begin{lem}[Mutual information will not be affected by concatenating independent variables]
\label{lemma.concatenation}
If $X\in Dom(X)$ and $Y\in Dom(Y)$ form a Markov chain $X\rightarrow Y$, and $Z\in Dom(Z)$ is a random variable which is independent on both $X$ and $Y$, then:
\begin{equation}
    I(X;Y)=I(X;Y\bigoplus Z).
\end{equation}
\end{lem}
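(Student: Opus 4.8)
The plan is to identify the concatenated object $Y\bigoplus Z$ with the joint random variable $(Y,Z)$ and then reduce the whole statement to the chain rule for mutual information. Writing the chain rule as
\begin{equation}
I(X;Y\bigoplus Z)=I(X;(Y,Z))=I(X;Y)+I(X;Z\mid Y),
\end{equation}
the lemma collapses to showing that the conditional mutual information $I(X;Z\mid Y)$ vanishes. So the real content is a conditional-independence claim: $X$ and $Z$ are independent given $Y$.

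To establish $I(X;Z\mid Y)=0$, I would work directly with densities. The hypothesis that $Z$ is independent of $(X,Y)$ gives the factorization $p(x,y,z)=p(x,y)\,p(z)$; dividing by $p(y)$ yields $p(x,z\mid y)=p(x\mid y)\,p(z)=p(x\mid y)\,p(z\mid y)$, where the last equality uses $Z\perp Y$. This is precisely the definition of $X\perp Z\mid Y$, so $I(X;Z\mid Y)=\mathbf{E}_Y\big[I(X;Z\mid Y=y)\big]=0$. Substituting back into the chain rule gives $I(X;Y\bigoplus Z)=I(X;Y)$, which is the claim. It is worth noting that the Markov structure $X\rightarrow Y$ is not actually used for this identity; what does the work is the independence of the appended variable $Z$.

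The point that needs care — and the only real obstacle — is the precise meaning of the hypothesis ``$Z$ is independent of both $X$ and $Y$.'' Read literally as pairwise independence ($Z\perp X$ and $Z\perp Y$ separately), the statement is \emph{false}: one can take $X,Z$ to be independent fair bits and let $Y$ be their XOR, so that $Z$ is marginally independent of each of $X$ and $Y$, yet $I(X;Z\mid Y)=1\neq0$ while $I(X;Y)=0$, giving $I(X;Y)\neq I(X;Y\bigoplus Z)$. What the proof genuinely requires is \emph{joint} independence $Z\perp(X,Y)$, i.e. the factorization $p(x,y,z)=p(x,y)\,p(z)$. I would therefore fix this joint-independence reading explicitly at the start of the proof and remark that it is exactly the situation in the NIS/NIS+ decoder of Equation \ref{eq:phidag}, where the appended $Z=\xi\sim\mathcal{N}(0,I_{p-q})$ is drawn externally and is hence independent of the entire $(X,Y)$ configuration. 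With that reading fixed, the remaining steps are the routine chain-rule manipulation above.
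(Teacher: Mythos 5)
Your argument is correct and takes the standard route: read $Y\bigoplus Z$ as the joint variable $(Y,Z)$, expand $I(X;Y\bigoplus Z)=I(X;Y)+I(X;Z\mid Y)$ by the chain rule, and observe that independence of the appended variable forces $I(X;Z\mid Y)=0$. Note that there is no in-paper proof to compare against line by line: the paper defers the proof of Lemma~\ref{lemma.concatenation} entirely to the earlier NIS work \cite{zhang2022neural}, and the chain-rule decomposition you use is the standard (essentially the only) argument, so your route coincides with the intended one. The part of your write-up that goes beyond the paper is the most valuable: you correctly identify that the hypothesis as literally stated --- $Z$ independent of $X$ and of $Y$ separately --- is too weak, and your counterexample (take $X$ and $Z$ to be independent fair bits and $Y$ their exclusive-or; both pairwise independences hold, yet $I(X;Y)=0$ while $I(X;(Y,Z))=1$) shows the lemma is false under that pairwise reading. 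What is needed, and what your proof actually uses, is joint independence $Z\perp (X,Y)$, i.e. $p(x,y,z)=p(x,y)p(z)$; this is what genuinely holds where the lemma is applied, since in Equation~\ref{eq:phidag} and in the proof of Theorem~\ref{thm:optimize} the appended $\xi\sim\mathcal{N}(0,I_{p-q})$ is fresh external noise independent of the whole $(X,Y)$ configuration. Your further remark that the Markov-chain hypothesis $X\rightarrow Y$ plays no role in the identity is also accurate.
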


The proofs of Lemma\ref{lemma.concatenation} and Lemma\ref{lem.invertible} can be found in the reference \cite{zhang2022neural}.

\begin{lem}[variational upper bound of a conditional entropy]
\label{lemma.var_upper_bound}
Given a conditional entropy $H(\boldsymbol{y}|\boldsymbol{x})$, where $\boldsymbol{x}\in \mathcal{R}^s,\boldsymbol{y}\in \mathcal{R}^q$, there exists a variational upper bound on this conditional entropy:
\begin{equation}
    H(Y|X)\le -\iint p(\boldsymbol{y}, \boldsymbol{x})\ln g(\boldsymbol{y}|\boldsymbol{x}) \mathrm{d}\boldsymbol{y} \mathrm{d}\boldsymbol{x},
\end{equation}
where $g(\boldsymbol{y}|\boldsymbol{x}) \in \mathcal{R}^q \times \mathcal{R}^s$ is any distribution.
\end{lem}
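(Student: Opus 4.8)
The plan is to show that the gap between the proposed right-hand side and $H(Y|X)$ is exactly an averaged Kullback--Leibler divergence, which is non-negative. First I would write the conditional entropy in its integral form, $H(Y|X)=-\iint p(\boldsymbol{y},\boldsymbol{x})\ln p(\boldsymbol{y}|\boldsymbol{x})\,\mathrm{d}\boldsymbol{y}\,\mathrm{d}\boldsymbol{x}$, using the \emph{true} conditional density $p(\boldsymbol{y}|\boldsymbol{x})$. Subtracting this from the claimed upper bound and merging the two logarithms reduces the target inequality to showing that the single integral $\iint p(\boldsymbol{y},\boldsymbol{x})\ln\bigl(p(\boldsymbol{y}|\boldsymbol{x})/g(\boldsymbol{y}|\boldsymbol{x})\bigr)\,\mathrm{d}\boldsymbol{y}\,\mathrm{d}\boldsymbol{x}$ is non-negative.

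Next I would factor the joint density as $p(\boldsymbol{y},\boldsymbol{x})=p(\boldsymbol{x})\,p(\boldsymbol{y}|\boldsymbol{x})$ and separate the $\boldsymbol{x}$-integration, rewriting the gap as $\int p(\boldsymbol{x})\,D_{\mathrm{KL}}\!\bigl(p(\cdot|\boldsymbol{x})\,\|\,g(\cdot|\boldsymbol{x})\bigr)\,\mathrm{d}\boldsymbol{x}$, where for each fixed $\boldsymbol{x}$ the inner factor is the KL divergence between the true conditional and the variational distribution $g$. Since $p(\boldsymbol{x})\ge 0$ everywhere, the claim reduces entirely to the pointwise non-negativity of this inner KL divergence.

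The key (and essentially the only substantive) step is then the non-negativity of the KL divergence, which I would obtain from Jensen's inequality applied to the concave logarithm: $-D_{\mathrm{KL}}(p\|g)=\int p(\boldsymbol{y})\ln\frac{g(\boldsymbol{y})}{p(\boldsymbol{y})}\,\mathrm{d}\boldsymbol{y}\le \ln\int p(\boldsymbol{y})\frac{g(\boldsymbol{y})}{p(\boldsymbol{y})}\,\mathrm{d}\boldsymbol{y}=\ln\int g(\boldsymbol{y})\,\mathrm{d}\boldsymbol{y}=0$, where the final equality uses that $g$ is a genuine (normalized) conditional density integrating to one. The main point to handle carefully is regularity rather than any deep idea: one needs the densities to be well-defined and the integrals absolutely convergent so that Jensen's inequality and the interchange of integration order (Fubini) are legitimate, and one should note that the bound is trivially true (its right-hand side being $+\infty$) on any set where $p>0$ but $g=0$. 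Under the standing assumption that $g$ is an arbitrary but proper conditional density supported wherever $p$ is positive, these conditions hold and the inequality follows, completing the proof.
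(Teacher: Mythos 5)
Your proposal is correct and takes essentially the same route as the paper: both reduce the inequality to the non-negativity of the KL divergence between the true conditional $p(\cdot|\boldsymbol{x})$ and the variational distribution $g(\cdot|\boldsymbol{x})$, averaged over $p(\boldsymbol{x})$. The only difference is that you prove the KL non-negativity yourself via Jensen's inequality (and add regularity caveats), whereas the paper simply cites it as a known property.
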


\begin{proof}
First, we unfold the conditional entropy
\begin{gather}
    H(Y|X)=-\iint p(\boldsymbol{x})p(\boldsymbol{y}|\boldsymbol{x}) \ln p(\boldsymbol{y}|\boldsymbol{x}) \mathrm{d}\boldsymbol{y} \mathrm{d}\boldsymbol{x}
\end{gather}

Due to the property of KL divergence\cite{Fang_Zhu_2020}, for any distribution $g$, 
\begin{gather}
    D_{KL}(p||g)=\int p(\boldsymbol{x})\ln \frac{p(\boldsymbol{x})}{g(\boldsymbol{x})}\mathrm{d}\boldsymbol{x} \ge 0.
\end{gather}
On the other words, 
\begin{gather}
    \int p(\boldsymbol{x})\ln p(\boldsymbol{x})\mathrm{d}\boldsymbol{x} \ge \int p(\boldsymbol{x})\ln g(\boldsymbol{x})\mathrm{d}\boldsymbol{x}.
\end{gather}
So, 
\begin{equation}
\begin{aligned}
    H(\boldsymbol{y}|\boldsymbol{x})&=-\int p(\boldsymbol{x})\int p(\boldsymbol{y}|\boldsymbol{x}) \ln p(\boldsymbol{y}|\boldsymbol{x}) \mathrm{d}\boldsymbol{y}\mathrm{d}\boldsymbol{x}\\
    &\le -\int p(\boldsymbol{x})\int p(\boldsymbol{y}|\boldsymbol{x}) \ln g(\boldsymbol{y}|\boldsymbol{x}) \mathrm{d}\boldsymbol{y} \mathrm{d}\boldsymbol{x}\\
    &=-\iint p(\boldsymbol{y}, \boldsymbol{x})\ln g(\boldsymbol{y}|\boldsymbol{x}) \mathrm{d}\boldsymbol{y} \mathrm{d}\boldsymbol{x}.
\end{aligned}
\end{equation}
\end{proof}

To prove the theorem, we also use an assumption:

\textbf{Assumption:} The composition of the inverse dynamics $g_{\theta '}$ and the encoder $\phi$ can be regarded as a conditional probability $P(\hat{Y}_{t}|X_{t+1})$, and this probability can be approximated as a Gaussian distribution $N(g_{\theta'}(\phi(\boldsymbol{x}_{t+1})),\Sigma)$, where $\Sigma=diag(\sigma_1, \sigma_2,\cdot\cdot\cdot,\sigma_q)$, and $\sigma_i$ is the MSE loss of the $i$th dimension of output $\hat{Y}_{t+1}$. Further, suppose $\sigma_i$ is bounded, thus $\sigma_i\in[\sigma_m,\sigma_M]$ for any $i$, where $\sigma_m$ and $\sigma_M$ are the minimum and maximum values of MSEs.

This assumption is supported by the reference \cite{Blundell_Cornebise_Kavukcuoglu_Wierstra_2015}.

Next, we restate the theorem that needs to be proved:

\textbf{Theorem} \ref{thm:optimize}(Problem Transformation Theorem):
For a given value of $q$, assuming that $\omega^\ast, \theta^\ast$, and $\theta'^\ast$ are the optimal solutions to the unconstrained objective optimization problem defined by equation (\ref{new optimization}). Then $\phi^\ast\equiv Proj_q(\psi_{\omega^\ast}),f^\ast_q\equiv f_{\theta^\ast}$, and $\phi^{\dag,\ast}(\cdot)\equiv \psi^{-1}_{\omega^\ast}(\cdot\bigoplus \xi)$, where $\xi\sim \mathcal{N}(0,I_{p-q})$ are the optimal solution to the constrained objective optimization problem Eq.(\ref{old optimization}).

\begin{proof}
The original constrained goal optimization framework is as follows:
\begin{equation}
\begin{aligned}
&\max_{\phi,f_q,\phi^+} \mathcal{J}(f_q),\\
&s.t. \begin{cases}
|| \hat{\boldsymbol{x}}_{t+1}-\boldsymbol{x}_{t+1} || < \epsilon,\\
\hat{\boldsymbol{x}}_{t+1}=\phi^{\dag}(f_q(\phi(\boldsymbol{x}_t))).
\end{cases}
\end{aligned}
\end{equation}

We know that $\hat{X}_{t+1}=\psi_{\omega}^{-1}(\hat{Y}_{t+1}\bigoplus \xi)$, where $\psi_{\omega}^{-1}$ is a reversible mapping that doesn't affect the mutual information according to Lemma \ref{lem.invertible}. Therefore, based on Lemma \ref{lemma.concatenation}, we have the capability to apply a transformation to the mutual information $I(Y_t,\hat{Y}_{t+1})$:
\begin{gather}
\label{eq:mi tran}
    I(Y_t,\hat{Y}_{t+1})=I(Y_t,\hat{X}_{t+1}).
\end{gather}
Here, $I$ represents mutual information. By utilizing the property of mutual information, we can derive the following equation:
\begin{gather}
    I(Y_t,\hat{X}_{t+1})=H(Y_t)-H(Y_t|\hat{X}_{t+1}).
\end{gather}

Now, $H(\Tilde{Y}_t)=H(U_q)$, where $U_q$ is a uniform distribution on a macro space. Therefore, we have:
\begin{gather}
\label{eq:object to H}
    \mathcal{J}(f_{\theta,q})=H(U_q)-H(\Tilde{Y}_t|\Tilde{X}_{t+1}).
\end{gather}
The optimization of the objective function $\mathcal{J}(f_{q})$ can be reformulated as the optimization of the conditional entropy $H(\Tilde{Y}_t|\Tilde{X}_{t+1})$, since $H(U_q)$ is a constant. The variational upper bound on $H(\Tilde{Y}_t|\Tilde{X}_{t+1})$ is obtained by Lemma\ref{lemma.var_upper_bound}.
\begin{gather}
\label{eq:upper_bound}
    H(\Tilde{Y}_t|\Tilde{X}_{t+1}) \le -\iint \Tilde{p}(\boldsymbol{y}_t, \boldsymbol{x}_{t+1})\ln g(\boldsymbol{y}_t|\boldsymbol{x}_{t+1}) \mathrm{d}\boldsymbol{y}_t \mathrm{d}\boldsymbol{x}_{t+1},
\end{gather}
where $\Tilde{p}$ represents the probability distribution function of random variables in case of $\boldsymbol{y}_t$ being intervened.

We will use a neural network to fit the distribution $g(\boldsymbol{y}_t|\boldsymbol{x}_{t+1})$. Based on the assumption, we can consider it as a normal distribution. According to Lemma\ref{lemma.var_upper_bound}, since the conditional probability $g(\boldsymbol{y}_t|\boldsymbol{x}_{t+1})$ can be any distribution, we can assume it to be a normal distribution for simplicity. Predicting $\boldsymbol{y}_t$ with $\boldsymbol{x}_{t+1}$ as input can be divided into two steps. First, $\boldsymbol{x}_{t+1}$ is encoded by the encoder $\phi$ into the macro latent space. Then, the reverse macro dynamics are approximated using $g_{\theta'}$. As a result, the expectation of $g(\boldsymbol{y}_t|\boldsymbol{x}_{t+1})$ can be obtained using the following equation:
\begin{equation}
\begin{aligned}
\label{eq:expect_g}
    E_{g}(\Tilde{Y}_t|X_{t+1}=\boldsymbol{x}_{t+1})
    \equiv  g_{\theta'}(\phi(\boldsymbol{x}_{t+1}))
\end{aligned}
\end{equation}

Therefore, we have $g(\boldsymbol{y}_t|\boldsymbol{x}_{t+1})\sim N(\mu,\Sigma)$, where $\Sigma$ is a constant diagonal matrix and $\mu =g_{\theta'}(\phi(\boldsymbol{x}_{t+1}))$. In order to utilize the properties of intervention on $\boldsymbol{y}_t$, we need to separate $\Tilde{p}(\boldsymbol{y}_t)$. Following Equation \ref{eq:upper_bound}, we can further transform $H(\Tilde{Y}_t|\Tilde{X}_{t+1})$:

\begin{gather}
     H(\Tilde{Y}_t|\Tilde{X}_{t+1})\le -\iint \Tilde{p}(\boldsymbol{y}_{t})\Tilde{p}(\boldsymbol{x}_{t+1}|\boldsymbol{y}_t)\ln  g(\boldsymbol{y}_t|\boldsymbol{x}_{t+1}) \mathrm{d}\boldsymbol{y}_t \mathrm{d}\boldsymbol{x}_{t+1}.
\end{gather}

Assuming that the training is sufficient, we can have
\begin{gather}
    \Tilde{p}(\boldsymbol{x}_{t+1}|\boldsymbol{y}_t)\approx p(\boldsymbol{x}_{t+1}|\boldsymbol{y}_t).
\end{gather}

So, 
\begin{gather}
     H(\Tilde{Y}_t|\Tilde{X}_{t+1})\le -\iint \Tilde{p}(\boldsymbol{y}_{t})p(\boldsymbol{x}_{t+1}|\boldsymbol{y}_t)\ln  g(\boldsymbol{y}_{t}|\boldsymbol{x}_{t+1}) \mathrm{d}\boldsymbol{y}_t \mathrm{d}\boldsymbol{x}_{t+1}.
\end{gather}

According to Equation\ref{eq:expect_g}, we obtain the logarithm probability density function of \\
$g(\boldsymbol{y}_{t}|\boldsymbol{x}_{t+1})$:
\begin{equation}
\begin{aligned}
\label{eq:ln_g}
    \ln g(\boldsymbol{y}_t|\boldsymbol{x}_{t+1})
    &\approx \ln \frac{1}{(2\pi)^{\frac{m}{2}}|\Sigma|^\frac{1}{2}} e^{-\frac{(\boldsymbol{y}_t-g_{\theta'}(\phi(\boldsymbol{x}_{t+1})))^2}{2|\Sigma|}}\\
    &=-\frac{(\boldsymbol{y}_t-g_{\theta'}(\phi(\boldsymbol{x}_{t+1})))^2}{2|\Sigma|}+\ln \frac{1}{(2\pi)^{\frac{m}{2}}|\Sigma|^\frac{1}{2}}.
\end{aligned}
\end{equation}

Because $\ln \frac{1}{(2\pi)^{\frac{m}{2}}|\Sigma|^\frac{1}{2}}\geq \ln \frac{1}{(2\pi)^{\frac{m}{2}}|\Sigma|_{max}^\frac{1}{2}}$, so
\begin{equation}
\begin{aligned}
\label{eq:inequality_iint}
        H(\Tilde{Y}_t|\Tilde{X}_{t+1}) &\le \iint \Tilde{p}(\boldsymbol{y}_{t})p(\boldsymbol{x}_{t+1}|\boldsymbol{y}_t)\left[\frac{(\phi (\boldsymbol{x}_t)-g_{\theta'}(\phi(\boldsymbol{x}_{t+1})))^2}{2|\Sigma|}-\ln \frac{1}{(2\pi)^{\frac{m}{2}}|\Sigma|^\frac{1}{2}}\right] \mathrm{d}\boldsymbol{y}_t \mathrm{d}\boldsymbol{x}_{t+1}\\
        &\le \iint \Tilde{p}(\boldsymbol{y}_{t})p(\boldsymbol{x}_{t+1}|\boldsymbol{y}_t)\left[\frac{(\phi (\boldsymbol{x}_t)-g_{\theta'}(\phi(\boldsymbol{x}_{t+1})))^2}{2|\Sigma|_{min}}-\ln \frac{1}{(2\pi)^{\frac{m}{2}}|\Sigma|_{max}^\frac{1}{2}}\right] \mathrm{d}\boldsymbol{y}_t \mathrm{d}\boldsymbol{x}_{t+1},
\end{aligned}
\end{equation}

where $|\Sigma|_{min}=\sigma_{min}^q, |\Sigma|_{max}=\sigma_{max}^q$. Here, $\Tilde{p}(\boldsymbol{y}_{t})p(\boldsymbol{x}_{t+1}|\boldsymbol{y}_t)=\frac{\Tilde{p}(\boldsymbol{y}_{t})}{p(\boldsymbol{y}_{t})}p(\boldsymbol{x}_{t+1},\boldsymbol{y}_t)$. Thus, we obtain $\frac{\Tilde{p}(\boldsymbol{y}_{t})}{p(\boldsymbol{y}_{t})}$, where $\Tilde{p}(\boldsymbol{y}_{t})$ represents the target distribution and $p(\boldsymbol{y}_{t})$ represents the natural distribution obtained from the data. We can define the inverse probability weights $w(\boldsymbol{x}_t)$ as follows:
\begin{gather}
\label{eq:defi_W}
    w(\boldsymbol{x}_t) \equiv \frac{\Tilde{p}(\boldsymbol{y}_{t})}{p(\boldsymbol{y}_{t})}.
\end{gather}

Let $z=\frac{(\phi (\boldsymbol{x}_t)-g_{\theta'}(\phi(\boldsymbol{x}_{t+1})))^2}{2|\Sigma|_{min}}-\ln \frac{1}{(2\pi)^{\frac{m}{2}}|\Sigma|_{max}^\frac{1}{2}}$. Therefore, we can express Equation \ref{eq:inequality_iint} as follows:
\begin{equation}
\begin{aligned}
\label{eq:inequality_iint2}
        H(\Tilde{Y}_t|\Tilde{X}_{t+1}) \le \iint w(\boldsymbol{x}_t)p(\boldsymbol{x}_{t+1},\boldsymbol{y}_t)z \mathrm{d}\boldsymbol{y}_t \mathrm{d}\boldsymbol{x}_{t+1}.
\end{aligned}
\end{equation}

Because we train neural networks using discrete samples $\{x_t\}$, we can use the sample mean as an approximate estimate of the expectation in Equation\ref{eq:inequality_iint2}. Therefore, the variational upper bound of $H(\Tilde{Y}_t|\Tilde{X}_{t+1})$ can be written as:
\begin{equation}
\begin{aligned}
\label{eq:inequality_sum}
        H(\Tilde{Y}_t|\Tilde{X}_{t+1}) \le \frac{1}{T}\sum_{i=0}^{T-1}w(\boldsymbol{x}_t)z.
\end{aligned}
\end{equation}

Substituting the equation into Equation \ref{eq:object to H}, we obtain the variational lower bound of the original objective function:
\begin{equation}
\begin{aligned}
\label{eq:var_lower_bound}
    \mathcal{J}(f_{\theta,q}) \ge H(U_q)-\frac{1}{T}\sum_{i=0}^{T-1}w(\boldsymbol{x}_t)z.
\end{aligned}
\end{equation}

Therefore, the optimization problem(Equation \ref{old optimization}) is transformed into
\begin{gather}
    \min_{\omega,\theta,\theta'} \sum_{i=0}^{T-1}w(\boldsymbol{x}_t)|\phi(\boldsymbol{x}_t)-g_{\theta'}(\phi(\boldsymbol{x}_{t+1}))|^2 \\
    s.t. || \hat{\boldsymbol{x}}_{t+1}-\boldsymbol{x}_{t+1} || < \epsilon,
\end{gather}
where $\omega$, $\theta$, $\theta'$ respectively represent the parameters of the three neural networks $\psi$, $f_{\theta}$, $g_{\theta'}$ in the NIS+ framework.

Then we construct Lagrange function,
\begin{gather}
    L(\omega,\theta,\theta',\lambda)=\sum_{i=0}^{T-1}w(\boldsymbol{x}_t)|\phi(\boldsymbol{x}_t)-g_{\theta'}(\phi(\boldsymbol{x}_{t+1}))|^2+\lambda|| \phi^\dag(\boldsymbol{y}_{t+1})-\boldsymbol{x}_{t+1} ||
\end{gather}

Therefore, the optimization goal is transformed into
\begin{gather}
    \min_{\omega,\theta,\theta'} \sum_{i=0}^{T-1}w(\boldsymbol{x}_t)||\boldsymbol{y}_t-g_{\theta'}(\boldsymbol{y}_{t+1})||+\lambda|| \hat{\boldsymbol{x}}_{t+1}-\boldsymbol{x}_{t+1} ||
\end{gather}
\end{proof}

\subsection{Proof of Theorem \ref{thm:optimize_extension}}
\label{sec:op extension}
We will first use two separate lemmas to prove the scalability of stacked encoder and parallel encoder, thereby demonstrating that their arbitrary combination can keep the conclusion of Theorem \ref{thm:optimize} unchanged.

\begin{lem}[Mutual information will not be affected by stacked encoder]
\label{lemma.concatenation_stacked}
If $X\in Dom(X)$ and $Y\in Dom(Y)$ form a Markov chain $X\rightarrow Y$, and $\Phi_L$ and $\Phi^\dag_L$ represent the L-layer stacked encoder and decoder, respectively, then:
\begin{equation}
    I(X;Y)=I(X;\Phi^\dag_L(Y)).
\end{equation}
\end{lem}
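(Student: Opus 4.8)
The plan is to prove the identity by induction on the number of layers $L$, reducing each inductive step to the two mutual-information invariances already established: that a bijection leaves mutual information unchanged (Lemma \ref{lem.invertible}) and that concatenating an independent variable leaves it unchanged (Lemma \ref{lemma.concatenation}). The underlying picture is that the $L$-layer stacked decoder factors as a composition $\Phi^\dag_L = \phi^\dag_1\circ\cdots\circ\phi^\dag_L$ of basic decoders, where each $\phi^\dag_\ell$ takes its input, concatenates a fresh standard-Gaussian noise vector $\xi_\ell$ to restore the dimensions dropped by the matching encoder layer, and then applies the inverse invertible map $\psi_\ell^{-1}$. I would assume throughout, as dictated by the construction of the stacked decoder, that the noise vectors $\xi_1,\dots,\xi_L$ are mutually independent and independent of $(X,Y)$.

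First I would settle the base case $L=1$, which is exactly the single-layer identity underlying the decoder. Writing $\phi^\dag_1(Y)=\psi_1^{-1}(Y\bigoplus\xi_1)$, Lemma \ref{lemma.concatenation} gives $I(X;Y)=I(X;Y\bigoplus\xi_1)$ since $\xi_1$ is independent of $(X,Y)$, and Lemma \ref{lem.invertible} then gives $I(X;Y\bigoplus\xi_1)=I(X;\psi_1^{-1}(Y\bigoplus\xi_1))=I(X;\phi^\dag_1(Y))$. For the inductive step I would peel off the outermost decoder layer, writing $\Phi^\dag_L(Y)=\Phi^\dag_{L-1}(Y')$ with $Y'\equiv\phi^\dag_L(Y)=\psi_L^{-1}(Y\bigoplus\xi_L)$. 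The base-case argument applied to this single layer yields $I(X;Y)=I(X;Y')$, and invoking the induction hypothesis on $Y'$ gives $I(X;Y')=I(X;\Phi^\dag_{L-1}(Y'))=I(X;\Phi^\dag_L(Y))$; chaining the two equalities closes the step.

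The part that needs genuine care is verifying that the hypotheses of the induction propagate correctly to $Y'$. I must check that $X\to Y'$ is still a Markov chain, which holds because $Y'$ is a function of $Y$ and the independent noise $\xi_L$, so that $X\to Y\to Y'$; and that the remaining noise vectors $\xi_1,\dots,\xi_{L-1}$ are independent of $(X,Y')$, which holds because they are independent of $(X,Y)$ and of $\xi_L$, and $Y'$ is a deterministic function of $(Y,\xi_L)$.

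The main obstacle I anticipate is bookkeeping these probabilistic conditions rather than any deep inequality: the concatenation lemma is only valid when the adjoined variable is independent of both arguments of the mutual information, so the layerwise propagation of the Markov property and of the joint independence of the fresh noise from all downstream variables must be argued explicitly. A secondary point to state precisely is that each $\psi_\ell^{-1}$ is applied to the full post-concatenation vector, so it is a bijection on its entire domain and Lemma \ref{lem.invertible} applies cleanly; I would make the dimension bookkeeping explicit, with $d_{\ell-1}>d_\ell$ and exactly $d_{\ell-1}-d_\ell$ noise dimensions adjoined at layer $\ell$, to rule out any mismatch.
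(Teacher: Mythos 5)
Your proof is correct and follows essentially the same route as the paper's: the paper also factors $\Phi^\dag_L$ into basic decoders and peels off one layer at a time (its ``and so on'' is exactly your induction), invoking Lemma \ref{lemma.concatenation} together with the invariance of mutual information under invertible maps (Lemma \ref{lem.invertible}) at each layer. Your explicit verification that the Markov-chain and independence hypotheses propagate to $Y'$ is a detail the paper leaves implicit, but it does not change the structure of the argument.
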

\begin{proof}
According to Figure\ref{fig:structures}(a), we can obtain
\begin{equation}
    \Phi^\dag_{L}(Y) =  \phi^\dag_{q_1}\circ \phi^\dag_{q_2}\circ\cdot\cdot\cdot\circ \phi^\dag_{q_L}(Y).
\end{equation}
$\phi^\dag_{q_i}(i=1, 2,\cdot\cdot\cdot, L):\mathcal{R}^{q_i}\rightarrow \mathcal{R}^{q_{i-1}}$ is a basic decoder as shown in Equation \ref{eq:phidag}. $\circ$ represents the composition of functions. Therefore, according to Lemma \ref{lemma.concatenation} and the fact that reversible mappings do not change mutual information, we obtain
\begin{equation}
    I(X;Y)=I(X;\phi^\dag_L(Y)).
\end{equation}
Let $Y_{L-1}=\phi^\dag_L(Y)$, we can further obtain $I(X;Y_{L-1})=I(X;\phi^\dag_{L-1}(Y_{L-1}))$, and so on, leading to the final result
\begin{equation}
    I(X;Y)=I(X;\Phi^\dag_L(Y)).
\end{equation}
\end{proof}

\begin{lem}[Mutual information will not be affected by parallel encoder]
\label{lemma.concatenation_parallel}
If $X\in Dom(X)$ and $Y\in Dom(Y)$ form a Markov chain $X\rightarrow Y$, and $\Phi_T$ and $\Phi^\dag_T$ represent parallel encoder and decoder composed of T ordinary encoders or decoders, respectively, then:
\begin{equation}
    I(X;Y)=I(X;\Phi^\dag_T(Y)).
\end{equation}
\end{lem}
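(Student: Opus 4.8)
The plan is to mirror the proof of Lemma \ref{lemma.concatenation_stacked}, but to replace the sequential composition of basic decoders by their parallel (block-wise) combination. First I would make the structure of $\Phi^\dag_T$ explicit from Figure \ref{fig:structures}(b): the macro-variable $Y$ is partitioned into $T$ blocks $Y=(Y_1,\dots,Y_T)$, each block is decoded by a basic decoder $\phi^\dag(\boldsymbol{y})=\psi_\omega^{-1}(\boldsymbol{y}\bigoplus\xi)$ as in Equation \ref{eq:phidag} (with shared parameters $\omega$), and the reconstructed blocks are concatenated, so that $\Phi^\dag_T(Y)=\big(\psi_\omega^{-1}(Y_1\bigoplus\xi_1),\dots,\psi_\omega^{-1}(Y_T\bigoplus\xi_T)\big)$, where $\xi_1,\dots,\xi_T$ are independent standard Gaussian vectors, each independent of $X$ and $Y$.

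The argument then proceeds in two steps, applied to the whole product space at once. In the first step I would absorb the noise: collecting $\xi=(\xi_1,\dots,\xi_T)$, which is independent of both $X$ and $Y$, Lemma \ref{lemma.concatenation} gives $I(X;Y)=I(X;Y\bigoplus\xi)$; since the interleaved tuple $(Y_1\bigoplus\xi_1,\dots,Y_T\bigoplus\xi_T)$ differs from $Y\bigoplus\xi$ only by a fixed coordinate permutation, which is a bijection, Lemma \ref{lem.invertible} gives $I(X;Y\bigoplus\xi)=I(X;(Y_1\bigoplus\xi_1,\dots,Y_T\bigoplus\xi_T))$. In the second step I would apply the block-diagonal map $\Psi\equiv\psi_\omega^{-1}\times\cdots\times\psi_\omega^{-1}$; because each factor $\psi_\omega^{-1}$ is invertible, $\Psi$ is itself a bijection on the product of the block domains, so Lemma \ref{lem.invertible} gives $I(X;(Y_1\bigoplus\xi_1,\dots,Y_T\bigoplus\xi_T))=I(X;\Phi^\dag_T(Y))$. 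Chaining the three equalities yields the claim $I(X;Y)=I(X;\Phi^\dag_T(Y))$.

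The main obstacle is bookkeeping rather than any deep idea: I must verify (i) that the noise vectors $\xi_1,\dots,\xi_T$ are jointly independent of $(X,Y)$, so that Lemma \ref{lemma.concatenation} can be applied to the combined noise $\xi$ in one shot, and (ii) that the parallel combination of invertible maps is genuinely a bijection on the Cartesian product of the block domains, so that Lemma \ref{lem.invertible} applies to $\Psi$. The permutation step reconciling the interleaved decoder output with the ``all-blocks-then-all-noise'' ordering is the one place where the parallel structure (as opposed to the stacked structure of Lemma \ref{lemma.concatenation_stacked}) enters, and it is handled cleanly by observing that any coordinate permutation is invertible and hence mutual-information-preserving.
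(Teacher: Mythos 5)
Your proposal is correct and follows essentially the same route as the paper's proof: both express $\Phi^\dag_T(Y)$ as a block-diagonal invertible map $\Psi_T$ applied to $Y\bigoplus\xi$, use Lemma \ref{lemma.concatenation} to absorb the independent noise, and use Lemma \ref{lem.invertible} to discharge the invertible block-wise transformation (the paper handles the interleaving-versus-concatenation ordering with the same permutation-is-a-bijection observation, just stated more briefly). Your version is, if anything, slightly more explicit about the coordinate-permutation step and the joint independence of $(\xi_1,\dots,\xi_T)$ from $(X,Y)$.
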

\begin{proof}
the decoding process can be divided into two steps:
\begin{equation}
    \Phi^\dag_T(Y) = \Psi_T(Y\bigoplus\xi).
\end{equation}
$Y$ can be decomposed as $Y = Y_1 \bigoplus Y_2 \bigoplus \dots \bigoplus Y_T$, and all the introduced noise $\xi = \xi_1 \bigoplus \xi_2 \\
\bigoplus \dots \bigoplus \xi_T$, even if the order of their concatenation changes, it will not affect the mutual information between the overall variable and other variables. Let $\Psi_{T}(Z) = \psi_{1}(Z_1) \bigoplus \psi_{2}(Z_2) \bigoplus \dots \\
\bigoplus \psi_{T}(Z_T)$, where $Z_i (i=1, 2, \dots, T)$ is a partition of $Z$, and $\psi_{i} (i=1, 2, \dots, T): \mathcal{R}^{p} \rightarrow \mathcal{R}^{p}$ are all invertible functions. Therefore, $\Psi_{T}$ is also an invertible function, meaning that this function transformation does not change the mutual information. According to Lemma \ref{lemma.concatenation}, we have
\begin{equation}
    I(X;Y)=I(X;Y\bigoplus\xi)=I(X;\Psi_T(Y\bigoplus\xi)).
\end{equation}
Finally proven,
\begin{equation}
    I(X;Y)=I(X;\Phi^\dag_T(Y)).
\end{equation}
\end{proof}

Next, we provide the proof of Theorem \ref{thm:optimize_extension}.

\textbf{Theorem \ref{thm:optimize_extension}}(Problem Transformation in Extensions of NIS+ Theorem)
When the encoder of NIS+ is replaced with an arbitrary  combination of stacked encodes and parallel encoders, the conclusion of Theorem 2.1 still holds true.

\begin{proof}
To demonstrate that Theorem \ref{thm:optimize} remains applicable to the extended NIS+ framework, we only need to prove that Equation \ref{eq:mi tran} still holds true, as the only difference between the extended NIS+ and the previous framework lies in the encoder. First, we restate Equation \ref{eq:mi tran} as follows:
\begin{gather}
     \mathcal{J}(f_{q}) =I(Y_t,\hat{Y}_{t+1})=I(Y_t,\hat{X}_{t+1}).
\end{gather}
According to Lemma \ref{lemma.concatenation_stacked} and Lemma \ref{lemma.concatenation_parallel}, Equation \ref{eq:mi tran} holds true when the encoder of NIS+ is replaced with either a stacked encoder or a parallel encoder. And any combination of these two encoders is nothing more than an alternating nesting of stacking and parallelization, so the conclusion still holds.
\end{proof}

\subsection{Proof of Theorem \ref{thm.universal}}
\label{sec:universal}
To prove the universality theorem, we need to extend the definition of the basic encoder by introducing a new operation $\eta_{p,s}: \mathcal{R}^p\rightarrow \mathcal{R}^s$, which represents the self-replication of the original variables.
\begin{equation}
    \eta_{p,s}(\boldsymbol{x})=\boldsymbol{x}\bigoplus \boldsymbol{x}_{s-p}.
\end{equation}
The vector $\boldsymbol{x}_{s-p}$ is composed of $s-p$ dimensions, where each dimension is a duplicate of a specific dimension in $\boldsymbol{x}$. For example, if $\boldsymbol{x}=(0.1,0.2,0.3)$, then $\eta_{2,5}(\boldsymbol{x})=(0.1,0.2,0.3,0.1,0.2)$.

The basic idea to prove theorem \ref{thm.universal} is to use the famous universal approximation theorems of common feed-forward neural networks mentioned in \cite{hornik1991universal,Haykin1998neural} and of invertible neural networks mentioned in \cite{Teshima2020inn,Teshima2020ode} as the bridges, and then try to prove that any feed-forward neural network can be simulated by a serious of bijective mapping($\psi$), projection($\chi$) and vector expansion ($\eta$) procedures. The basic encoder after the extension for vector expansion can be expressed by the following equation:
\begin{equation}
    \phi= Proj_q \circ\psi_{s} \circ \eta_{p,s}\circ \psi_{p}.
\end{equation}
The functions $\psi_s: \mathcal{R}^s\rightarrow \mathcal{R}^s$ and $\psi_p: \mathcal{R}^p\rightarrow \mathcal{R}^p$ represent two reversible mappings. The final dimensionality $q$ that is retained may be larger than the initial dimensionality $p$. In the context of coarse-graining microscopic states to obtain macroscopic states, for the sake of convenience, we usually consider $\phi$ as a dimension reduction operator.
  
First, we need to prove a lemma.

\begin{lem}
\label{lemma.wx}
For any vector $X\in \mathcal{R}^p$ and matrix $W\in \mathcal{R}^{s\times p}$, where $s,p\in \mathcal{N}$, there exists an integer $s_1\leq \min(s,p)$ and two basic units of encoder: $\psi_{s}\circ\eta_{s_1,s}$ and $\chi_{p,s_1}\circ \psi_{p}$ such that:
\begin{equation}
    W\cdot X\simeq(\psi_{s}\circ\eta_{s_1,s})\circ(\chi_{p,s_1}\circ \psi_{p})(X),
\end{equation}
\end{lem}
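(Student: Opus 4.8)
The plan is to reduce the statement to a rank factorization of $W$ and then realize each factor with one of the two admissible building blocks. Let $r=\mathrm{rank}(W)\le \min(s,p)$ and set $s_1=r$, which immediately meets the constraint $s_1\le\min(s,p)$. Write $W=LR$ with $L\in\mathcal{R}^{s\times r}$ and $R\in\mathcal{R}^{r\times p}$ both of full rank $r$ (via the singular value decomposition, or any rank decomposition). The construction then splits into realizing $X\mapsto RX$ by the first unit $\chi_{p,s_1}\circ\psi_p$ and left multiplication by $L$ by the second unit $\psi_s\circ\eta_{s_1,s}$.

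For the input side, the $r$ rows of $R$ are linearly independent, so they extend to a basis of $\mathcal{R}^p$; let $\tilde{R}\in\mathcal{R}^{p\times p}$ be the resulting invertible matrix whose first $r$ rows coincide with $R$. Then $\chi_{p,r}(\tilde{R}X)=(\tilde{R}X)_{1:r}=RX$ holds exactly. Since $\tilde{R}$ is an invertible linear map, the universal approximation theorem for invertible neural networks \cite{Teshima2020inn,Teshima2020ode} supplies an invertible network $\psi_p$ approximating $\tilde{R}$ to arbitrary accuracy on the compact domain of interest, so that $\chi_{p,r}(\psi_p(X))\simeq RX$.

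For the output side, observe that $\eta_{s_1,s}$ acts on $\mathcal{R}^{r}$ as an injective linear map $E\in\mathcal{R}^{s\times r}$: its first $r$ rows form the identity block (regardless of which coordinates are duplicated), so $\mathrm{rank}(E)=r$ and $\eta_{r,s}(RX)=E\,RX$. It then suffices to exhibit an invertible $M\in\mathcal{R}^{s\times s}$ with $ME=L$, for then approximating $M$ by an invertible network $\psi_s$ yields $\psi_s(\eta_{r,s}(RX))\simeq MERX=LRX=WX$. Such an $M$ exists: because $E$ is injective, the rule $M(Ev)=Lv$ well-defines a linear isomorphism from the $r$-dimensional subspace $\mathrm{Im}(E)$ onto the $r$-dimensional subspace $\mathrm{Im}(L)$; choosing any complements of these two subspaces (each of dimension $s-r$) and any linear isomorphism between them extends $M$ to a globally invertible map on $\mathcal{R}^s$. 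A second application of the invertible-network approximation theorem gives $\psi_s\simeq M$.

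Composing the two units then yields $(\psi_s\circ\eta_{s_1,s})\circ(\chi_{p,s_1}\circ\psi_p)(X)\simeq WX$ on any prescribed compact set, which is the claim. The main obstacle is the output-side step: one must check that the replicated vector $\eta_{s_1,s}(RX)$, which lives on only an $r$-dimensional subspace of $\mathcal{R}^s$, can be carried to $WX$ by a \emph{globally} invertible transformation rather than a merely injective one, and the extension of the partial isomorphism $\mathrm{Im}(E)\to\mathrm{Im}(L)$ across chosen complements is exactly what resolves this. The only remaining point needing care is that the exact linear identities degrade to approximations ($\simeq$) once $\tilde{R}$ and $M$ are replaced by invertible networks, and that these errors compose continuously on the compact domain, which is routine.
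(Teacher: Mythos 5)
Your proof is correct, and it shares the paper's high-level strategy: factor $W$ into invertible pieces (realized by invertible networks via the approximation results of Teshima et al.) sandwiching the projection $\chi_{p,s_1}$ and the expansion $\eta_{s_1,s}$. But the route differs in two substantive ways. First, the paper uses the SVD $W = U\cdot\Lambda\cdot V$, takes $\psi_p \simeq \Lambda'\cdot V$ and $\psi_s \simeq U$, and asserts the identity $W(X) = U(\eta_{s_1,s}(\chi_{p,s_1}((\Lambda'\cdot V)(X))))$; as written this identity is not exact, because $\eta_{s_1,s}$ pads by \emph{duplicating} coordinates while $U\Lambda V X$ requires the trailing $s-s_1$ coordinates to be \emph{zero}, and since $U$ is injective the two sides differ whenever the duplicated entries are nonzero. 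Your output-side construction is exactly what repairs this gap: instead of the orthogonal factor $U$ you build an invertible $M$ with $ME = L$ (where $E$ is the linear map realized by $\eta_{s_1,s}$), obtained by extending the partial isomorphism $\mathrm{Im}(E)\to\mathrm{Im}(L)$ across chosen complements --- in the paper's language this amounts to composing $U$ with a shear that carries the duplicated vector to the zero-padded one, so your argument can be read as a corrected version of the paper's. Second, you replace the SVD by a generic rank factorization $W = LR$ together with a basis extension of the rows of $R$; this is slightly more elementary (no orthogonality is used anywhere) and makes the choice $s_1=\mathrm{rank}(W)\le\min(s,p)$ transparent, whereas the SVD buys nothing extra here since the singular values are absorbed into the invertible factors anyway. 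The remaining point --- that sup-norm approximation errors of $\psi_p$ and $\psi_s$ compose controllably on compact sets --- is routine and is treated at the same (informal) level of detail in the paper.
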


where, $\simeq$ represents approximate or simulate.
\begin{proof}
For any $W\in \mathcal{R}^{s\times p}$, we can SVD decomposes it as:
\begin{equation}
    W = U\cdot\Lambda\cdot V,
\end{equation}
where $U\in \mathcal{R}^{s\times s}, V\in \mathcal{R}^{p\times p}$ are all orthogonal matrices, $\Lambda=\begin{pmatrix}
    diag(\lambda_1,\lambda_2,\cdot\cdot\cdot,\lambda_{s_1}) & \mathbf{0}\\
    \mathbf{0}&\mathbf{0}\\
\end{pmatrix}_{s\times p}$is a diagonal matrix composed by nonzero eigenvalues of $W$: $\lambda_i, i\in[1,s_1]$, and zeros, where $s_1$ is the rank of the matrix $W$.
Because all matrices can be regarded as functions, therefore:
\begin{equation}
    W(X) = U(\eta_{s_1,s}(\chi_{p,s_1}(\Lambda'\cdot V)(X))),
\end{equation}
where, $\Lambda'=diag(\lambda_1,\lambda_2,\cdot\cdot\cdot,\lambda_{s_1},1,\cdot\cdot\cdot,1)$. That is, the functional mapping process of $W(X)$ can be decomposed into four steps: matrix multiplication by $\Lambda'\cdot V$, projection $\chi_{p,s_1}$(projecting a $p$ dimensional vector to an $s_1$ dimensional vector), vector expansion $\eta_{s_1,s}$ (expanding the $s_1$ dimensional vector to an $s$ dimensional one), and matrix multiplication by $U$. Notice that the first and the last steps are invertible because the corresponding matrices are invertible.
Therefore, according to \cite{Teshima2020inn,Teshima2020ode}, there are invertible neural networks $\psi_s$ and $\psi_p$ that can simulate orthogonal matrix $U$ and the invertible matrix $\Lambda'\cdot V$, respectively. Thus, $\psi_s\simeq U$, and $\psi_{p}\simeq \Lambda'\cdot V$. Then we have:
\begin{equation}
    (\psi_{s}\circ\eta_{s_1,s})\circ(\chi_{p,s_1}\circ \psi_{p})\simeq W,
\end{equation} 
That is $W(X)$ can be approximated by an extended stacked encoder.
\end{proof}

With lemma \ref{lemma.wx}, we can prove theorem \ref{thm.universal}. We at first restate theorem \ref{thm.universal} here:

\textbf{Theorem} \ref{thm.universal}(Universal Approximating Theorem of Stacked Encoder): For any continuous function $f$ which is defined on $K\times \mathcal{R}^p$, where $K\in \mathcal{R}^p$ is a compact set, and $p>q\in \mathcal{Z^+}$, there exists an integer $s$ and an extended stacked encoder $\phi_{p,s,q}: \mathcal{R}^p\rightarrow \mathcal{R}^q$ with $s$ hidden size and an expansion operator $\eta_{p,s}$ such that:

\begin{equation}
    \phi_{p,s,q}\simeq f,
\end{equation}
Thereafter, an extended stacked encoder is of universal approximation property which means that it can approximate(simulate) any coarse-graining function defined on $\mathcal{R}^p\times \mathcal{R}^q$.

\begin{proof}
According to universal approximation theorem in \cite{hornik1991universal,Haykin1998neural}, for any function $f$ defined on $K\times \mathcal{R}^p$, where $K\in \mathcal{R}^p$ is a compact set, and $p>q\in \mathcal{Z^+}$, and small number $\epsilon$, there exists an integer $s$ and $W\in\mathcal{R}^{s\times p}, W'\in\mathcal{R}^{q\times s}, b\in\mathcal{R}^{s}$ such that:

\begin{equation}
    W'\cdot \sigma(W+b)\simeq f,
\end{equation}
where, $\sigma(\boldsymbol{x})=1/(1+\exp(-\boldsymbol{x}))$ is the sigmoid function on vectors.

According to Lemma \ref{lemma.wx} and $+b$ and $\sigma(\cdot)$ are all invertible operators, therefore, there exists invertible neural networks $\psi_{q},\psi_{s}',\psi_{s},\psi_{p}$, and two integers $s_1,s_2$ which are ranks of matrices $W'$ and $W$, respectively, such that:
\begin{equation}
    (\psi_{q}\circ\eta_{s_2,q}\circ\chi_{s,s_2}\circ\psi_{s}')\circ(\psi_{s}\circ\eta_{s_1,s}\circ\chi_{p,s_1}\circ\psi_{p})\simeq W'\cdot\sigma(W\cdot+b),
\end{equation}
where, $\psi_{s}\circ\eta_{s_1,s}\circ\chi_{p,s_1}\circ\psi_{p}$ approximates(simulates) the function $\sigma(W\cdot+b)$ and $\psi_{q}\circ\eta_{s_2,q}\circ\chi_{s,s_2}\circ\psi_{s}'$ approximates(simulates) the function $W'\cdot$.

Therefore, if we let $\phi_{p,s,q}=(\psi_{q}\circ\eta_{s_2,q}\circ\chi_{s,s_2}\circ\psi_{s}')\circ(\psi_{s}\circ\eta_{s_1,s}\circ\chi_{p,s_1}\circ\psi_{p})$, then $\phi_{p,s,q}\simeq f$
\end{proof}
In real applications, although the basic encoder and the extended versions do not include expansion operator, we always expand input vector before it is input for the encoder. Therefore, it is reasonable to believe that Theorem \ref{thm.universal} still holds for stacked encoders.
\section{SIR Model}
\label{sec:sir}

In this section, we will introduce various parameters and specific methods in the SIR experiment. 

\subsection{Data generation}
\label{sec:sir data detail}
In Equation\ref{eq:sir noise}, $\boldsymbol{\xi}_1,\boldsymbol{\xi}_2 \sim \scriptsize{N}(0,\Sigma)$ are two-dimensional Gaussian noises and independent each other, and $\Sigma=\begin{pmatrix}\sigma^2 & -\frac{1}{2}\sigma^2 \\-\frac{1}{2}\sigma^2 & \sigma^2\end{pmatrix}$. For this covariance matrix, we can easily deduce that $\boldsymbol{\xi}_1$ and $\boldsymbol{\xi}_2$ are both two-dimensional normal distributions with a correlation coefficient of $-\frac{1}{2}$. The parameter $\sigma$ controls the overall magnitude of the noise, which is set to $\sigma=0.03$ in the experiments conducted in this paper. For macroscopic state data, we discretize the original continuous model with $dt=0.01$ and run 7 time steps for each starting point. After sampling with different starting points, we concatenate equal amounts of $\boldsymbol{\xi}_1$ and $\boldsymbol{\xi}_2$ to obtain four-dimensional microscopic data. 

In the SIR model experiment conducted in this article, we generated two datasets: the complete dataset and the partial dataset. For the complete dataset, we uniformly sampled 9000 initial points from the entire range of possible values for $S$ and $I$ ($S\geq0,I\geq0,S+I\leq1$), resulting in a total of 63000 data points. For the partial dataset, we uniformly sampled 6000 initial points from the region where $S\geq\frac{1}{3}$. Additionally, to satisfy the requirements of KDE estimation for the support set, we randomly sampled 10 initial points from the region where $S\le \frac{1}{3}$. In total, the partial dataset consists of 42070 sample points. The training and testing results of the NIS+ neural network on these two different sample sets are presented in Figure \ref{fig:sir}.

\subsection{Training and testing details}
\label{sec:sir train}
For NIS+, two stages are separated at the 3000th epoch(see method section \ref{sec:train nis+}). In the second stage (after 3000 epochs), NIS+ incorporates inverse probability weighting and bidirectional dynamic learning, distinguishing it from the NIS model. 

In Figure \ref{fig:sir}(d), the test involves selecting 20 starting points within an area where the training samples are missing, evolving them for 10 steps using the learned macro-dynamics, and then decoding the results back into micro-states. The mean of the multi-step prediction errors for these 20 starting points is displayed.
In Figure \ref{fig:sir}(e), considering the significant differences in the range of experimental data, the threshold $\epsilon$ measures the magnitude of relative error, which is obtained by dividing the absolute error by the standard deviation of the data itself. We set $\epsilon=0.3$, which means that the machine training is considered sufficient when the relative error is less than or equal to 0.3.

\subsection{Methods for the vector field results}
\label{sec:sir vector field}
 The following step-by-step explanation outlines how to average the vector field results of emergent dynamics, as demonstrated in Figure \ref{fig:sir_si}. The aim of this process is to compute the vector field through extensive sampling and averaging. Initially, we generate grid points on the phase space with a step size of 0.007, as shown in Figure \ref{fig:sir_si}(a). This step serves to obtain a significant number of samples for further analysis. To achieve this, we introduce noise to these grid points and their subsequent time steps, extending them to four-dimensional data. Next, we employ the well-trained NIS+ model to encode these extended data points, resulting in samples in the latent space. These samples are then divided into different grids on the latent space with a step size of 0.015, enabling a comprehensive analysis of the emergent dynamics. The subsequent calculation involves determining the average motion vector for each grid using the methodology outlined in Figure \ref{fig:sir_si}(d). It is important to note that the primary objective of this step is to compute the average vector field by aggregating the motion vectors across the grids. For comparison, we also generate sparser grid points on the phase space with a step size of 0.07. By applying the same noise and encoding techniques, we establish a one-to-one correspondence between the motion vectors of the grids and the new grid points. This correspondence allows us to depict the vector field of the macro latent space of NIS+ as shown in Figure \ref{fig:sir}(c). Similarly, by replacing the encoder of NIS+ with the encoder of NIS, we can obtain the latent space vector field of NIS, as illustrated in Figure \ref{fig:sir}(f). This approach, based on extensive sampling and averaging, provides valuable insights into the macro features of emergent dynamics.


\begin{figure}
    \centering
    \includegraphics[width=0.8\textwidth]{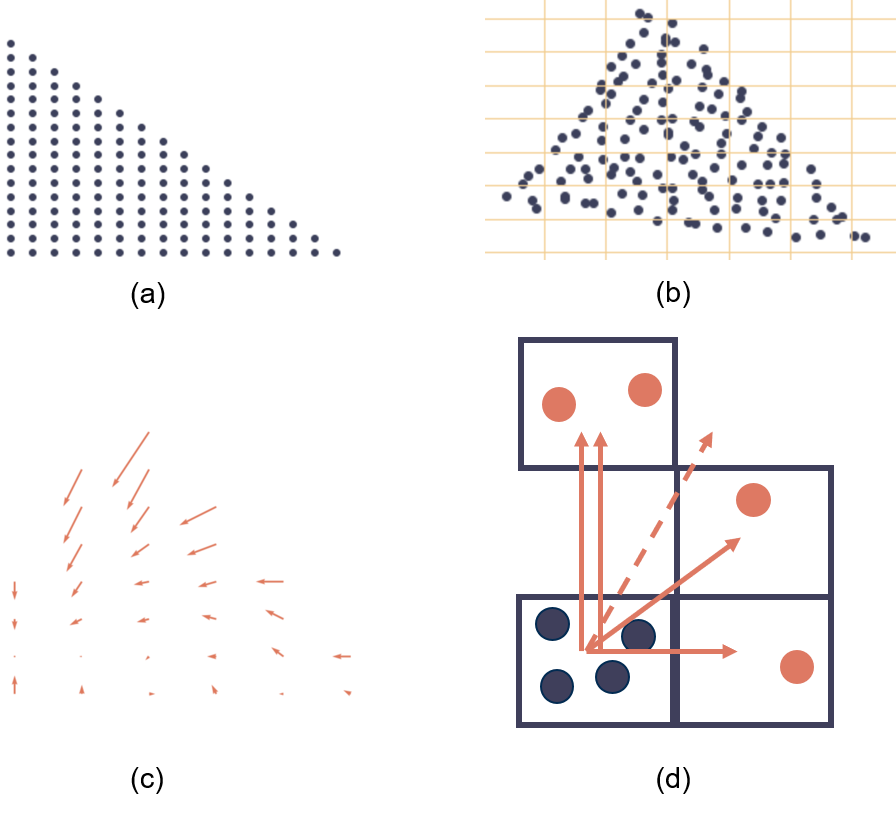}
    \caption{The method of generating the vector fields in Figure \ref{fig:sir}(c)(f). (a) Generate a sufficient number of grid points in phase space. (b) Divide the encoded latent space into grids with certain intervals, and each sample point in the latent space will fall into a certain grid. (c) Take the expectation of different direction vectors for each grid, obtaining the corresponding motion vector for that grid. (d) Zoom in on the local amplification schematic of the averaging process. The black dots in the figure represent the coordinates of sample points at a certain time, and the red dots represent the coordinates of these sample points at the next time step. If a sample point jumps from the i-th grid to the j-th grid in one time step, then a count is made in the direction of i→j, represented by a solid red arrow in the figure. Taking the expectation of all solid arrows yields the motion vector corresponding to that grid, represented by a dashed red arrow.}
    \label{fig:sir_si}
\end{figure}

In Figure \ref{fig:sir}(c) and (f), we provide a theoretic vector field. It is obtained by the folowing steps. First, 
\begin{gather}
     \frac{d\boldsymbol{y}_t}{dt}=\frac{\partial\phi(\boldsymbol{x}_t)}{\partial t}=\frac{\partial\phi(\boldsymbol{x}_t)}{\partial\boldsymbol{x}_t}\frac{d\boldsymbol{x}_t}{dt},
\end{gather}
where $J_{\phi}\equiv\frac{d\phi(\boldsymbol{x}_t)}{d\boldsymbol{x}_t}$ is the Jacobian matrix of the encoder. Therefore, we could obtain the vector by representing the relationship between the true dynamics and the latent space dynamics. We notice that according to Equation \ref{eq:sir noise}: 
\begin{gather}
     \frac{d\boldsymbol{x}_t}{dt}=\frac{d(S_t,S_t,I_t,I_t)}{dt},
\end{gather}
Because $\boldsymbol{x}_t = (\boldsymbol{S}'_t,\boldsymbol{I}'_t)$ and the noises are independent of time and have zero mean, we can replace $\frac{d\boldsymbol{x}_t}{dt}$ in the experiment. Finally, we obtain:
\begin{gather}
     \frac{d\boldsymbol{y}_t}{dt}=J_{\phi}\cdot\frac{d(\boldsymbol{S}_t,\boldsymbol{I}_t)}{dt},
\end{gather}
where, $\boldsymbol{S}_t=(S_t,S_t)$ and $\boldsymbol{I}_t=(I_t,I_t)$. Therefore, after multiplying the ground truth vector field with the Jacobian matrix $J_{\phi}$, we obtain the theoretic prediction of the vector field in the graph. The difference between this theoretic vector field and the true dynamics is due to the errors of the encoder $\phi$.


In order to achieve a better dynamic vector field, we specifically conducted multi-step prediction training when drawing Figure \ref{fig:sir}(c) and (f). Specifically, given the input of microstate at a certain moment, we make multi-step predictions for the next 10 microstates respectively and combine them with certain weights to form a loss for gradient backpropagation. The weights decay at a rate of $e^{-0.2t}$ as the prediction steps increase. For reverse prediction, we take the target corresponding to the last step as the input and perform multi-step prediction training in the same way. In addition, we adopt the approach of interval sampling, taking every 100 sample points as a step, so that NIS+ can make long-term predictions on the SIR model.

\section{Boids Model}
\label{sec:Boidsmodel}
In 1986, Craig Reynolds created a simulation of the collective behavior of birds, known as the Boids model \cite{Reynolds1987}. This model only used three simple rules to control the interactions between individuals, resulting in flock-like behavior.
\subsection{Method of Simulating the Trajectory of Boids}
\begin{figure}[htbp]   
    \centering       
    \includegraphics[width=0.6\textwidth]{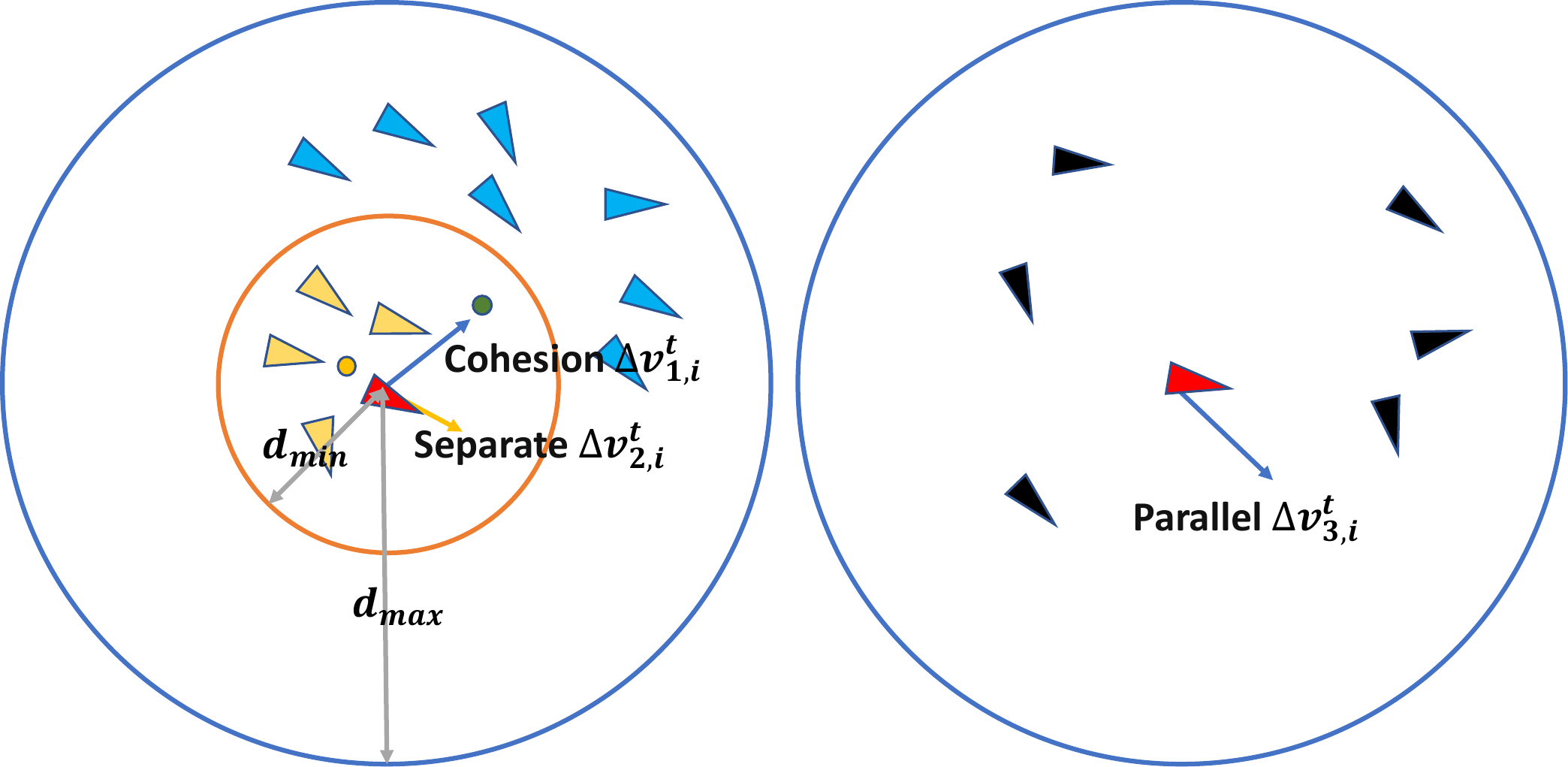}
  \caption{Boids model. By modifying $d_{\max}$ and $d_{\min}$ at time $t$, the aggregation of bird flocks will change. Specify $\Delta v_ {1, i}^t$, $\Delta v_{2, i}^t$, $\Delta v_{3, i}^t$ as the variation of separation, parallelism, and cohesion velocities, respectively. The $i$-th bird is marked in red. At time $t$, birds with a distance less than $d_{min}$ from bird $i$ will affect $\Delta v_{2, i}^t$. Birds with a distance less than $d_{max}$ from bird $i$ will affect $\Delta v_ {1, i}^t$, $\Delta v_{3, i}^t$. In this way, birds can maintain a distance from each other and allow the Boids to fly in an orderly manner in one direction.}  
  \label{fig:Boidsmodel}
\end{figure}

In Boids model, the micro-state of each individual boid $i$ is a four dimensional vector $(x_i,y_i,v_{x,i},v_{y,i})$, where $\boldsymbol{x}_i=(x_i,y_i)$ is the positional vector, and $\boldsymbol{v}_i=(v_{x,i},v_{y,i})$ is the velocity vector. Thus, the complete micro-state for all $N$ boids is a $4N$ dimensional vector. The micro-states follow the dynamical update equations:

\begin{eqnarray}
\label{Boids}
\begin{cases}
\boldsymbol{x}_i^{t+1}=\boldsymbol{x}_i^{t}+\boldsymbol{v}_i^{t},\\
\boldsymbol{v}_i^{t+1}=\displaystyle\frac{\boldsymbol{v}_i^{t}+\Delta \boldsymbol{v}_ {i}^t+\varepsilon_i^t}{||\boldsymbol{v}_i^{t}+\Delta \boldsymbol{v}_ {i}^t+\varepsilon_i^t||}||\boldsymbol{v}_i^{t}||.
\end{cases}
\end{eqnarray}
Where, $\Delta\boldsymbol{v_ {i}}^t$ is the external force exerted on $i$ by its neighbors, and it is composed with three components which called cohesion, alignment, and separation, respectively, according to the dynamical rules of Boids model.  $\varepsilon_i^t=(\cos\Delta\alpha_i^t,\sin\Delta\alpha_i^t)^T$ is a force on turning direction exerted on each boid at each time step, where $\Delta\alpha\in[-\pi,\pi]$ is a fixed value or random number.

The motion vector of bird $i$ at time $t$ can be expressed as $(x_i^t,y_i^t,v_{x,i}^t,v_{y,i}^t)$ where $x_i^t$ and $y_i^t$ represent the position coordinates, $v_{x,i}^t$ and $v_{y,i}^t$ represent the projection of velocity on two coordinate axes. With $N$ boids, $i=1, 2 , \dots, N$. The motion vector can be decomposed into position vector and velocity vector $\textbf{x}_i^t=(x_i^t,y_i^t)$ and $\textbf{v}_i^t=(v_{x,i}^t,v_{y,i}^t)$. The distance between two boids $i$ and $j$ can be expressed as $d_{ij}=||\textbf{x}_i^t-\textbf{x}_j^t||=\sqrt{(x_i-x_j)^2+(y_i-y_j)^2}$. Although the Boids model belongs to a continuous system, in order to facilitate the simulation process and subsequent experiments, we uniformly set $\Delta t=1$. For a single boid in the model, the acceleration $\textbf{a}_i^t=\textbf{v}_i^{t+1}-\textbf{v}_i^{t}$ can be divided into four parts: separation, parallelism, cohesion, and random deflection angle noise. We stipulate that the velocity magnitude of a boid during the flight remains stable, so the impact of acceleration mainly changes the direction of Boids' flight. After knowing the motion state vector of each boid in the flock at time $t$, the motion state at time $t+1$ can be generated. Specify $\Delta v_ {1, i}^t$, $\Delta v_{2, i}^t$, $\Delta v_{3, i}^t$ as the variation of separation, parallelism, and cohesion velocities, respectively. In which 
\begin{eqnarray}
\label{delteVboids}
\begin{cases}
\Delta v_ {1, i}^t=\displaystyle\frac{\textbf{x}_i^t-\textbf{x}_{\Phi_i}^t}{||\textbf{x}_i^t-\textbf{x}_{\Phi_i}^t||},\\
\Delta v_ {2, i}^t=\displaystyle\frac{\sum_{k\in\Psi_i^t}\textbf{v}_k^t}{||\sum_{k\in\Psi_i^t}\textbf{v}_k^t||},\\
\Delta v_ {3, i}^t=-\displaystyle\frac{\textbf{x}_i^t-\textbf{x}_{\Psi_i}^t}{||\textbf{x}_i^t-\textbf{x}_{\Psi_i}^t||}.\\
\end{cases}
\end{eqnarray}
We stipulate that the maximum and minimum detection distances for Boids are $d_{\max}$ and $d_{\min}$. The sets of Boids within the minimum and maximum detection range are specified as $\Phi_i^t=\{j|d_{ij}^t<d_{\min}\}$ and $\Psi_i^t=\{j|d_{ij}^t<d_{\max}\}$. So the centers of gravity of $\Phi_i$ and $\Psi_i$ are represented as $\textbf{x}_{\Phi_i}^t=\sum_{k\in\Phi_i^t}\textbf{x}_k^t/|{\Phi_i}|$ and $\textbf{x}_{\Psi_i}^t=\sum_{k\in\Psi_i^t}\textbf{x}_k^t/|{\Psi_i}|$, which are the mean values of the boids' position coordinates within two sets. With equation 
\begin{equation}
\Delta \boldsymbol{v}_ {i}^t=\Delta v_ {1, i}^t+\Delta v_{2, i}^t+\Delta v_{3, i}^t
\end{equation}
we can obtain Equation \ref{Boids} in Section \ref{sce:Boids}. The relationship between the above variables is shown in Fig.\ref{fig:Boidsmodel}. By modifying the starting vector $(x_i^0,y_i^0,v_{x,i}^0,v_{y,i}^0)$ or $d_{\max}$ and $d_{\min}$, the aggregation of bird flocks will change. And we can adjust the trajectory and randomness of Boids by modifying the mean and variance of the deflection angle $\varepsilon_i^t$ in Equation \ref{Boids}.

\subsection{Methods for Collecting Experimental Data and Training}
We record the velocity vector and position vector of each bird at each time $t$ during its flight as data. For each time step $t$, we need to record the data at that time and the data at the next time step $t+1$. We merge the data into $4N$ dimensional vectors at each time step $t$ as
\begin{equation}
X_t=(x_1^t,y_1^t,v_{x,1}^t,v_{y,1}^t, \dots, x_N^t,y_N^t,v_{x,N}^t,v_{y,N}^t).
\end{equation}
This vector corresponds to our micro-state data $X_t$, $p=64$. We can repeat the preparation and recording stages multiple times until the data volume meets our expectations. The ID of birds is $1,2,...,16$, evenly divided into two sets, $\{1, 2, \dots, 8\}$ and $\{9, 10, \dots, 16\}$. The model we generated contains two sets of birds randomly generated near the center of the canvas. 

We generate observation data in two phases: preparation and recording. In the preparation phase, one group of birds is randomly generated within a circle with $(148,150)$ as the center and 5 as the radius and the other group of birds is randomly generated within a circle with $(152,150)$ as the center and 5 as the radius. Then by modifying the $(v_{x,i}^0,v_{y,i}^0)$ or the mean and variance of the deflection angle $\varepsilon_i^t$ two groups of birds can fly different trajectories in Fig.\ref{fig:boids1} and \ref{fig:boids4}. The norm of the velocity vector is always 1, that is $||\textbf{v}_i^{t}||\equiv1, t=1,2,\dots.$ We set $d_{max}=5$ and $d_{min}=1$ so that the birds within the same group will interact with each other, while the effects between birds in different groups can be ignored. Subsequently, we allowed two groups of birds to fly for 20 steps without affecting each other, resulting in two groups of birds with distance and relatively stable internal stability. Then the positions and velocities of all birds serve as the starting point for us to record data. 

In the recording phase, we have the birds continue to run 50 steps on the basis of the preparation phase, recording the velocity and position vectors of each bird separately. We randomly sort the generated data and input them into our NIS+ model for training, using the multistory structure of the NIS+ in Fig.\ref{fig:structures} during the training process. We sample the data for 4 batches. After inputting $X_t$, the microstate $Y_t$, $Y_{t+1}$ and predicted value $\hat{X}_{t+1}$ can be output, and training can begin. The encoders and decoders are stacked, and the dynamics learners in different layers, $q=64,32,16,8,4,2,1$, operate in parallel. This design facilitates simultaneous training for different dynamics learners with distinct dimensions and enables parallel searching for the optimal dimension, which is $q=8$. Due to the previous experiments showing that the model performs better when $q=8$ than other scales, and having too many layers can also slow down learning efficiency, we can discard layers with $q<8$ in the later experiments. All the boids are separated into two groups by forcing their $\Delta\alpha_i^t$ as two distinct values $\Delta\alpha_i^t\sim U(0.0058\pi, 0.0098\pi)$ for boids with $i\leq 8$, and $\Delta\alpha_i^t\sim U(-0.0124\pi, -0.0084\pi)$ for boids $i>8$. Therefore, the two groups will have separating trajectories with different turning angles as shown in Figure \ref{fig:boids1}. 

After training with 800,000 epochs by NIS, $\mathcal{J}_q$ reaches its maximum value when $q=8$. We continue to optimize the existing model using NIS+ training 400,000 epochs, the results are shown in Fig.\ref{fig:boids3}. Then we can obtain multi-step prediction data as shown in Figure \ref{fig:boids1}. It can be seen that the multi-step predicted trajectory matches the flight trajectory of the real bird flock. However, if the variance of $\varepsilon_i^t$ is large, the training effect of the model will deteriorate, and the multi-step prediction distance will become shorter as shown in \ref{fig:boids4}. Causal emergence occurs for all tested dimensions($q$) (see Figure \ref{fig:boids3}). With up to $q=8$ dimensional macro-state vectors, NIS+ can best capture the emergent collective flying behaviors of the two groups by tracing their centers of the trajectories. This can be visualized by decoding the predicted macro-states into the predicted micro-states as shown in Figure \ref{fig:boids1} by the two solid lines. With the method of generating data, we can observe the impact of noise. For different observation noises, in cases where the noise is not too significant, we can use the same method to train the model with NIS first, then optimize it with NIS+ to obtain Fig.\ref{fig:boids5}. For different deflection angle noises, after training from random initial parameter values, in order to unify the control variables, we optimized them in the dimension of $q=8$ and obtained Fig.\ref{fig:boids6}.

We test the degree of Rosas' $\Psi$ using the same learned macro-state variable of NIS+. The results, shown in Figure \ref{BoidPSI}, display results of Rosas' $\Psi$ for different extrinsic noises. For Rosas' $\Psi$, all cases yield values are far less than 0. One possible reason to explain this is that much redundant information is ignored by the approximation by $\Psi$. Another possible reason is that the Boids' coordinate data has a large order of magnitude, which can result in a significant increase in the value of lost information. Thus, unreasonable results are produced which makes Rosas' $\Psi$ impossible to determine whether CE occurs. Therefore, the proposed $\Delta\mathcal{J}$ in this paper is a superior method for identifying CE.

\begin{figure}[htbp]   
    \centering       
    \includegraphics[trim={10 0 2 0},clip,width=0.5\textwidth]{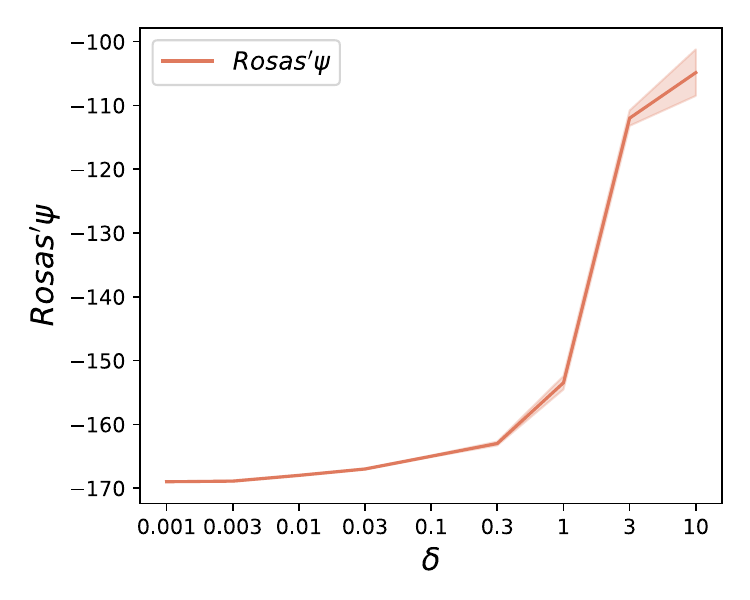}   
  \caption{The results of Rosas' $\Psi$ for different extrinsic noises. For Rosas' $\Psi$, all cases yield values are far less than 0. Thus, unreasonable results are produced which makes Rosas' $\Psi$ impossible to determine whether CE occurs. Therefore, the proposed $\Delta\mathcal{J}$ in this paper is a superior method for identifying CE for Boids model.}  
  \label{BoidPSI}
\end{figure}

\section{Game of Life}
\label{sec:gamelife}
Conway's Game of Life is a famous two dimensional cellular automota model on which various interesting dynamical patterns like glider, square, flower, signal light, honeycomb, traffic light emerge. Different from SIR model and Boids model, the micro-states of Game of Life at each time step are discrete(0 or 1) on a large regular grid as shown in Figure \ref{fig:gamelife}. Further, the micro-dynamics can not be represented by differential or difference equations but by rule tables (details can be referred to section~\ref{sec:data_generation}). 

We train NIS+ using data generated by the Game of Life simulation with random initial conditions and extract the time series of states from the 100th step to the 120th step. Figures \ref{fig:gamelife}(a),(b) and (c) show the dynamical patterns that are generated by the ground truth simulations(the first row) and the predictions by NIS+(the third row), as well as the emergent macro-states that can make those predictions(the second row). We input two images with successive time steps into NIS+, and obtain another image pair with next two successive time steps. Compare the upper picture and lower ones, the patterns are similar. However, the learned and predicted patterns in the third column, specifically the ``glider'' pattern, appear vague due to the limited occurrence of training samples with this pattern in random initial conditions. To enhance the quality of predictions, we can generate a new set of training samples that include initial conditions with two ``gliders''. As a result, the predictions become clearer, as depicted in Figure \ref{fig:gamelife}(d), even though the number of gliders in this test environment is three. That means, NIS+ can capture the patterns including moving, static, and oscillating structures. 

Furthermore, we evaluate the generalization ability of the models in test environments that differ from the training environments. We compare the multi-step prediction performance of NIS and NIS+ across eight different pattern types, which are distinct from the initial random patterns. The results are depicted in Figure \ref{fig:gamelife}(f) showing that NIS+ consistently achieves a higher AUC (area under the curve) than NIS for all the pattern types. Where, in the tick labels of the x-coordinate, we adopt the format of ``pattern name (quantity)'' to represent various initial conditions. For instance, ``glider(2)'' signifies an initial configuration comprising two gliders. Thus, we have provided the evidence that the enhanced NIS+ method possesses superior generalization ability in capturing these patterns.

We further test the degree of CE ($\Delta\mathcal{J}$) and compare it with Rosas' $\Psi$ using the same learned macro-state variable of NIS+. We use the same patterns as initial conditions to test for the comparison. The results, shown in Figure \ref{fig:gamelife}(g), display results of CE, in which each bar representing a combination of $\Delta\mathcal{J}$ and Rosas' $\Psi$ for one initial pattern. Regarding $\Delta\mathcal{J}$, except for the ``random'' case, all eight cases demonstrate the occurrence of CE. The case with ``glider'' patterns exhibit the lowest degree of CE due to poor prediction (see Figure \ref{fig:gamelife}(c)). The remaining seven patterns show similar $\Delta\mathcal{J}$ values. These results indicate that $\Delta\mathcal{J}$ provides a more reasonable indication of the occurrence of CE, aligning with our intuition. However, for Rosas' $\Psi$, all cases yield values less than or equal to 0. One possible reason to explain this is that much redundant information is ignored by the approximation by $\Psi$. Thus, unreasonable results are produced which makes Rosas' $\Psi$ impossible to determine whether CE occurs. Therefore, the proposed $\Delta\mathcal{J}$ in this paper is a superior method for identifying CE.

Additionally, this example highlights the versatility of NIS+. In order to conduct the aforementioned experiments, we needed to coarse-grain the micro-states of the cellular automata in both spatial and temporal dimensions. To address this challenge, we incorporated the concept of spatiotemporal convolution. The architecture used in this experiment is illustrated in Figure \ref{fig:gamelife}(e). The entire coarse-graining process can be divided into two steps: first, aggregating information within a fixed-size window (a 3x3 window in this paper) to obtain spatial coarse-grained results; and second, aggregating these results over multiple successive time steps to form a spatiotemporal coarse macro-state. All of these processes are implemented through parallel encoders in NIS+.

\begin{figure}[htbp]    
  \centering
     \includegraphics[width=1\textwidth]{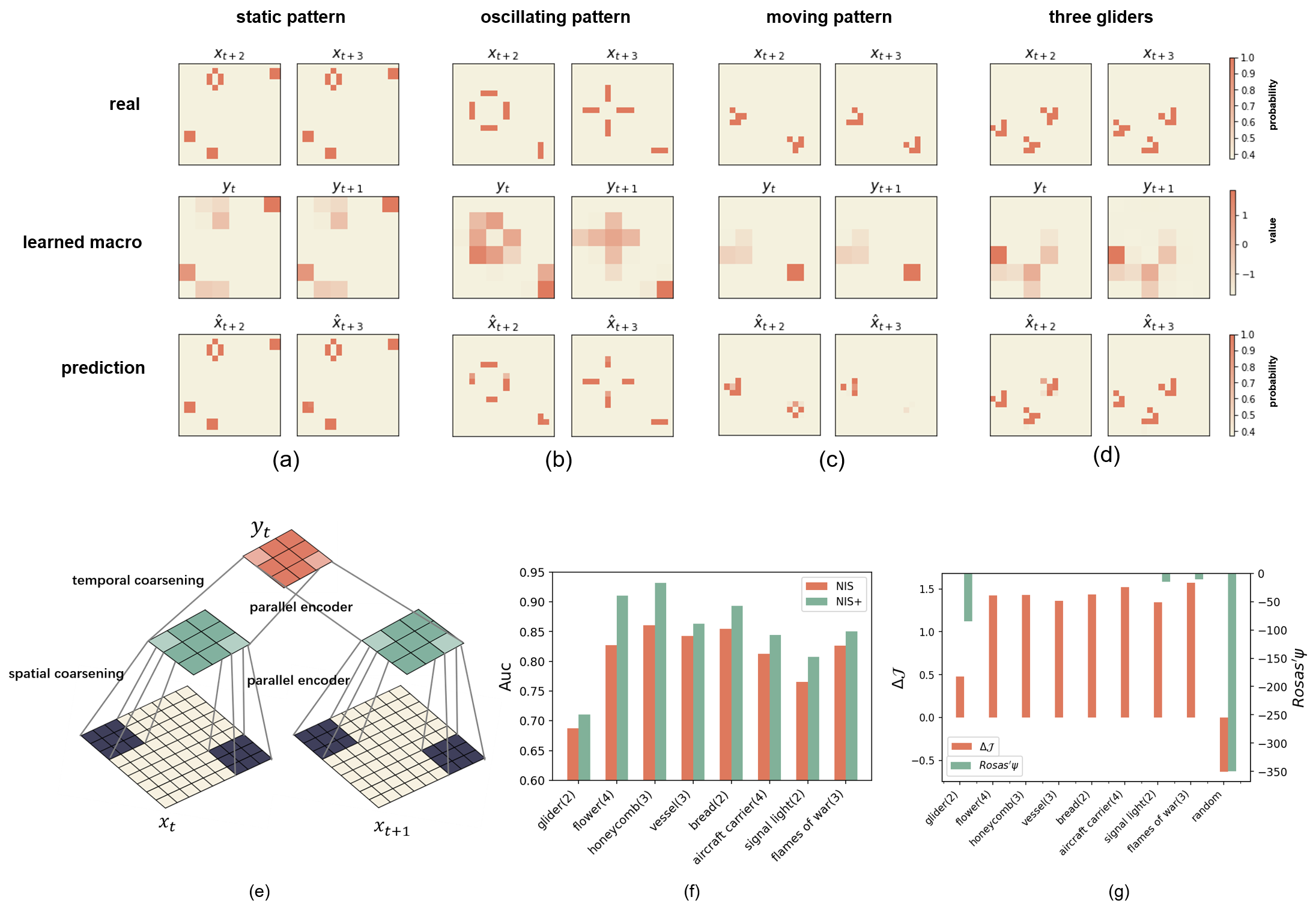}
      

  \caption{Experiments on the Game of Life. (a-d) depict the comparisons between the ground truth simulations, learned macro-states, and the predictions made by NIS+ for different patterns. In (a-c), the training data is generated by simulating the Game of Life for 120 steps with random initial conditions, and only the images from steps 100-120 are selected to form the training set. (d) depicts a generalization test involving the number of gliders was conducted, transitioning from the original input of two gliders to three gliders. In (a-d), the predicted images are the averaged results obtained by sampling from the predicted macro-states using the decoder 20 times. The colors represent the probability of repetitions in the 20 samples. (e) The architecture of the encoder is illustrated, which can be divided into spatial encoding (mapping 3x3 grids to 1 grid) and temporal encoding (mapping 2 time steps to 1 step) stages. These processes are implemented using parallel encoders. (f) shows the comparisons of AUC (area under the curve) between NIS and NIS+ on different patterns, where each result is the average of multi-step predictions (2 steps). Each case represents multiple quantity combinations of a selected pattern as the initial states. There are 9 patterns, which can be divided into four categories: static (flower, vessel, bread, aircraft carrier and honeycomb), oscillating (signal light, flames of war), moving (glider), and random (random), The horizontal coordinate corresponding to each bar in the figure is represented by the format of ``pattern name (quantity)''. (g) Comparisons of the degrees of CE ($\Delta\mathcal{J}$, dimension averaged EI and Rosas' $\Psi$) are tested with various patterns. }
  \label{fig:gamelife}            
\end{figure}


\subsection{Data generation}
\label{sec:data_generation}
In this paper, we use Conway's Game of Life as the experimental object, in which each cell has two states for two-dimensional state input: alive~(1) or dead~(0), and each cell is affected by its eight neighbors. The evolution of the Game of Life is only affected by the input state and its update rules, in which the Game of Life has four evolutionary rules, respectively corresponding to cell reproduction and death, and so on. The update rules for the Game of Life are shown in the following table:

\begin{table}[htbp]
  \centering
  \caption{The update rules for the Game of Life.}
  \begin{tabular}{|c|c|}
    \hline
     $x_t$ & $x_{t+1}$ \\ 
    \hline
    0 & 1~(there are three living cells around) \\ 
    \hline
    \multirow{3}{*}{1} & 0~(less than two surviving cells around the cell (excluding two)) \\
    \cline{2-2}
     & 1~(there are two or three living cells around) \\ 
    \cline{2-2}
     & 0~(there are more than three living cells around) \\ 
    \hline
  \end{tabular}
  \label{table:game_of_life}
\end{table}

The training sample generation process of the Game of Life is as follows: firstly, a state $x_t$ is initialized. When considering a temporal coarse-grained of two steps, the subsequent three steps of states $x_{t+1}$, $x_{t+2}$, and $x_{t+3}$ are then generated based on the update rule and are input to the machine learning model. The two input states are $x_{t}$ and $x_{t+1}$, with the micro-dynamics output being $x_{t+1}$ and $x_{t+2}$. Due to the utilization of spatiotemporal coarse-graining, the macro-dynamics will output a macro-state which will be decoded into the micro-states $x_{t+2}$ and $x_{t+3}$. This process is repeated multiple times (50,000 samples) and generate the data for training in Figure~\ref{fig:gamelife}d. While in the other experiments, we generate 500,000 samples. 

\subsection{The model predictive ability on glider pattern}
\label{sec:gofl_generalization}

We then test the capability of capturing dynamical patterns on glider pattern, 
where the model was trained based on two glider patterns. The model depicts a good prediction effect and the results are shown in Figure~\ref{fig:game_of_life_si}. 

In addition, please refer to Table~\ref{table:parameters} for more detailed information on the other model parameters.

\begin{figure}[htbp]    
  \centering  
  \includegraphics[width=0.5\textwidth]{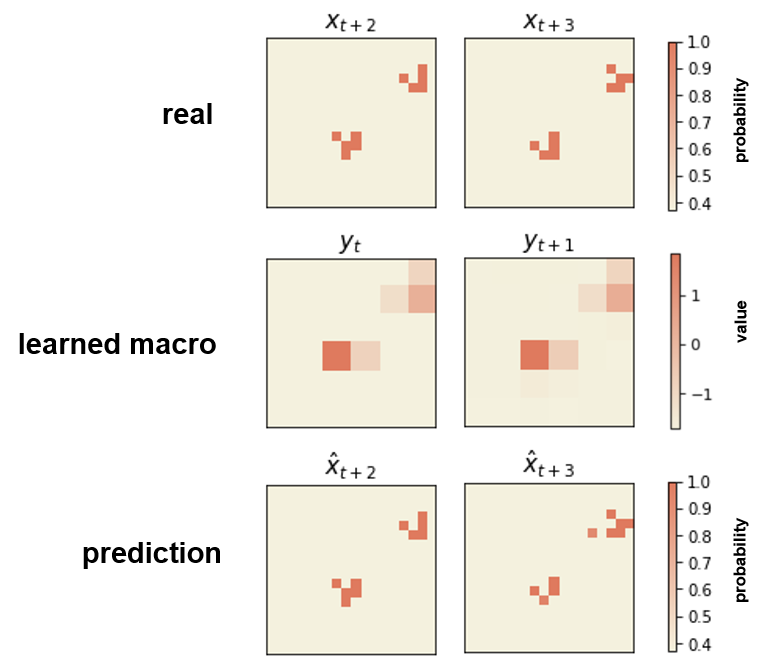}
  \caption{\textbf{Experiments of Game of Life.} 
  The results show the dynamical patterns that are generated by the ground truth simulations(the first row) and the predictions by NIS+(the third row), as well as the emergent macro-states that can make those predictions(the second row). Here the input state size of the model test is 18$\times$18$\times $2 and also consists of two gliders.}  
  \label{fig:game_of_life_si}   
  \vspace{-5pt}
\end{figure}


\section{Brain Results}
\label{sec:brain}

\subsection{resting state}\label{sec:resting state}

We also apply integrated gradient to see what seven macro dimensions mean for micro brain areas. Different from results from visual fMRI,we can see some distinct patterns. Unfortunately, we can't see that some one-to-one relationship between these seven dimensions and seven systems defined by Schaefer atlas(See Figure \ref{fig:brain_si_2})

\begin{figure}[!ht]
	\centering
	\includegraphics[width=\textwidth]{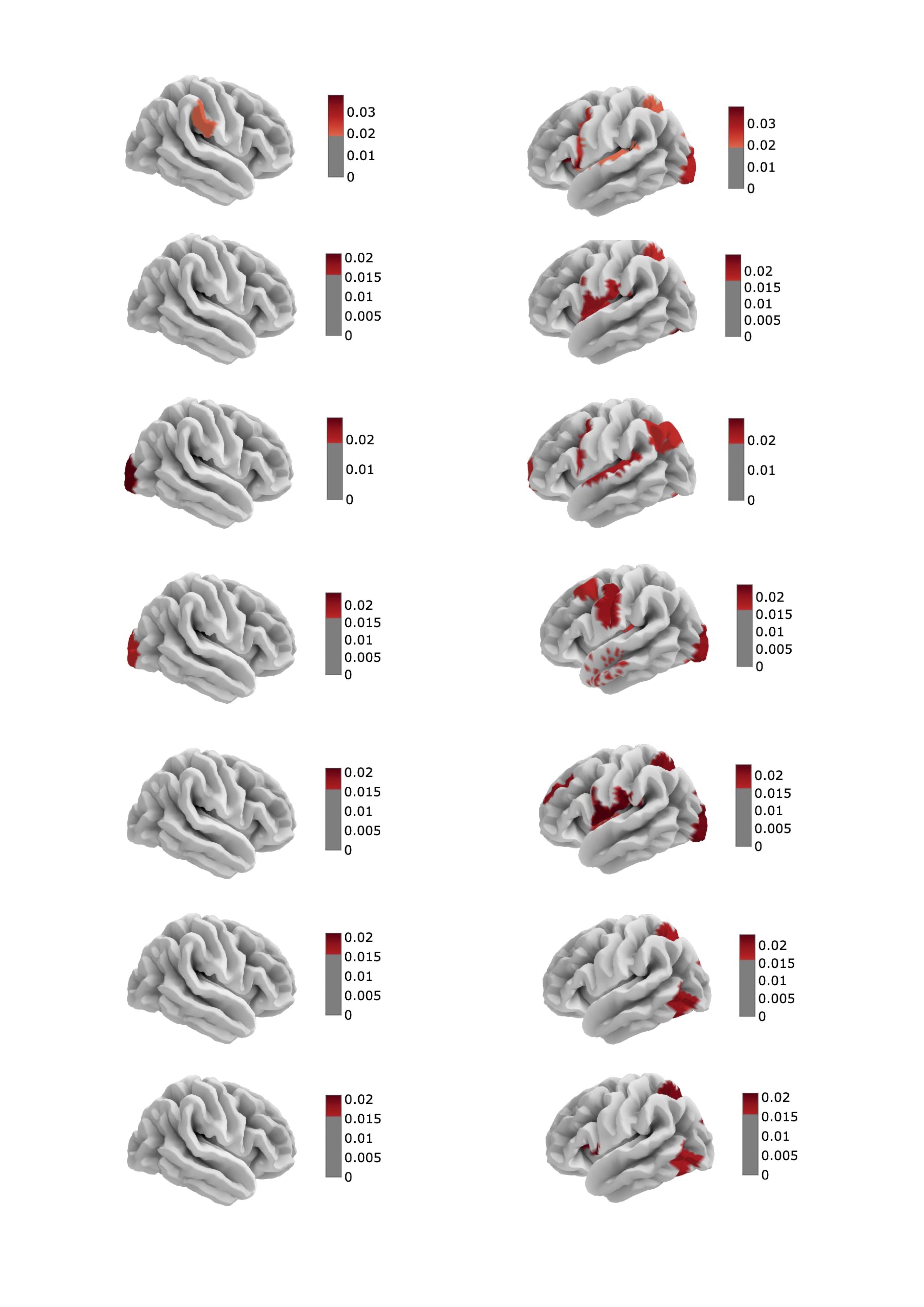}
	\caption{The integrated gradient results in resting NIS+ analysis when $q=7$.Each pair of left hemisphere and right hemisphere correspond to one macro dimension.Only five percent of most important areas in attribution value are drawn to better show the distinct patterns between seven different marco dimenions}
	\label{fig:brain_si_2}
\end{figure}

\subsection{AAL Results}\label{sec:AAL Results}

We repeat what we have done in sections \ref{Schaefer Results} and \ref{fMRI_preprocessing} expect that we have utilized AAL3 atlas and made $\lambda$ to be 0.5 for layer 1 training of NIS+ since $\lambda=1$ will lead to a trivial solution where macro predictions for all areas will remain the same, which make the prediction can't be meaningful. We can see a converging evidence that NIS+ framework will lead to an improvement of $\Delta\mathcal{J}$(see Figure \ref{fig:brain_si_preprocessed}(b) and a stronger correlation with those areas which responds to the visual areas(See Figure \ref{fig:brain_si_preprocessed}(a) and (c)).

\begin{figure}[!ht]
	\centering
	\includegraphics[width=\textwidth]{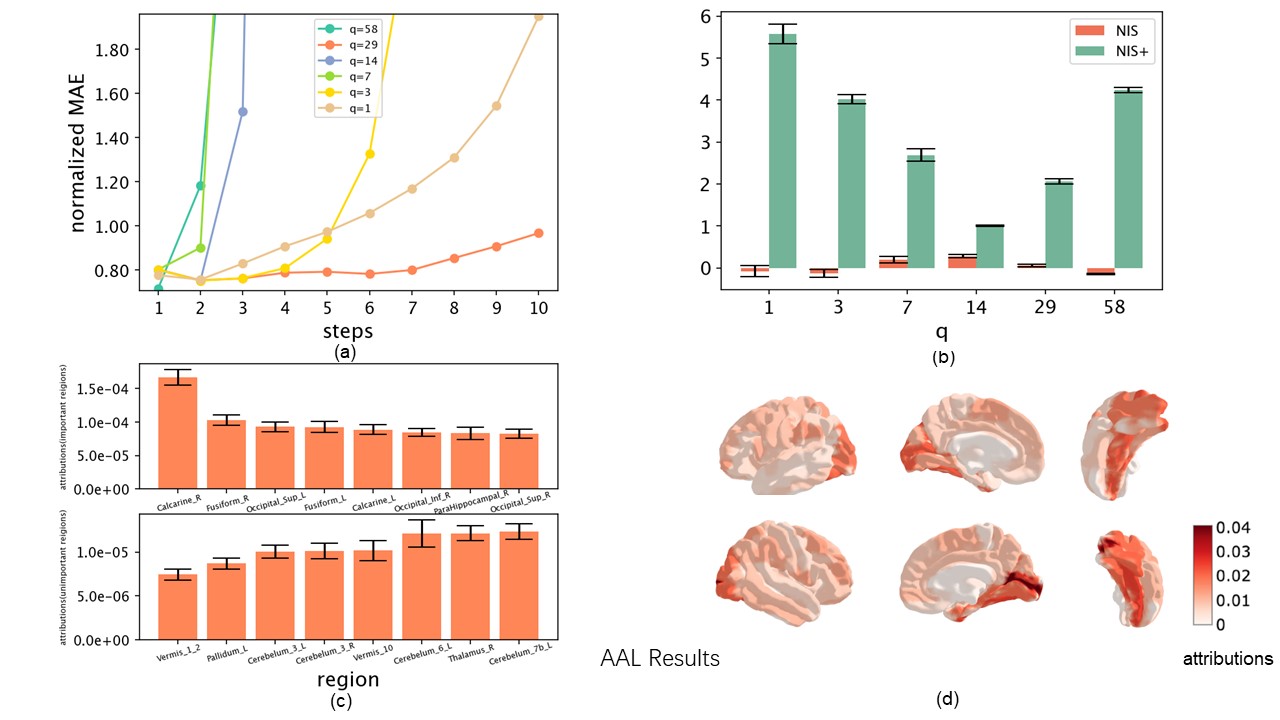}
 	\caption{The learning results, the degree of causal emergence, and the attribution analysis of NIS+ and NIS on the fMRI data for the brains. (a) The mean errors of the multi-step predictions increase with the prediction steps under different scales ($q$) on the test dataset. (b) Measures of CE (dimension averaged, $\Delta\mathcal{J}$) are compared among different models and different datasets including movie-watching fMRI (visual fMRI) and resting fMRI. The bars show the averaged results for 10 repeating experiments, and the error bar represents the standard deviation. (c) The attributions of top 8 significant and insignificant areas under the AAL Atlas are presented. The error bar represents the standard errors. (d) Attribution maps for movie-watching(visual) fMRI data are displayed. The maps show the left hemisphere from the view of the left, left hemisphere from the view of the right, right hemisphere from the view of the right, and right hemisphere from the view of the left. Also, the right column reflects a detailed map for visual areas. The upper is left visual areas and the bottom is right visual areas. The colors represent the normalized absolute values of the integrated gradient.  }
	\label{fig:brain_si_preprocessed}
\end{figure}

\section{The Structure and parameters of neural networks.}
\label{sec:parameter table}

Next, we list the neural network parameters used in our experimental study, as shown in Table \ref{table:parameters}. The parameter $\lambda$ in the table is derived from Equation \ref{new optimization} and serves as the weight coefficient for the forward prediction error. $L$ is an assumption about the range of data distribution, which represents the length of a hypercube with side length $2L$ that is used to calculate the probability value for a uniform distribution. It is not difficult to see that different types of neural networks have been used to handle different types of complex systems. Both the Boids and fMRI experiments employ the Multistory NIS+ framework, which traverses different macroscopic scales and eventually selects an optimal scale (the dimension of input or output in the Inverse Dynamics Learner table).

\begin{table}[]
  \centering
  \caption{Parameter table for all experiments conducted using neural networks.}
  \scalebox{0.7}{
  \begin{tabular}{|c|c|c|}
    \hline
    Experiment & Module & Parameter\\ \hline
    \centering\multirow{3}{*}{SIR} & Encoder(Decoder) &\makecell[l]{Three RealNVP modules \\
                                            Input(Onput) dimensions: 4 \\ 
                                            Macro-state dimensions: 2} \\ \cline{2-3}
                         &Dynamics Learner &\makecell[l]{Input(Onput) dimensions: 2 \\
                                            Hidden Layers: 2 \\
                                            Hidden Units: 64 \\
                                            Activation Function: LeakyReLu \\ 
                                            Total Epochs:30000 \\
                                            Batch Size:700 \\
                                            L: 1}\\ \cline{2-3}
                        & Inverse Dynamics Learner &\makecell[l]{$\lambda$: 0.33 \\
                                                    Input(Onput) dimensions: 2 \\
                                                    Hidden Layers: 2 \\
                                                    Hidden Units: 64 \\
                                                    Activation Function: LeakyReLu\\
                                                    Stage II Epochs:27000}\\ \hline
    \centering\multirow{3}{*}{Boids} & Encoder(Decoder) &\makecell[l]{Three RealNVP modules\\ 
                                            Input(Onput) dimensions: 64\\ 
                                            Macro-state dimensions: 32,16,8  }\\ \cline{2-3}
                         &Dynamics Learner &\makecell[l]{Input(Onput) dimensions: 64,32,16,8  \\
                                            Hidden Layers: 2 \\
                                            Hidden Units: 32 \\
                                            Activation Function: LeakyReLu \\
                                            Total Epochs:800,000\\
                                            Batch Size:4\\
                                            L: 100} \\ \cline{2-3}
                        & Inverse Dynamics Learner &\makecell[l]{$\lambda$: 1 \\
                                                    Input(Onput) dimensions: 8 \\
                                                    Hidden Layers: 2 \\
                                                    Hidden Units: 32 \\
                                                    Activation Function: LeakyReLu\\
                                                    Stage II Epochs:400,000}\\ \hline
    \centering\multirow{3}{*}{Game of life} & Encoder(Decoder) &\makecell[l]{Three RealNVP modules\\ 
                                            Input(Onput)  dimensions: $18\times18\times2$\\ 
                                            Macro-state dimensions: $6\times6\times1$ }\\ \cline{2-3}
                         &Dynamics Learner &\makecell[l]{Input(Onput) dimensions: 36 \\
                                            Hidden Layers: 2 \\
                                            Hidden Units: 64 \\
                                            Activation Function: LeakyReLu \\ 
                                            Total Epochs: 300,000\\
                                            Batch Size: 50 \\
                                            L: 100}\\ \cline{2-3}
                        & Inverse Dynamics Learner &\makecell[l]{$\lambda$: 1 \\
                                                    Input(Onput) dimensions: $18\times18\times2$ \\
                                                    Hidden Layers: 2 \\
                                                    Hidden Units: 64 \\
                                                    Activation Function: LeakyReLu\\
                                                    Stage II Epochs: 200,000}\\ \hline
    \centering\multirow{3}{*}{fMRI} & Encoder(Decoder) &\makecell[l]{Six RealNVP modules\\ 
                                            Input(Onput) dimensions: 100\\ 
                                            Macro-state dimensions: 52,27,14,7,3,1 }\\ \cline{2-3}
                         &Dynamics Learner &\makecell[l]{Input(Onput) dimensions: 100,52,27,14,7,3,1 \\
                                            Hidden Layers: 5 \\
                                            Hidden Units: 256 \\
                                            Activation Function: LeakyReLu and a final tanh \\
                                            Total Epochs: 60000\\
                                            Batch Size: 100\\
                                            L: 1} \\ \cline{2-3}
                        & Inverse Dynamics Learner &\makecell[l]{$\lambda$: 1 \\
                                                    Input(Onput) dimensions: 1 \\
                                                    Hidden Layers: 5 \\
                                                    Hidden Units: 256 \\
                                                    Activation Function: LeakyReLu
                                                    Stage II Epochs: 10000}\\ \hline

  \end{tabular}}
  \label{table:parameters}
\end{table}
\clearpage
\bibliographystyle{IEEEtran}
\bibliography{main} 

\begin{thebibliography}{10}
\providecommand{\url}[1]{#1}
\csname url@samestyle\endcsname
\providecommand{\newblock}{\relax}
\providecommand{\bibinfo}[2]{#2}
\providecommand{\BIBentrySTDinterwordspacing}{\spaceskip=0pt\relax}
\providecommand{\BIBentryALTinterwordstretchfactor}{4}
\providecommand{\BIBentryALTinterwordspacing}{\spaceskip=\fontdimen2\font plus
\BIBentryALTinterwordstretchfactor\fontdimen3\font minus
  \fontdimen4\font\relax}
\providecommand{\BIBforeignlanguage}[2]{{%
\expandafter\ifx\csname l@#1\endcsname\relax
\typeout{** WARNING: IEEEtran.bst: No hyphenation pattern has been}%
\typeout{** loaded for the language `#1'. Using the pattern for}%
\typeout{** the default language instead.}%
\else
\language=\csname l@#1\endcsname
\fi
#2}}
\providecommand{\BIBdecl}{\relax}
\BIBdecl

\bibitem{sayama2015introduction}
H.~Sayama, \emph{Introduction to the modeling and analysis of complex
  systems}.\hskip 1em plus 0.5em minus 0.4em\relax Open SUNY Textbooks, 2015.

\bibitem{odell2002agents}
J.~Odell, ``Agents and complex systems,'' \emph{Journal of Object Technology},
  vol.~1, no.~2, pp. 35--45, 2002.

\bibitem{Wang_Lai_Grebogi_2016}
\BIBentryALTinterwordspacing
W.-X. Wang, Y.-C. Lai, and C.~Grebogi, ``\BIBforeignlanguage{en-US}{Data based
  identification and prediction of nonlinear and complex dynamical systems},''
  \emph{\BIBforeignlanguage{en-US}{Physics Reports}}, p. 1–76, Jul 2016.
  [Online]. Available: \url{http://dx.doi.org/10.1016/j.physrep.2016.06.004}
\BIBentrySTDinterwordspacing

\bibitem{kipf2018neural}
T.~Kipf, E.~Fetaya, K.-C. Wang, M.~Welling, and R.~Zemel, ``Neural relational
  inference for interacting systems,'' in \emph{International conference on
  machine learning}.\hskip 1em plus 0.5em minus 0.4em\relax PMLR, 2018, pp.
  2688--2697.

\bibitem{Casadiego_Nitzan_Hallerberg_Timme_2017}
\BIBentryALTinterwordspacing
J.~Casadiego, M.~Nitzan, S.~Hallerberg, and M.~Timme,
  ``\BIBforeignlanguage{en-US}{Model-free inference of direct network
  interactions from nonlinear collective dynamics},''
  \emph{\BIBforeignlanguage{en-US}{Nature Communications}}, vol.~8, no.~1, Dec
  2017. [Online]. Available: \url{http://dx.doi.org/10.1038/s41467-017-02288-4}
\BIBentrySTDinterwordspacing

\bibitem{zhang2022universal}
Y.~Zhang, Y.~Guo, Z.~Zhang, M.~Chen, S.~Wang, and J.~Zhang, ``Universal
  framework for reconstructing complex networks and node dynamics from discrete
  or continuous dynamics data,'' \emph{Physical Review E}, vol. 106, no.~3, p.
  034315, 2022.

\bibitem{holland2000emergence}
J.~H. Holland, \emph{Emergence: From chaos to order}.\hskip 1em plus 0.5em
  minus 0.4em\relax OUP Oxford, 2000.

\bibitem{ben2023turing}
S.~Ben~Tahar, J.~J. Mu{\~n}oz, S.~J. Shefelbine, and E.~Comellas, ``Turing
  pattern prediction in three-dimensional domains: the role of initial
  conditions and growth,'' \emph{bioRxiv}, pp. 2023--03, 2023.

\bibitem{Matthews_Strogatz_2002}
\BIBentryALTinterwordspacing
P.~C. Matthews and S.~H. Strogatz, ``\BIBforeignlanguage{en-US}{Phase diagram
  for the collective behavior of limit-cycle oscillators.}''
  \emph{\BIBforeignlanguage{en-US}{Physical Review Letters}}, vol.~65, no.~14,
  p. 1701–1704, Jul 2002. [Online]. Available:
  \url{http://dx.doi.org/10.1103/physrevlett.65.1701}
\BIBentrySTDinterwordspacing

\bibitem{du2023emergent}
Y.~Du, Z.~He, Q.~Gao, H.~Zhang, C.~Zeng, D.~Mao, and J.~Zhao, ``Emergent
  phenomena of vector solitons induced by the linear coupling,'' \emph{Laser \&
  Photonics Reviews}, p. 2300076, 2023.

\bibitem{Rosas_Mediano_Jensen_Seth_Barrett_Carhart-Harris_Bor_2020}
\BIBentryALTinterwordspacing
F.~E. Rosas, P.~A.~M. Mediano, H.~J. Jensen, A.~K. Seth, A.~B. Barrett, R.~L.
  Carhart-Harris, and D.~Bor, ``\BIBforeignlanguage{en-US}{Reconciling
  emergences: An information-theoretic approach to identify causal emergence in
  multivariate data},'' \emph{\BIBforeignlanguage{en-US}{PLOS Computational
  Biology}}, vol.~16, no.~12, p. e1008289, Dec 2020. [Online]. Available:
  \url{http://dx.doi.org/10.1371/journal.pcbi.1008289}
\BIBentrySTDinterwordspacing

\bibitem{zhang2022neural}
J.~Zhang and K.~Liu, ``Neural information squeezer for causal emergence,''
  \emph{Entropy}, vol.~25, no.~1, p.~26, 2022.

\bibitem{O’toole_Nallur_Clarke_2017}
\BIBentryALTinterwordspacing
E.~O’toole, V.~Nallur, and S.~Clarke,
  ``\BIBforeignlanguage{en-US}{Decentralised detection of emergence in complex
  adaptive systems},'' \emph{\BIBforeignlanguage{en-US}{ACM Transactions on
  Autonomous and Adaptive Systems}}, vol.~12, no.~1, p. 1–31, Mar 2017.
  [Online]. Available: \url{http://dx.doi.org/10.1145/3019597}
\BIBentrySTDinterwordspacing

\bibitem{kemeth2022learning}
F.~P. Kemeth, T.~Bertalan, T.~Thiem, F.~Dietrich, S.~J. Moon, C.~R. Laing, and
  I.~G. Kevrekidis, ``Learning emergent partial differential equations in a
  learned emergent space,'' \emph{Nature communications}, vol.~13, no.~1, p.
  3318, 2022.

\bibitem{Keijzer_2008}
F.~Keijzer, ``\BIBforeignlanguage{en-US}{Artificial life xi: Proceedings of the
  eleventh international conference on the simulation and synthesis of living
  systems},'' \emph{\BIBforeignlanguage{en-US}{MIT Press eBooks}}, Jan 2008.

\bibitem{Shalizi_Moore_2003}
C.~Shalizi and C.~Moore, ``\BIBforeignlanguage{en-US}{What is a macrostate?
  subjective observations and objective dynamics},''
  \emph{\BIBforeignlanguage{en-US}{arXiv: Statistical Mechanics}}, Mar 2003.

\bibitem{seth2008measuring}
A.~K. Seth, ``Measuring emergence via nonlinear granger causality.'' in
  \emph{alife}, vol. 2008, 2008, pp. 545--552.

\bibitem{haugen2023detecting}
R.~A. Haugen, N.-O. Skeie, G.~Muller, and E.~Syverud, ``Detecting emergence in
  engineered systems: A literature review and synthesis approach,''
  \emph{Systems Engineering}, 2023.

\bibitem{fisch2010quantitative}
D.~Fisch, M.~J{\"a}nicke, B.~Sick, and C.~M{\"u}ller-Schloer, ``Quantitative
  emergence--a refined approach based on divergence measures,'' in \emph{2010
  Fourth IEEE International Conference on Self-Adaptive and Self-Organizing
  Systems}.\hskip 1em plus 0.5em minus 0.4em\relax IEEE Computer Society, 2010,
  pp. 94--103.

\bibitem{mnif2011quantitative}
M.~Mnif and C.~M{\"u}ller-Schloer, ``Quantitative emergence,'' \emph{Organic
  Computing—A Paradigm Shift for Complex Systems}, pp. 39--52, 2011.

\bibitem{fisch2011divergence}
D.~Fisch, M.~J{\"a}nicke, C.~M{\"u}ller-Schloer, and B.~Sick, ``Divergence
  measures as a generalised approach to quantitative emergence,'' \emph{Organic
  Computing—A Paradigm Shift for Complex Systems}, pp. 53--66, 2011.

\bibitem{osmundson2008kr14}
J.~S. Osmundson, T.~V. Huynh, and G.~O. Langford, ``Kr14 emergent behavior in
  systems of systems,'' in \emph{INCOSE International Symposium}, vol.~18,
  no.~1.\hskip 1em plus 0.5em minus 0.4em\relax Wiley Online Library, 2008, pp.
  1557--1568.

\bibitem{raman2022framework}
R.~Raman and A.~Murugesan, ``Framework for complex sos emergent behavior
  evolution using deep reinforcement learning,'' in \emph{INCOSE International
  Symposium}, vol.~32, no.~1.\hskip 1em plus 0.5em minus 0.4em\relax Wiley
  Online Library, 2022, pp. 809--823.

\bibitem{teo2013formalization}
Y.~M. Teo, B.~L. Luong, and C.~Szabo, ``Formalization of emergence in
  multi-agent systems,'' in \emph{Proceedings of the 1st ACM SIGSIM Conference
  on Principles of Advanced Discrete Simulation}, 2013, pp. 231--240.

\bibitem{hoel2013quantifying}
E.~P. Hoel, L.~Albantakis, and G.~Tononi, ``Quantifying causal emergence shows
  that macro can beat micro,'' \emph{Proceedings of the National Academy of
  Sciences}, vol. 110, no.~49, pp. 19\,790--19\,795, 2013.

\bibitem{hoel2017map}
E.~P. Hoel, ``When the map is better than the territory,'' \emph{Entropy},
  vol.~19, no.~5, p. 188, 2017.

\bibitem{Fromm_2005}
J.~Fromm, ``\BIBforeignlanguage{en-US}{Types and forms of emergence},''
  \emph{\BIBforeignlanguage{en-US}{arXiv: Adaptation and Self-Organizing
  Systems}}, Jun 2005.

\bibitem{tononi2003measuring}
G.~Tononi and O.~Sporns, ``Measuring information integration,'' \emph{BMC
  neuroscience}, vol.~4, pp. 1--20, 2003.

\bibitem{comolatti2022causal}
R.~Comolatti and E.~Hoel, ``Causal emergence is widespread across measures of
  causation,'' \emph{arXiv preprint arXiv:2202.01854}, 2022.

\bibitem{Griebenow_Klein_Hoel_2019}
R.~Griebenow, B.~Klein, and E.~Hoel, ``\BIBforeignlanguage{en-US}{Finding the
  right scale of a network: Efficient identification of causal emergence
  through spectral clustering},'' \emph{\BIBforeignlanguage{en-US}{arXiv:
  Social and Information Networks}}, Aug 2019.

\bibitem{Hoel_Levin_2020}
\BIBentryALTinterwordspacing
E.~Hoel and M.~Levin, ``\BIBforeignlanguage{en-US}{Emergence of informative
  higher scales in biological systems: a computational toolkit for optimal
  prediction and control},'' \emph{\BIBforeignlanguage{en-US}{Communicative
  \&amp; Integrative Biology}}, vol.~13, no.~1, p. 108–118, Jan 2020.
  [Online]. Available: \url{http://dx.doi.org/10.1080/19420889.2020.1802914}
\BIBentrySTDinterwordspacing

\bibitem{Marrow_Michaud_Hoel_2020}
\BIBentryALTinterwordspacing
S.~Marrow, E.~J. Michaud, and E.~Hoel, ``\BIBforeignlanguage{en-US}{Examining
  the causal structures of deep neural networks using information theory},''
  \emph{\BIBforeignlanguage{en-US}{Entropy}}, p. 1429, Dec 2020. [Online].
  Available: \url{http://dx.doi.org/10.3390/e22121429}
\BIBentrySTDinterwordspacing

\bibitem{klein2020emergence}
B.~Klein and E.~Hoel, ``The emergence of informative higher scales in complex
  networks,'' \emph{Complexity}, vol. 2020, pp. 1--12, 2020.

\bibitem{williams2010nonnegative}
P.~L. Williams and R.~D. Beer, ``Nonnegative decomposition of multivariate
  information,'' \emph{arXiv preprint arXiv:1004.2515}, 2010.

\bibitem{vlachas2022multiscale}
P.~R. Vlachas, G.~Arampatzis, C.~Uhler, and P.~Koumoutsakos, ``Multiscale
  simulations of complex systems by learning their effective dynamics,''
  \emph{Nature Machine Intelligence}, vol.~4, no.~4, pp. 359--366, 2022.

\bibitem{floryan2022data}
D.~Floryan and M.~D. Graham, ``Data-driven discovery of intrinsic dynamics,''
  \emph{Nature Machine Intelligence}, vol.~4, no.~12, pp. 1113--1120, 2022.

\bibitem{Cai_Ji_2020}
\BIBentryALTinterwordspacing
L.~Cai and S.~Ji, ``\BIBforeignlanguage{en-US}{A multi-scale approach for graph
  link prediction},'' \emph{\BIBforeignlanguage{en-US}{Proceedings of the AAAI
  Conference on Artificial Intelligence}}, p. 3308–3315, Jun 2020. [Online].
  Available: \url{http://dx.doi.org/10.1609/aaai.v34i04.5731}
\BIBentrySTDinterwordspacing

\bibitem{Chen_Li_Yang_Li_Liu_2022}
\BIBentryALTinterwordspacing
Z.~Chen, S.~Li, B.~Yang, Q.~Li, and H.~Liu,
  ``\BIBforeignlanguage{en-US}{Multi-scale spatial temporal graph convolutional
  network for skeleton-based action recognition},''
  \emph{\BIBforeignlanguage{en-US}{Proceedings of the AAAI Conference on
  Artificial Intelligence}}, p. 1113–1122, Sep 2022. [Online]. Available:
  \url{http://dx.doi.org/10.1609/aaai.v35i2.16197}
\BIBentrySTDinterwordspacing

\bibitem{Goldberg_2019}
\BIBentryALTinterwordspacing
L.~R. Goldberg, ``\BIBforeignlanguage{en-US}{The book of why: The new science
  of cause and effect},'' \emph{\BIBforeignlanguage{en-US}{Notices of the
  American Mathematical Society}}, p.~1, Aug 2019. [Online]. Available:
  \url{http://dx.doi.org/10.1090/noti1912}
\BIBentrySTDinterwordspacing

\bibitem{Shen_Liu_He_Zhang_Xu_Yu_Cui_2021}
Z.~Shen, J.~Liu, Y.~He, X.~Zhang, R.~Xu, H.~Yu, and P.~Cui,
  ``\BIBforeignlanguage{en-US}{Towards out-of-distribution generalization: A
  survey},'' \emph{\BIBforeignlanguage{en-US}{Cornell University -
  arXiv,Cornell University - arXiv}}, Aug 2021.

\bibitem{scholkopf2022statistical}
B.~Sch{\"o}lkopf and J.~von K{\"u}gelgen, ``From statistical to causal
  learning,'' \emph{arXiv preprint arXiv:2204.00607}, 2022.

\bibitem{Janzing_Peters_Sgouritsa_Zhang_Mooij_lkopf_2012}
D.~Janzing, J.~Peters, E.~Sgouritsa, K.~Zhang, J.~Mooij, and B.~lkopf,
  ``\BIBforeignlanguage{en-US}{On causal and anticausal learning},''
  \emph{\BIBforeignlanguage{en-US}{International Conference on Machine
  Learning}}, Jun 2012.

\bibitem{peters2016causal}
J.~Peters, P.~B{\"u}hlmann, and N.~Meinshausen, ``Causal inference by using
  invariant prediction: identification and confidence intervals,''
  \emph{Journal of the Royal Statistical Society Series B: Statistical
  Methodology}, vol.~78, no.~5, pp. 947--1012, 2016.

\bibitem{zhang2020invariant}
A.~Zhang, C.~Lyle, S.~Sodhani, A.~Filos, M.~Kwiatkowska, J.~Pineau, Y.~Gal, and
  D.~Precup, ``Invariant causal prediction for block mdps,'' in
  \emph{International Conference on Machine Learning}.\hskip 1em plus 0.5em
  minus 0.4em\relax PMLR, 2020, pp. 11\,214--11\,224.

\bibitem{kermack1927contribution}
W.~O. Kermack and A.~G. McKendrick, ``A contribution to the mathematical theory
  of epidemics,'' \emph{Proceedings of the royal society of london. Series A,
  Containing papers of a mathematical and physical character}, vol. 115, no.
  772, pp. 700--721, 1927.

\bibitem{attanasi2014information}
A.~Attanasi, A.~Cavagna, L.~Del~Castello, I.~Giardina, T.~S. Grigera,
  A.~Jeli{\'c}, S.~Melillo, L.~Parisi, O.~Pohl, E.~Shen \emph{et~al.},
  ``Information transfer and behavioural inertia in starling flocks,''
  \emph{Nature physics}, vol.~10, no.~9, pp. 691--696, 2014.

\bibitem{gardner1970fantastic}
M.~Gardner, ``The fantastic combinations of jhon conway's new solitaire
  game'life,'' \emph{Sc. Am.}, vol. 223, pp. 20--123, 1970.

\bibitem{eberhardt2022causal}
F.~Eberhardt and L.~L. Lee, ``Causal emergence: When distortions in a map
  obscure the territory,'' \emph{Philosophies}, vol.~7, no.~2, p.~30, 2022.

\bibitem{rosenblatt1956remarks}
M.~Rosenblatt, ``Remarks on some nonparametric estimates of a density
  function,'' \emph{The annals of mathematical statistics}, pp. 832--837, 1956.

\bibitem{Reynolds1987}
C.~W. Reynolds, ``Flocks, herds and schools: A distributed behavioral model,''
  in \emph{Proceedings of the 14th annual conference on Computer graphics and
  interactive techniques}, 1987, pp. 25--34.

\bibitem{Reynolds1999}
C.~W. Reynolds \emph{et~al.}, ``Steering behaviors for autonomous characters,''
  in \emph{Game developers conference}, vol. 1999.\hskip 1em plus 0.5em minus
  0.4em\relax Citeseer, 1999, pp. 763--782.

\bibitem{Sundararajan2017}
\BIBentryALTinterwordspacing
M.~Sundararajan, A.~Taly, and Q.~Yan, ``Axiomatic attribution for deep
  networks,'' \emph{CoRR}, vol. abs/1703.01365, 2017. [Online]. Available:
  \url{http://arxiv.org/abs/1703.01365}
\BIBentrySTDinterwordspacing

\bibitem{Snoek2021}
L.~Snoek, M.~van~der Miesen, T.~Beemsterboer, A.~Leij, A.~Eigenhuis, and
  H.~Scholte, ``The amsterdam open mri collection, a set of multimodal mri
  datasets for individual difference analyses,'' \emph{Scientific Data},
  vol.~8, 03 2021.

\bibitem{Schaefer2017}
\BIBentryALTinterwordspacing
A.~Schaefer, R.~Kong, E.~M. Gordon, T.~O. Laumann, X.-N. Zuo, A.~J. Holmes,
  S.~B. Eickhoff, and B.~T.~T. Yeo, ``{Local-Global Parcellation of the Human
  Cerebral Cortex from Intrinsic Functional Connectivity MRI},'' \emph{Cerebral
  Cortex}, vol.~28, no.~9, pp. 3095--3114, 07 2017. [Online]. Available:
  \url{https://doi.org/10.1093/cercor/bhx179}
\BIBentrySTDinterwordspacing

\bibitem{synergestic_core}
A.~Luppi, P.~Mediano, F.~Rosas, N.~Holland, T.~Fryer, J.~O'Brien, J.~Rowe,
  D.~Menon, D.~Bor, and E.~Stamatakis, ``A synergistic core for human brain
  evolution and cognition,'' \emph{Nature Neuroscience}, vol.~25, pp. 1--12, 06
  2022.

\bibitem{Dinh_Sohl-Dickstein_Bengio_2016}
L.~Dinh, J.~Sohl-Dickstein, and S.~Bengio, ``\BIBforeignlanguage{en-US}{Density
  estimation using real nvp},'' \emph{\BIBforeignlanguage{en-US}{International
  Conference on Learning Representations,International Conference on Learning
  Representations}}, May 2016.

\bibitem{lu2020enhancing}
C.~Lu and J.~Peltonen, ``Enhancing nearest neighbor based entropy estimator for
  high dimensional distributions via bootstrapping local ellipsoid,'' in
  \emph{Proceedings of the AAAI Conference on Artificial Intelligence},
  vol.~34, no.~04, 2020, pp. 5013--5020.

\bibitem{kingma2013auto}
D.~P. Kingma and M.~Welling, ``Auto-encoding variational bayes,'' \emph{arXiv
  preprint arXiv:1312.6114}, 2013.

\bibitem{Sundararajan_Taly_Yan_2017}
M.~Sundararajan, A.~Taly, and Q.~Yan, ``\BIBforeignlanguage{en-US}{Axiomatic
  attribution for deep networks},'' \emph{\BIBforeignlanguage{en-US}{Cornell
  University - arXiv,Cornell University - arXiv}}, Mar 2017.

\bibitem{Esteban2019}
O.~Esteban, C.~Markiewicz, R.~Blair, C.~Moodie, A.~Isik, A.~Erramuzpe, J.~Kent,
  M.~Goncalves, E.~DuPre, M.~Snyder, H.~Oya, S.~Ghosh, J.~Wright, J.~Durnez,
  R.~Poldrack, and K.~Gorgolewski, ``fmriprep: a robust preprocessing pipeline
  for functional mri,'' \emph{Nature Methods}, vol.~16, 01 2019.

\bibitem{Esteban2020}
O.~Esteban, R.~Ciric, K.~Finc, R.~Blair, C.~Markiewicz, C.~Moodie, J.~Kent,
  M.~Goncalves, E.~DuPre, D.~Gomez, Z.~Ye, T.~Salo, R.~Valabregue, I.~Amlien,
  F.~Liem, N.~Jacoby, H.~Stojić, M.~Cieslak, S.~Urchs, and K.~Gorgolewski,
  ``Analysis of task-based functional mri data preprocessed with fmriprep,''
  \emph{Nature Protocols}, vol.~15, 06 2020.

\bibitem{scikit-learn}
F.~Pedregosa, G.~Varoquaux, A.~Gramfort, V.~Michel, B.~Thirion, O.~Grisel,
  M.~Blondel, P.~Prettenhofer, R.~Weiss, V.~Dubourg, J.~Vanderplas, A.~Passos,
  D.~Cournapeau, M.~Brucher, M.~Perrot, and E.~Duchesnay, ``Scikit-learn:
  Machine learning in {P}ython,'' \emph{Journal of Machine Learning Research},
  vol.~12, pp. 2825--2830, 2011.

\bibitem{Abraham2014}
A.~Abraham, F.~Pedregosa, M.~Eickenberg, P.~Gervais, A.~Mueller, J.~Kossaifi,
  A.~Gramfort, B.~Thirion, and G.~Varoquaux, ``Machine learning for
  neuroimaging with scikit-learn,'' \emph{Frontiers in neuroinformatics},
  vol.~8, p.~14, 02 2014.

\bibitem{Fang_Zhu_2020}
S.~Fang and Q.~Zhu, ``\BIBforeignlanguage{en-US}{Independent gaussian
  distributions minimize the kullback-leibler (kl) divergence from independent
  gaussian distributions.}'' \emph{\BIBforeignlanguage{en-US}{Cornell
  University - arXiv}}, Nov 2020.

\bibitem{Blundell_Cornebise_Kavukcuoglu_Wierstra_2015}
C.~Blundell, J.~Cornebise, K.~Kavukcuoglu, and D.~Wierstra,
  ``\BIBforeignlanguage{en-US}{Weight uncertainty in neural networks},''
  \emph{\BIBforeignlanguage{en-US}{arXiv: Machine Learning}}, May 2015.

\bibitem{hornik1991universal}
K.~Hornik, ``Approximation capabilities of multilayer feedforward networks,''
  \emph{Neural Networks}, vol.~4, no.~2, pp. 251--257, 1991.

\bibitem{Haykin1998neural}
S.~Haykin, \emph{Neural Networks: A Comprehensive Foundation, Volume 2}.\hskip
  1em plus 0.5em minus 0.4em\relax Prentice Hall, 1998.

\bibitem{Teshima2020inn}
T.~Teshima, I.~Ishikawa, K.~Tojo, K.~Oono, M.~Ikeda, and M.~Sugiyama,
  ``Coupling-based invertible neural networks are universal diffeomorphism
  approximators,'' in \emph{Advances in Neural Information Processing Systems},
  vol.~33, 2020, p. 3362–3373.

\bibitem{Teshima2020ode}
T.~Teshima, K.~Tojo, M.~Ikeda, I.~Ishikawa, and K.~Oono, ``Universal
  approximation property of neural ordinary differential equations,''
  \emph{arXiv preprint arXiv:2012.02414}, 2017.

\end{thebibliography}
\end{document}